%2multibyte Version: 5.50.0.2960 CodePage: 65001
\documentclass{article}%
\usepackage{amsmath}
\usepackage{amsfonts}
\usepackage{amssymb}
\usepackage{graphicx}
\usepackage{dsfont}%
\setcounter{MaxMatrixCols}{30}
%TCIDATA{OutputFilter=latex2.dll}
%TCIDATA{Version=5.50.0.2960}
%TCIDATA{Codepage=65001}
%TCIDATA{CSTFile=40 LaTeX article.cst}
%TCIDATA{Created=Thursday, December 27, 2012 14:59:45}
%TCIDATA{LastRevised=Wednesday, June 12, 2013 14:34:38}
%TCIDATA{<META NAME="GraphicsSave" CONTENT="32">}
%TCIDATA{<META NAME="SaveForMode" CONTENT="1">}
%TCIDATA{BibliographyScheme=Manual}
%TCIDATA{<META NAME="DocumentShell" CONTENT="Standard LaTeX\Blank - Standard LaTeX Article">}
%BeginMSIPreambleData
\providecommand{\U}[1]{\protect\rule{.1in}{.1in}}
%EndMSIPreambleData
\setlength{\oddsidemargin}{0.5cm}
\setlength{\evensidemargin}{0cm}
\setlength{\textwidth}{6.0in}

\newtheorem{theorem}{Theorem}

\newtheorem{definition}[theorem]{Definition}

\newtheorem{lemma}[theorem]{Lemma}

\newtheorem{proposition}[theorem]{Proposition}

\newenvironment{proof}[1][Proof]{\noindent\textbf{#1.} }{\ \rule{0.5em}{0.5em}}
\begin{document}

\title{A new class of problems in the calculus of variations}
\author{Ivar Ekeland$^{1}$\thanks{E-mail: ekeland@math.ubc.ca},\quad Yiming Long$^{2}%
$\thanks{Partially supported by NNSF, MCME, RFDP, LPMC of MOE of China, and
Nankai University. E-mail: longym@nankai.edu.cn},\quad Qinglong Zhou$^{2}%
$\thanks{Partially supported by the Chern Institute of Mathematics, Nankai
University, CEREMADE of Universit\'{e} Paris Dauphine, and the IHES. E-mail:
zhou.qinglong.1985@gmail.com}\\
\\$^{1}$ CEREMADE and Institut de Finance\\Universit\'{e} de Paris-IX, Dauphine, Paris, France\\$^{2}$ Chern Institute of Mathematics and LPMC\\Nankai University, Tianjin 300071, China\\}
\maketitle

%\date{}

\begin{center}
{\large Dedicated to Professor Alain Chenciner on his 70th birthday}
\end{center}

\section{ Introduction}

In economic theory, and in optimal control, it has been customary to discount
future gains at a constant rate $\delta>0$. If an individual with utility
function $u\left(  c\right)  $ has the choice between several streams of
consumption $c\left(  t\right)  $, $0\leq t$, he or she will choose the one
which maximises the present value, given by:%
\begin{equation}
\int_{0}^{\infty}u\left(  c\left(  t\right)  \right)  e^{-\delta t}dt
\label{a1}%
\end{equation}

That future gains should be discounted is well grounded in fact. On the one
hand, humans prefer to enjoy goods sooner than later (and to suffer bads later
than sooner), as every child-rearing parent knows. On the other hand, it is
also a reflection of our own mortality: 10 years from now, I may simply no
longer be around to enjoy whatever I have been promised. These are two good
reasons why people are willing to pay a little bit extra to hasten the
delivery date, or will require compensation for postponement, which is the
essence of discounting.

On the other hand, there is no reason why the discount rate should be
constant, i.e. why the discount factor should be an exponential $e^{-\delta
t}$. The practice probably arises from the compound interest formula
$\lim_{\varepsilon\rightarrow0}\left(  1-\varepsilon\delta\right)
^{t/\varepsilon}=e^{-\delta t}$, when a constant interest rate $\delta$ is
assumed, but even in finance, interest rates vary with the horizon: long-term
rates can be widely different from short-term ones. As for economics, there is
by now a huge amount of evidence that individuals use higher discount rates
for the near future than for the long-term (see \cite{Frederick} for a review
up to 2002).\ There is also an aggregation problem: in a society where
individuals use constant (but different) discount rates, the collective
discount rate may be non-constant (see \cite{EkL}). So the present value
formula (\ref{a1}) should be replaced by the more general one:%
\begin{equation}
\int_{0}^{\infty}u\left(  c\left(  t\right)  \right)  h\left(  t\right)  dt
\label{a2}%
\end{equation}
where $h\ $\ is a decreasing function, with $h\left(  0\right)  =1$.

But then a new problem arises, which is now well recognized in economic
theory, but to our knowledge has not yet received the attention it deserves in
control theory. It is the problem of \emph{ time-inconsistency}, which runs as
follows. Suppose the decision-maker has the choice between two streams of
consumption $c_{1}\left(  t\right)  $ and $c_{2}\left(  t\right)  $, starting
at time $T>0$. At time $t=0$, he or she finds $c_{1}\left(  t\right)  $ yields
the highest present value:%

\begin{equation}
\int_{T}^{\infty}u\left(  c_{1}\left(  t\right)  \right)  h\left(  t\right)
dt>\int_{T}^{\infty}u\left(  c_{2}\left(  t\right)  \right)  h\left(
t\right)  dt. \label{a3}%
\end{equation}
He or she then chooses $c_{1}\left(  t\right)  $. When time $T$ is reached,
the present values are now:%
\begin{equation}
\int_{T}^{\infty}u\left(  c_{1}\left(  t\right)  \right)  h\left(  t-T\right)
dt\text{ \ and}\ \ \int_{T}^{\infty}u\left(  c_{2}\left(  t\right)  \right)
h\left(  t-T\right)  dt \label{a4}%
\end{equation}

If $h\left(  t\right)  =e^{-\delta t}$, then the ordering found at time $t=0$
will persist at time $t=T$. Indeed:%
\[
\int_{T}^{\infty}u\left(  c\left(  t\right)  \right)  e^{-\delta\left(
t-T\right)  }dt=e^{\delta T}\int_{T}^{\infty}u\left(  c\left(  t\right)
\right)  e^{-\delta t}dt
\]
so that the two terms in (\ref{a4}) are proportional to the two terms in
(\ref{a3}). However, this is a peculiarity of the exponential function, and it
is not to be expected with more general discount rates. The decision-maker
then faces a basic rationality problem: what should he or she do ? To be more
specific, assume the state $k\left(  t\right)  $ is related to the control
$c\left(  t\right)  $ by the dynamics:%
\begin{align}
\frac{dk}{dt}  &  =f\left(  k\right)  -c\left(  t\right)  ,\ \ k\left(
0\right)  =k_{0}\label{a7}\\
c\left(  t\right)   &  \geq0,\ \ k\left(  t\right)  \geq0 \label{a8}%
\end{align}
and the decision-maker is interested in maximising $\left(  \ref{a2}\right)
$. How should he or she behave ?

In the exponential case, when $h\left(  t\right)  =e^{-\delta t}$, the answer
is to pick the optimal solution: if it is optimal at time $t=0$, it will still
be optimal at all times $T>0$ (this, by the way, is the content of the dynamic
programming principle). But in the non-exponential case, the notion of
optimality changes with time: each observer, from time $t=0$ on, has his or
her own optimal solution. No one agrees on what the optimal solution is, so
optimality no longer provides an answer to the decision-making process, and
one must look for other concepts to describe rational behaviour.

A clear requirement for rationality is that any strategy put forward be
implementable. Suppose a Markov strategy $c=\sigma\left(  k\right)  $ is put
forward at time $t=0$. If it is to be followed at all later times $t>0$, then
it must be the case that the decision-maker at that time finds no incentive to
deviate. More precisely, if he/she assumes that at all later times the
strategy (closed-loop feedback) $c=\sigma\left(  k\right)  $ will be applied,
then he/she should find it in his/her interest to apply $\sigma$ as well. In
other words, $\sigma$ should be a subgame-perfect Nash equilibrium of the
leader-follower game played by the successive decision-makers.\ This idea has
been introduced by Phelps (\cite{Phelps}, \cite{Phelps1}) in models with
discrete time (see \cite{Krusell} and \cite{Laibson} for further
developments), and adapted by Karp (\cite{Kar2},\ \cite{KaL1}), and by Ekeland
and Lazrak (\cite{EkL1}, \cite{EkL2}, \cite{EkL}, \cite{EKS}) to the case of
continuous time.

In this paper, we will follow the approach by Ekeland and Lazrak. It consists
of introducing a value function $V\left(  k\right)  $, which is very similar
to the value function in optimal control, and of showing that it satisfies a
functional-differential equation which is reminescent of the
Hamilton-Jacobi-Bellman (HJB)\ of optimal control. Conversely, any solution of
that equation with suitable boundary conditions will give us an equilibrium strategy.

In the work by Ekeland and Lazrak, this approach was applied to (\ref{a2}),
with $h\left(  t\right)  =\alpha\exp\left(  -r_{1}t\right)  +\left(
1-\alpha\right)  \exp\left(  -r_{2}t\right)  $, and it was showed that the
corresponding problem had a continuum of equilibrium strategies.\ In the
present paper, in view of applications to economics, and of the mathematical
interest, we aim to extend the analysis to the more general case:%
\begin{equation}
\left(  1-\alpha\right)  \int_{0}^{\infty}e^{-r_{1}t}u\left(  c\left(
t\right)  ,k\left(  t\right)  \right)  dt+\alpha\int_{0}^{\infty}e^{-r_{2}%
t}U\left(  c\left(  t\right)  ,k\left(  t\right)  \right)  dt \label{a5}%
\end{equation}

As a by-product of our analysis, we will treat the problem:%
\begin{equation}
\left(  1-\alpha\right)  \delta\int_{0}^{\infty}e^{-\delta t}u\left(  c\left(
t\right)  ,k\left(  t\right)  \right)  dt+\alpha\lim_{t\rightarrow\infty
}U\left(  k\left(  t\right)  ,c\left(  t\right)  \right)  \label{a6}%
\end{equation}
which was introduced by Chichilnisky (see \cite{Chi2}, \cite{Chi1}) to model
sustainable development. Note that, if $u\left(  c,k\right)  \geq0$ and:%
\[
\sup\left\{  U\left(  c,k\right)  \ |\ c>0,\ k>0,\ c=f\left(  k\right)
\right\}  =\infty
\]
then maximising (\ref{a6}) under the dynamics (\ref{a7}), (\ref{a8}) leads to
the value $+\infty$, so that optimisation is clearly not an answer to the
problem. Instead, we find equilibrium strategies. To our knowledge, this is an
entirely new result. We show that there is a continuum of such
strategies.\ More precisely, there is a continuum of points $k_{\infty}$ which
can be realized as the long-term level of capital by an equilibrium strategy.
This support, however, is one-sided, that is, $k_{\infty}$ can be reached only
from the initial level of capitals $k_{0}$ lying on its left (or from its
right). To our knowledge, this is the first time such strategies have been identified.

The structure of the paper is as follows. In the next section, we consider the
problem of maximising%
\begin{equation}
\int_{0}^{\infty}e^{-\delta t}u\left(  c\left(  t\right)  ,k\left(  t\right)
\right)  dt \label{a9}%
\end{equation}
under the dynamics (\ref{a7}), (\ref{a8}), and we show that it has a solution.
On the way, we introduce the corresponding HJB equation, and we show that it
has a $C^{2}$ solution. Next, we define equilibrium strategies. With each such
strategy we associate a value function $V\left(  k\right)  $, and we show that
it satisfies an integro-differential equation which generalizes the HJB
equation, and we prove a verification theorem: any solution of this equation
with suitable boundary conditions gives an equilibrium strategy. We show that
the trajectories satisfy an integro-differential equation which generalizes
the classical Euler-Lagrange equations, and we connect the Ekeland-Lazrak
approach with the Karp approach.

In section 4, we apply the theory to problem (\ref{a5}), and show that it has
a continuum of equilibrium strategies, thereby extending the results of
\cite{EkL}. It should be noted that the equations for $V\left(  k\right)  $
are given in implicit form, that is, they cannot be solved with respect to
$V^{\prime}\left(  k\right)  $, so that finding a $C^{2}$ solution requires
special techniques (first a blow-up, and then the central manifold theorem).

We then consider the criterion:%
\begin{equation}
\left(  1-\alpha\right)  \delta\int_{0}^{\infty}e^{-\delta t}u\left(  c\left(
t\right)  ,k\left(  t\right)  \right)  dt+\alpha r\int_{0}^{\infty}%
e^{-rt}U\left(  c\left(  t\right)  ,k\left(  t\right)  \right)  dt \label{a10}%
\end{equation}
which belongs to the class (\ref{a5}) and we let $r\rightarrow0$. In the
limit, we get equilibrium strategies for the Chichilnisky problem (\ref{a6}),
which we describe explicitly.

\section{The Ramsey problem\label{s1}}

This is the classical model for economic growth, originating with the seminal
paper of Ramsey \cite{Ram1} in 1928, and developed by Cass \cite{Cass},
Koopmans \cite{Koop} and many others (see \cite{Barro} for a modern
exposition). We are given a point $k_{0}>0$ and a concave continuous function
$f$ on $[0,\ \infty)$, which is $C^{\infty}$ on $]0,\ \infty\lbrack$ and
satisfies the Inada conditions:%
\begin{equation}
\label{Inada}\lim_{x\rightarrow0}f^{\prime}\left(  x\right)  =+\infty
,\ \ \lim_{x\rightarrow\infty}f^{\prime}\left(  x\right)  \leq0.
\end{equation}

\begin{definition}
A capital-consumption path $\left(  c\left(  t\right)  ,k\left(  t\right)
\right)  ,\ t\geq0$, is \emph{admissible} if:
\begin{align}
k\left(  t\right)   &  >0\text{ and }c\left(  t\right)  >0\text{ for all
}t,\label{23}\\
\frac{dk}{dt}  &  =f\left(  k\right)  -c,\ \ k\left(  0\right)  =k_{0},
\label{24}%
\end{align}

\end{definition}

The set of all admissible paths starting from $k_{0}$ (i.e, such that
$k\left(  0\right)  =k_{0}$) will be denoted by $\mathcal{A}\left(
k_{0}\right)  $.

We are given a number $\delta>0$ and another concave, increasing function $u$
on $]0,\ \infty)$, which is $C^{\infty}$ on the interior, with $u^{\prime
\prime}\left(  c\right)  >0$ everywhere. We introduce the following criterion
on $\mathcal{A}\left(  k_{0}\right)  $:
\begin{equation}
I\left(  c,k\right)  =\int_{0}^{\infty}u(c(t),k(t))e^{-\delta t}dt, \label{18}%
\end{equation}
and we consider the optimization problem:%
\begin{equation}
\sup\left\{  I\left(  c,k\right)  \ |\ \left(  c,k\right)  \in\mathcal{A}%
\left(  k_{0}\right)  \right\}  . \label{19}%
\end{equation}

The Euler-Lagrange equation is given by:%
\begin{equation}
u_{11}^{\prime\prime}\frac{dc}{dt}=\left(  \delta-f^{\prime}\left(  k\right)
\right)  u_{1}^{\prime}-u_{2}^{\prime}-\left(  f\left(  k\right)  -c\right)
u_{12}^{\prime\prime}. \label{21}%
\end{equation}

Equation (\ref{21}), together with equation (\ref{24}), constitute a system of
two first-order ODEs for the unknown functions $\left(  c\left(  t\right)
,k\left(  t\right)  \right)  $. In the particular case when $u=u\left(
c\right)  $ does not depend on $k$, the last equation simplifies to:%
\[
u_{11}^{\prime\prime}\frac{dc}{dt}=\left(  \delta-f^{\prime}\left(  k\right)
\right)  u_{1}^{\prime}%
\]
and the (\ref{21}), (\ref{24}) gives rise to a well-known phase diagram, with
a hyperbolic stationary point $\left(  c_{\infty},k_{\infty}\right)  $
characterized by $f^{\prime}\left(  k_{\infty}\right)  =\delta$ and $f\left(
k_{\infty}\right)  =c_{\infty}$. The optimal solution of the Ramsey problem in
that case then is the solution of (\ref{21}), (\ref{24}) which converges to
$\left(  c_{\infty},k_{\infty}\right)  $ (see \cite{Barro} for instance).

In the general case where $u\left(  c,k\right)  $ depends on $k$, the
situation is not as simple, and to our knowledge has not been investigated.
Stationary points $\left(  c_{\infty},k_{\infty}\right)  $ of the dynamics (if
any) are given by:%

\begin{align}
c_{\infty}-f\left(  k_{\infty}\right)   &  =0\label{50}\\
\left(  \delta-f^{\prime}\left(  k_{\infty}\right)  \right)  u_{1}^{\prime
}\left(  f\left(  k_{\infty}\right)  ,k_{\infty}\right)  -u_{2}^{\prime
}\left(  f\left(  k_{\infty}\right)  ,k_{\infty}\right)   &  =0. \label{51}%
\end{align}

To prove the existence of an optimal strategy, we do not use the Euler
equation. We use the Hamilton-Jacobi-Bellman (HJB) equation instead. Introduce
the optimal value as a function of the initial point:%
\[
V\left(  k_{0}\right)  :=\sup\left\{  I\left(  c,k\right)  \ |\ \left(
c,k\right)  \ \in\mathcal{A}\left(  k_{0}\right)  \right\}
\]

If there is an optimal solution $\left(  c\left(  t\right)  ,k\left(
t\right)  \right)  $, and it converges to $\left(  c_{\infty},k_{\infty
}\right)  $ when $t\rightarrow\infty$, then, substituting in (\ref{18}), we
must have:%
\begin{equation}
V\left(  k_{\infty}\right)  =\int_{0}^{\infty}e^{-\delta t}u\left(  c_{\infty
},k_{\infty}\right)  dt=\frac{1}{\delta}u\left(  c_{\infty},k_{\infty}\right)
=\frac{1}{\delta}u\left(  f\left(  k_{\infty}\right)  ,k_{\infty}\right)
\label{20a}%
\end{equation}

\begin{theorem}
If $V\left(  k\right)  $ is $C^{1}$, it satisfies the HJB\ equation,
namely:%
\begin{equation}
\delta V(k)=\max_{c}\{u(c,k)+(f(k)-c){V}^{\prime}(k)\}. \label{20}%
\end{equation}

Conversely, suppose the HJB\ equation has a $C^{2}$ solution satisfying
(\ref{20a}) for some $\left(  c_{\infty},k_{\infty}\right)  $, and define a
strategy $c=\sigma\left(  k\right)  $ by:%
\begin{equation}
u_{1}^{\prime}\left(  \sigma\left(  k\right)  ,k\right)  =V^{\prime}\left(
k\right)  \label{20b}%
\end{equation}

Suppose moreover that the solution of:%
\begin{equation}
\frac{dk}{dt}=f\left(  k\right)  -\sigma\left(  k\right)  ,\ \ k\left(
0\right)  =k_{0}, \label{20c}%
\end{equation}
converges to $k_{\infty}$ for all initial points $k_{0}$. Then $\sigma\left(
k\right)  $ is an optimal solution of the generalized Ramsey problem (\ref{19})
\end{theorem}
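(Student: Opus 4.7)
The plan is to prove the two implications separately: the forward direction via the dynamic programming principle, and the converse via a verification argument built around differentiating $e^{-\delta t}V(k(t))$ along admissible trajectories.

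For the forward direction, the first step is the Bellman identity
$$V(k_0) = \sup_{c(\cdot)}\left\{\int_0^h e^{-\delta t}u(c(t),k(t))\,dt + e^{-\delta h}V(k(h))\right\}$$
for every $h>0$, obtained by splitting the integral in (\ref{18}) at $t=h$ and maximizing in two stages (over controls on $[0,h]$ and over the tail on $[h,\infty)$, where the tail sup equals $V(k(h))$ by time-shift invariance of the discount). Using the $C^1$ assumption on $V$, I would then freeze the control to any admissible constant $c$ on $[0,h]$, rearrange as $0 \geq \int_0^h e^{-\delta t}u(c,k(t))\,dt + [e^{-\delta h}V(k(h)) - V(k_0)]$, divide by $h$, and send $h\to 0^+$ to obtain $\delta V(k_0) \geq u(c,k_0)+(f(k_0)-c)V'(k_0)$; taking the supremum over $c$ yields one inequality of (\ref{20}), and choosing near-maximizing controls at $t=0$ together with continuity of the integrand gives the other.

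For the converse, given a $C^2$ solution $V$ of (\ref{20}) and any admissible pair $(c(t),k(t))\in\mathcal{A}(k_0)$, I compute
$$\frac{d}{dt}\bigl[e^{-\delta t}V(k(t))\bigr] = e^{-\delta t}\bigl(-\delta V(k(t))+(f(k(t))-c(t))V'(k(t))\bigr).$$
The HJB equation supplies the pointwise bound $-\delta V(k)+(f(k)-c)V'(k) \leq -u(c,k)$, with equality exactly when $c=\sigma(k)$, because strict concavity of $c\mapsto u(c,k)-cV'(k)$ makes the first-order condition (\ref{20b}) characterize the unique maximizer. Integrating from $0$ to $T$ and rearranging gives
$$\int_0^T e^{-\delta t}u(c(t),k(t))\,dt \leq V(k_0) - e^{-\delta T}V(k(T)),$$
with equality throughout along the feedback trajectory generated by $\sigma$. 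For that particular trajectory, the hypothesis $k(T)\to k_\infty$ combined with (\ref{20a}) kills the transversality term and produces $I(\sigma,k)=V(k_0)$.

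The step I expect to be the main obstacle is verifying that $\liminf_{T\to\infty}e^{-\delta T}V(k(T))\geq 0$ for \emph{every} admissible trajectory, not only for the feedback one, since this is what turns the above integral inequality into $I(c,k)\leq V(k_0)$ and thereby establishes optimality of $\sigma$. The natural way in is to exploit the Inada conditions (\ref{Inada}): concavity of $f$ together with $\lim_{x\to\infty}f'(x)\leq 0$ forces some $\bar k$ with $f(\bar k)=0$, and a comparison argument on (\ref{24}) confines $k(t)$ to the compact interval $[0,\max\{k_0,\bar k\}]$; continuity of $V$ on this compact set then gives a uniform bound, whence $e^{-\delta T}V(k(T))\to 0$. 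Once this is in place the verification closes and $\sigma$ is an optimal feedback for the Ramsey problem (\ref{19}).
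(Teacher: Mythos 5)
The paper does not actually prove this theorem: it states that this ``is the so-called verification theorem, which is classical'' and refers to Bardi--Capuzzo-Dolcetta, adding only that $\sigma$ is well defined from (\ref{20b}) by the implicit function theorem and that everything is local, with $V$ and $\sigma$ defined only on a neighbourhood of $k_{\infty}$ to which $k_{0}$ is assumed to belong. Your argument is exactly the standard proof that the cited reference supplies --- the dynamic programming identity and a vanishing-horizon limit for necessity; differentiation of $e^{-\delta t}V(k(t))$, the pointwise inequality from the maximum in (\ref{20}) with equality at $c=\sigma(k)$ by strict concavity in $c$, and integration for sufficiency --- so you are filling in a proof the paper deliberately omits rather than diverging from it. Both halves of your outline are sound as schemes.

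The one step that does not go through as written is your treatment of the transversality term for arbitrary admissible trajectories. The Inada conditions (\ref{Inada}) only require $\lim_{x\rightarrow\infty}f^{\prime}(x)\leq 0$, which permits $f^{\prime}\rightarrow 0^{+}$ with $f$ strictly increasing and unbounded (e.g.\ $f(k)=k^{\theta}$ with $0<\theta<1$); in that case there is no $\bar{k}>0$ with $f(\bar{k})=0$, the comparison argument on (\ref{24}) gives no compact invariant interval, and trajectories can grow without bound. So concavity does not ``force some $\bar{k}$ with $f(\bar{k})=0$,'' and your uniform bound on $V(k(T))$ fails in general. The repair consistent with the paper's framing is its explicit locality convention: $V$ is defined only on a neighbourhood $\mathcal{N}$ of $k_{\infty}$, so the inequality $I(c,k)\leq V(k_{0})$ is asserted only against admissible trajectories remaining in $\mathcal{N}$, on which $V$ is bounded and $e^{-\delta T}V(k(T))\rightarrow 0$ trivially; the resulting optimality is local. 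If you want genuine global optimality you would need an additional growth or sign hypothesis (e.g.\ $u\geq 0$ together with finiteness of the value function, or a bound on $u(f(k),k)$ as $k\rightarrow\infty$) that neither the theorem statement nor your argument provides.
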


This is the so-called verification theorem, which is classical (see
\cite{Bardi}). We need $V$ to be $C^{2}$, so that $\sigma$ is defined. Since
$u_{1}^{\prime\prime}>0$, we can use the implicit function theorem on equation
(\ref{20b}) to define $\sigma$.\ If $V$ is $C^{2}$, then $\sigma$ is $C^{1}$,
and the initial-value problem (\ref{20c}) has a unique solution. Note also
that everything is local: the functions $V$ and $\sigma$ are defined in some
neighbourhood of $k_{\infty}$ only, and the initial value $k_{0}$ is assumed
to belong to that neighbourhood of $k_{\infty}$.

So, to prove the (local) existence of an optimal strategy $\sigma\left(
k\right)  $, we have to prove that the HJB equation has a $C^{2}$ solution $V$
with $V\left(  k_{\infty}\right)  =\delta^{-1}u\left(  f\left(  k_{\infty
}\right)  ,k_{\infty}\right)  $, and that the corresponding path $k\left(
t\right)  $ converges to $k_{\infty}$. Then the right-hand side of (\ref{20c})
converges to $0$, so that $c\left(  t\right)  =\sigma\left(  k\left(
t\right)  \right)  $ converges to $\sigma\left(  k_{\infty}\right)  =f\left(
k_{\infty}\right)  $. This is the content of the following two results

\begin{theorem}
\label{thm.1} Suppose there is some $k_{\infty}>0$ satisfying (\ref{51}) and:%
\begin{equation}
u_{1}^{\prime}f^{\prime\prime}+u_{22}^{\prime\prime}+f^{\prime}\left(
u_{12}^{\prime\prime}-\frac{u_{2}^{\prime}}{u_{1}^{\prime}}u_{11}%
^{\prime\prime}\right)  -\frac{u_{2}^{\prime}}{u_{1}^{\prime}}u_{12}%
^{\prime\prime}<0 \label{27}%
\end{equation}
(all values to be taken at $k_{\infty}$ and $c_{\infty}=f\left(  k_{\infty
}\right)  $). Then there is an optimal strategy $c=\sigma\left(  k\right)  $
converging to $k_{\infty}$
\end{theorem}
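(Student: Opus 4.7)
The verification theorem stated just before Theorem \ref{thm.1} reduces the task to two things: (i) produce a $C^{2}$ solution $V$ of the HJB equation (\ref{20}) in a neighbourhood of $k_{\infty}$ satisfying the boundary condition (\ref{20a}); (ii) show that the closed-loop dynamics (\ref{20c}) defined from $V$ via (\ref{20b}) has $k_{\infty}$ as a locally attracting equilibrium. My plan is to build both objects simultaneously from the stable manifold of the characteristic ODE, rather than attacking HJB as an implicit equation for $V'$ directly.

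The first step is to analyse the characteristic system, which is nothing but the Euler--Lagrange equation (\ref{21}) coupled with (\ref{24}), regarded as an autonomous planar ODE in $(c,k)$. By (\ref{50})--(\ref{51}) the point $(c_{\infty},k_{\infty})$ with $c_{\infty}=f(k_{\infty})$ is a rest point. I would linearise at this point and obtain a $2\times 2$ Jacobian $J$. A direct computation gives $\operatorname{tr} J=\delta>0$, and, after substituting the stationarity relation $\delta-f'(k_{\infty})=u_{2}'/u_{1}'$ coming from (\ref{51}) into $\det J$, one finds
\[
u_{11}''\det J \;=\; -\Bigl(u_{1}'f''+u_{22}''+f'\bigl(u_{12}''-\tfrac{u_{2}'}{u_{1}'}u_{11}''\bigr)-\tfrac{u_{2}'}{u_{1}'}u_{12}''\Bigr),
\]
which is the negative of the left-hand side of (\ref{27}). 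Together with strict concavity of $u$ in $c$, condition (\ref{27}) therefore forces $\det J<0$, so $(c_{\infty},k_{\infty})$ is a hyperbolic saddle with one stable and one unstable eigenvalue.

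Second, I would apply the smooth stable manifold theorem: the stable manifold $W^{s}$ is a one-dimensional $C^{\infty}$ curve through $(c_{\infty},k_{\infty})$. The stable eigenvector is transverse to the $c$-axis (because the off-diagonal entry of $J$ coming from $\dot k = f(k)-c$ is $-1\neq 0$), so $W^{s}$ can be written locally as a graph $c=\sigma(k)$ with $\sigma\in C^{\infty}$ in a neighbourhood $U$ of $k_{\infty}$, and $\sigma(k_{\infty})=f(k_{\infty})$. Every orbit on $W^{s}$ is admissible in the sense of (\ref{23})--(\ref{24}) (up to shrinking $U$ so that $k>0$, $\sigma(k)>0$) and converges exponentially to $(c_{\infty},k_{\infty})$, which handles the attractor requirement in (\ref{20c}).

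Third, define $V(k):=\int_{0}^{\infty}e^{-\delta t}u(c(t;k),k(t;k))\,dt$, where $(c(\cdot;k),k(\cdot;k))$ is the unique orbit on $W^{s}$ starting at $k$. Exponential contraction on $W^{s}$ and smooth dependence on initial data make $V$ of class $C^{2}$ on $U$, and $V(k_{\infty})=\delta^{-1}u(f(k_{\infty}),k_{\infty})$, i.e.\ (\ref{20a}). Differentiating the dynamic programming identity $V(k)=\int_{0}^{h}e^{-\delta t}u\,dt+e^{-\delta h}V(k(h;k))$ and letting $h\to 0$ gives $\delta V(k)=u(\sigma(k),k)+(f(k)-\sigma(k))V'(k)$ together with $V'(k)=u_{1}'(\sigma(k),k)$, which is exactly (\ref{20}) with the maximum attained at $c=\sigma(k)$; thus $V$ solves HJB, and the verification theorem completes the proof.

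The only genuinely delicate step is the algebraic reduction in paragraph two that identifies $-\det J$ (up to the factor $u_{11}''$) with the expression in (\ref{27}): everything after that is a standard application of the stable manifold theorem plus the classical derivation of HJB from the value function of a smooth feedback.
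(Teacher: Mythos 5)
Your architecture is essentially the paper's, transported to different coordinates. The paper rewrites the HJB equation (\ref{20}) as the Pfaff system (\ref{paff1})--(\ref{paff2}) and studies the characteristic ODE (\ref{ODE}) in the variables $(k,p)$ with $p=u_{1}^{\prime}(c,k)$; you study the Euler--Lagrange system (\ref{21}), (\ref{24}) in the variables $(k,c)$. Since $p=u_{1}^{\prime}(c,k)$ is a local diffeomorphism in $c$ (as $u_{11}^{\prime\prime}\neq0$), the two planar systems are conjugate, the linearizations at the rest point are similar matrices, and your computation $\operatorname{tr}J=\delta$ and ``$u_{11}^{\prime\prime}\det J$ equals minus the left-hand side of (\ref{27})'' agrees exactly with the paper's characteristic polynomial $\lambda^{2}-\delta\lambda-B=0$, in which $u_{11}^{\prime\prime}B$ is the left-hand side of (\ref{27}). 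So the saddle structure, the stable manifold realized as a graph $c=\sigma(k)$, and the convergence of the closed-loop dynamics at rate $\lambda_{-}<0$ are the same content as the paper's proof, just phrased in consumption rather than costate coordinates.

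The one place where your route genuinely diverges, and where you have a gap, is the recovery of $V$. The paper defines $V$ along the stable manifold by the algebraic relation (\ref{paff2}) and then checks $dV=p\,dk$ by differentiating (\ref{paff2}) along trajectories of (\ref{ODE}); this gives $V^{\prime}=p=u_{1}^{\prime}(\sigma(k),k)$ for free, which is precisely the first-order condition making $c=\sigma(k)$ the argmax in (\ref{20}). You instead define $V$ as the discounted integral along the stable orbit and differentiate the dynamic programming identity. That yields only the evaluated equation $\delta V(k)=u(\sigma(k),k)+(f(k)-\sigma(k))V^{\prime}(k)$; it does not by itself yield $V^{\prime}(k)=u_{1}^{\prime}(\sigma(k),k)$, and without that identity you have not shown that $V$ solves the HJB equation (\ref{20}) (the maximum over $c$), nor that your $\sigma$ coincides with the feedback (\ref{20b}) required by the verification theorem. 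The identity is true but needs an argument: for instance, differentiate $V$ under the integral sign and use the variational equation together with the Euler equation (\ref{21}) to integrate by parts (the classical ``$V^{\prime}$ equals the initial costate'' computation); or observe that $q(t):=V^{\prime}(k(t))-u_{1}^{\prime}(c(t),k(t))$ satisfies a linear ODE whose coefficient tends to $\delta-\lambda_{-}=\lambda_{+}>0$, so boundedness of $q$ along the converging orbit forces $q\equiv0$. Either patch is short, but one of them must be supplied; as written, your final paragraph asserts the key identity rather than proving it.
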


If $u=u\left(  c\right)  $ does not depend on $k$, then $u_{2}^{\prime}=0$.
Equation (\ref{51}) becomes $f^{\prime}\left(  k_{\infty}\right)  =\delta$,
which defines $k_{\infty}$ uniquely because of the Inada conditions
(\ref{Inada}) on $f$, and condition (\ref{27}) becomes $u_{1}^{\prime}\left(
c_{\infty}\right)  f^{\prime\prime}\left(  k_{\infty}\right)  <0$, which is
satisfied automatically. So there is an optimal strategy in that case, and one
can even show that it is globally defined. For $u$ depending on $k$, however,
the situation is different.

\begin{proof}
We follow the method of \cite{Ek}. By the inverse function theorem, the
equation $u_{1}^{\prime}(c,k))=p$ defines $c$ as a $C^{1}$ function of $p$ and
$k$:
\begin{equation}
c=\varphi(p,k)\text{, }u_{1}^{\prime}(\varphi(p,k),k))=p \label{c.of.k}%
\end{equation}

We rewrite (\ref{20}) as a Pfaff system:
\begin{align}
dV  &  =pdk,\label{paff1}\\
p(f(k)-\varphi(p,k))+u(\varphi(p,k),k)  &  =\delta V, \label{paff2}%
\end{align}
and we seek a $C^{2}$ solution $V$ satisfying:
\begin{equation}
V(k_{\infty})=\frac{1}{\delta}u(f(k_{\infty}),k_{\infty}).
\label{paff.initial}%
\end{equation}

Differentiating (\ref{paff2}) leads to:
\begin{equation}
\delta dV=(f(k)-\varphi(p,k))dp+pf^{\prime}(k)dk+u_{2}^{\prime}(\varphi
(p,k),k)dk \label{delta.dv}%
\end{equation}

Plugging (\ref{delta.dv}) into (\ref{paff1}), we get:
\begin{equation}
(f(k)-\varphi(p,k))dp=(\delta{p}-pf^{\prime}(k)-u_{2}^{\prime}(\varphi
(p,k),k))dk. \label{system.tmp}%
\end{equation}

Introducing an auxiliary variable $t$, we rewrite this as a system of two ODES
for two functions $p\left(  t\right)  $ and $k\left(  t\right)  $:
\begin{equation}%
\begin{array}
[c]{c}%
\frac{dk}{dt}=f(k)-\varphi(p,k)\\
\frac{dp}{dt}=\delta{p}-pf^{\prime}(k)-u_{2}^{\prime}(\varphi(p,k),k)
\end{array}
\label{ODE}%
\end{equation}
with the initial condition
\[
(k(0),p(0))=(k_{\infty},p_{\infty})
\]

By (\ref{c.of.k}), we must have
\[
p_{\infty}=u_{1}^{\prime}(\varphi(p_{\infty},k_{\infty}),k_{\infty}%
)=u_{1}^{\prime}(c_{\infty},k_{\infty})=u_{1}^{\prime}(f\left(  k_{\infty
}\right)  ,k_{\infty})
\]

Differentiating (\ref{c.of.k}) with respect to $p$ and $k$ respectively, we
derive the formulas:
\begin{align}
\varphi_{1}^{\prime}(p_{\infty},k_{\infty})  &  =\frac{1}{u_{11}^{\prime
\prime}(\varphi(p_{\infty},k_{\infty}),k_{\infty})}=\frac{1}{u_{11}%
^{\prime\prime}(f(k_{\infty}),k_{\infty})},\label{i1'.infty}\\
\varphi_{2}^{\prime}(p_{\infty},k_{\infty})  &  =-\frac{u_{11}^{\prime\prime
}(\varphi(p_{\infty},k_{\infty}),k_{\infty})}{u_{11}^{\prime\prime}%
(\varphi(p_{\infty},k_{\infty}),k_{\infty})}=-\frac{u_{12}^{\prime\prime
}(f(k_{\infty}),k_{\infty})}{u_{11}^{\prime\prime}(f(k_{\infty}),k_{\infty})},
\label{i2'.infty}%
\end{align}

We can now linearizd (\ref{ODE}) at the point $\left(  p_{\infty},k_{\infty
}\right)  $.\ We get:%
\begin{equation}
\frac{d}{dt}\left(
\begin{array}
[c]{c}%
k-k_{\infty}\\
p-p_{\infty}%
\end{array}
\right)  =A_{\infty}\left(
\begin{array}
[c]{c}%
k-k_{\infty}\\
p-p_{\infty}%
\end{array}
\right)  . \nonumber\label{linearized.system}%
\end{equation}
where the constant matrix $A_{\infty}$ is given by:
\[
A_{\infty}:=\left(
\begin{array}
[c]{cc}%
f^{\prime}+\frac{u_{12}^{\prime\prime}}{u_{11}^{\prime\prime}}, & -\frac
{1}{u_{11}^{\prime\prime}}\\
-u_{1}^{\prime}f^{\prime\prime}-u_{22}^{\prime\prime}+\frac{(u_{12}%
^{\prime\prime})^{2}}{u_{11}^{\prime\prime}}, & \delta-f^{\prime}-\frac
{u_{12}^{\prime\prime}}{u_{11}^{\prime\prime}}%
\end{array}
\right)  .
\]
and all the values are to be taken at $\left(  k_{\infty},p_{\infty}\right)
$. The characteristic polynomial is:%
\[
{\lambda}^{2}-\delta\lambda-[\frac{u_{1\infty}^{\prime}f_{\infty}%
^{\prime\prime}+u_{22\infty}^{\prime\prime}}{u_{11\infty}^{\prime\prime}%
}+f_{\infty}^{\prime}(\frac{u_{12\infty}^{\prime\prime}}{u_{11\infty}%
^{\prime\prime}}-\frac{u_{2\infty}^{\prime}}{u_{1\infty}^{\prime}}%
)-\frac{u_{2\infty}^{\prime}}{u_{1\infty}^{\prime}}\frac{u_{12\infty}%
^{\prime\prime}}{u_{11\infty}^{\prime\prime}}]=0
\]
where we used $f_{\infty}^{\prime}+\frac{u_{2\infty}^{\prime}}{u_{1\infty
}^{\prime}}=\delta$ by (\ref{50}). Because of assumption (\ref{27}), it has
two real roots with different signs, $\lambda_{+}>0$ and $\lambda_{-}<0$. Thus
$(k_{\infty},p_{\infty})$ is a hyperbolic fixed point of (\ref{ODE}), with a
stable $C^{\infty}$-manifold $\mathcal{S}$ which corresponds to $\lambda_{-}$
and an unstable $C^{\infty}$-manifold $\mathcal{U}$ which corresponds to
$\lambda_{+}$. Choose a smooth parametrization $({k}_{s}(x),{p}_{s}(x))$ for
the curve $\mathcal{S}$. The tangent at the fixed point is:
\[
\frac{d{p}_{s}}{d{k}_{s}}(k_{\infty})=u_{11}^{\prime\prime}(f_{\infty}%
^{\prime}-\lambda_{-})+u_{12}^{\prime\prime}%
\]
and plugging $k=k_{s}(x),\ p=p_{s}(x)$ into equation (\ref{paff2}), we get
\[
V_{s}=\frac{p_{s}(x)(f(k_{s}(x))-\varphi(p_{s}(x),k_{s}(x)))+u(\varphi
(p_{s}(x),k_{s}(x)),k_{s}(x))}{\delta}.
\]

Moreover, differentiating (\ref{paff2}), and using (\ref{27}) again, we find:
\[
\frac{d{V}_{s}}{d{k}_{s}}(k_{\infty})=u_{1}^{\prime}(f(k_{\infty}),k_{\infty
})>0,
\]

It follows that the curve in parametric form $x\rightarrow\left(  k\left(
x\right)  ,V\left(  x\right)  \right)  $ is in fact the graph of a function
$V\left(  k\right)  $ which solves HJB and satisfies (\ref{paff.initial}). By
(\ref{paff1}), we have
\begin{equation}
\frac{d^{2}{V}_{s}}{d{k_{s}}^{2}}(k_{\infty})=\frac{d{p}_{s}}{d{k_{s}}%
}(k_{\infty})=u_{11\infty}^{\prime\prime}(f_{\infty}^{\prime}-\lambda
_{-})+u_{12\infty}^{\prime\prime}. \label{v''}%
\end{equation}
Thus $V_{s}(k)$ is $C^{2}$ in $k$ at $k_{\infty}$, and the $C^{2}$ property at
other points near $k_{\infty}$ follows from (\ref{ODE}).

It remains to show that the strategy $\sigma$ defined by $u_{1}^{\prime
}(\sigma\left(  k\right)  ,k)=V^{\prime}\left(  k\right)  $ converges to
$\left(  f\left(  k_{\infty}\right)  ,k_{\infty}\right)  $. We rewrite
$\sigma$ as:%
\[
\sigma(k)=\varphi(V^{\prime}(k),k).
\]

Linearizing the equation
\[
\frac{dk}{dt}=f(k)-c=f(k)-\varphi(V^{\prime}(k),k)
\]
gives
\begin{equation}
\frac{d(k-k_{\infty})}{dt}=\lambda_{-}(k-k_{\infty})\nonumber
\end{equation}
and this concludes the proof.
\end{proof}

Let us show how to deduce the Euler equation (\ref{21}) from the HJB equation
(\ref{20}) and the optimal strategy (\ref{20b}). Setting $c\left(  t\right)
:=\sigma\left(  k\left(  t\right)  \right)  $, differentiating (\ref{20}) with
respect to $k$, and applying the envelope theorem, we get:%
\[
\delta V^{\prime}(k)=u_{2}^{\prime}\left(  c,k\right)  +f^{\prime}\left(
k\right)  V^{\prime}\left(  k\right)  +\left(  f\left(  k\right)  -c\right)
V^{\prime\prime}\left(  k\right)  ,
\]
and hence, noting that $\frac{dk}{dt}=f\left(  k\right)  -c\left(  t\right)
:$%
\[
\left(  \delta-f^{\prime}\left(  k\right)  \right)  V^{\prime}\left(
k\right)  -u_{2}^{\prime}(c,k)=\left(  f\left(  k\right)  -c\right)  \frac
{d}{dk}V^{\prime}\left(  k\right)  =\frac{d}{dt}V^{\prime}\left(  k\right)  .
\]

Replacing $V^{\prime}\left(  k\right)  $ by $u_{1}^{\prime}(\sigma\left(
k\right)  ,k)=u_{1}^{\prime}(c,k)$, we get:%
\[
\left(  \delta-f^{\prime}\left(  k\right)  \right)  u_{1}^{\prime}%
(c,k)-u_{2}^{\prime}(c,k)=\frac{d}{dt}u_{1}^{\prime}(c,k),
\]
which is precisely the Euler equation.

\bigskip

\section{Time-inconsistency.\label{time-inconsistency}}

\subsection{Equilibrium strategies}

We consider the intertemporal decision problem (as it seen at time $t=0$)
\begin{equation}
J\left(  c,k\right)  =\int_{0}^{\infty}\left[  h\left(  t\right)  u\left(
c\left(  t\right)  ,k\left(  t\right)  \right)  +H\left(  t\right)  U\left(
c\left(  t\right)  ,k\left(  t\right)  \right)  \right]  dt \label{model1}%
\end{equation}
under the dynamics described by (\ref{a7}) and (\ref{a8}). Here $h$ and $H$
are discount factors, i.e. $C^{\infty}$ non-increasing functions on
$[0,\ \infty)$, such that $h\left(  0\right)  =H\left(  0\right)  =1$ and
$h\left(  \infty\right)  =H\left(  \infty\right)  =0$, while $u$ and $U$ are
utility functions. They are assumed to be $C^{\infty}$ on $]0,\ \infty)^{2}$,
with $u^{\prime\prime}<0\,$\ and $U^{\prime\prime}<0$ everywhere. We shall
also assume that they decay exponentially, so that there is some $\rho>0$ and
some $T>0$ such that $h\left(  t\right)  <e^{-\rho t}$ and $H\left(  t\right)
<e^{-\rho t}$ for $t\geq T$.

Because of time-inconsistency, the decision problem can no longer be seen as
an optimization problem. There is no way for the decision-maker at time $0$ to
achieve what is, from her point of view, the first-best solution of the
problem, and she must turn to a second-best policy: the best she can do is to
guess what her successor are planning to do, and then to plan her own
consumption $c(0)$ accordingly. In other word, we will be looking for a
subgame-perfect equilibrium of the leader-follower game played by successive generations.

The equilibrium policy was described in \cite{EkL} for the case when the
criterion (\ref{model1}) did not include the second term, and $u$ did not
depend on $k$. We will extend this analysis to the present situation, and then
compare it with the approach in \cite{Kar2}.

\begin{definition}
A Markov strategy $c=\sigma\left(  k\right)  $ is \emph{convergent} if there
is a point $k_{\infty}$ and a neighbourhood $\mathcal{N}$ of $k_{\infty}$ such
that, for every $k_{0}\in\mathcal{N}$ the solution $k(t)$ of (\ref{20c})
converges to $k_{\infty}$ (and so $c\left(  t\right)  =\sigma\left(  k\left(
t\right)  \right)  $ converges to $c_{\infty}=f\left(  k_{\infty}\right)  $).
If this is the case, we shall say that $\left(  c_{\infty},k_{\infty}\right)
$ is \emph{supported} by $\sigma$.
\end{definition}

Let us suppose that a convergent Markov strategy $\sigma$ has been announced
and is the public knowledge $\sigma$. The decision-maker at $T$ has capital
stock $k_{T}$. If all future decision-maker apply the strategy $\sigma$, the
resulting future capital stock flow $k(t)$ obeys:
\begin{equation}
\frac{dk}{dt}=f\left(  k\right)  -\sigma\left(  k\right)  ,\ k\left(
T\right)  =k_{T},\ T\leq t \label{equation.of.kb}%
\end{equation}

Since every decision-maker faces the same problem (with different stock
levels) it is enough to take $T=0$. Suppose the decision-maker at time $0$
holds power for $0\leq t<\varepsilon$, and expects all later decision-makers
to apply the strategy $\sigma$. He or she then explores whether it is in his
or her interest to apply the strategy $\sigma$, that is, to play $c_{0}%
=\sigma\left(  k_{0}\right)  $ for $0\leq t<\varepsilon$. If he or she applies
the constant control $c$ for $0\leq t\leq\varepsilon$.

Suppose the constant control $c$ is use on $0\leq t\leq\varepsilon$. The
immediate utility flow during $[0,\varepsilon]$ is $[u(c,k_{0})+U(c,k_{0}%
)]\varepsilon+o(\varepsilon)$ where $o(\varepsilon)$ is a higher order term of
$\varepsilon$. At time $\varepsilon$, the resulting capital will be
$k_{0}+(f(k_{0})-c)\varepsilon+o(\varepsilon)$. From then on, the strategy
$\sigma$ will be applied, which results in a capital stock $k_{c}$ satisfying
(we omit higher-order terms):
\begin{equation}
\frac{dk_{c}}{dt}=f\left(  k_{c}\right)  -\sigma\left(  k_{c}\right)
,\ k_{c}\left(  \varepsilon\right)  =k_{0}+(f(k_{0})-c)\varepsilon
,\ t\geq\varepsilon\label{kc}%
\end{equation}

The capital stock $k_{c}$ can be written as
\begin{equation}
k_{c}(t)=k_{0}(t)+k_{1}(t)\varepsilon, \label{kc.kb.ki}%
\end{equation}
where $k_{0}\left(  t\right)  $ is the unperturbed solution, and $k_{1}\left(
t\right)  $ is given by the linearized equation:%
\begin{align}
\frac{dk_{0}}{dt}  &  =f\left(  k_{0}\right)  -\sigma\left(  k_{0}\right)
,\text{ }k_{0}\left(  0\right)  =k_{0}\label{a12}\\
\frac{dk_{1}}{dt}  &  =\left(  f^{\prime}\left(  k_{0}\right)  -\sigma
^{\prime}\left(  k_{0}\right)  \right)  k_{1},\ \ k_{1}\left(  0\right)
=\sigma\left(  k_{0}\right)  -c\ \label{a11}%
\end{align}

Evaluating the integral (\ref{model1}) we get:%

\[%
\begin{array}
[l]{c}%
J\left(  \varepsilon\right)  =u(c,k_{0})\varepsilon+\int_{\varepsilon}%
^{\infty}h(s)u(\sigma(k_{0}(t)+\varepsilon k_{1}(t)),k_{0}(t)+\varepsilon
k_{1}(t))dt\\
\quad\quad\quad\quad\ +U(c,k_{0})\varepsilon+\int_{\varepsilon}^{\infty
}H(s)U(\sigma(k_{0}(t)+\varepsilon k_{1}(t)),k_{0}(t)+\varepsilon k_{1}(t))dt
\end{array}
\]

Letting $\varepsilon\rightarrow0$, so that commitment span of the
decision-maker vanishes, we get:%
\[
\lim_{\varepsilon\rightarrow0}\frac{1}{\varepsilon}\left(  J\left(
\varepsilon\right)  -J\left(  0\right)  \right)  ={P}(k_{0},\sigma,c)
\]
where
\begin{align}
P(k_{0},\sigma,c)  &  =u(c,k_{0})-u(\sigma(k_{0}),k_{0})+(U(c,k_{0}%
)-U(\sigma(k_{0}),k_{0}))\nonumber\\
&  +\int_{0}^{\infty}h(t)u_{1}^{\prime}(\sigma(k_{0}(t)),k_{0}(t))\sigma
^{\prime}(k_{0}(t))k_{1}(t)dt+\int_{0}^{\infty}h(t)u_{2}^{\prime}(\sigma
(k_{0}(t)),k_{0}(t))k_{1}(t)dt\nonumber\\
&  +\int_{0}^{\infty}H(t)U_{1}^{\prime}(\sigma(k_{0}(t)),k_{0}(t))\sigma
^{\prime}(k_{0}(t))k_{1}(t)dt+\int_{0}^{\infty}H(t)U_{2}^{\prime}(\sigma
(k_{0}(t)),k_{0}(t))k_{1}(t)dt,\nonumber
\end{align}

\begin{definition}
A convergent Markov strategy $\sigma$ is an equilibrium if we have:
\[
\max_{c}P(k,\sigma,c)=P(k,\sigma,\sigma\left(  k\right)  ),\text{ \ }\forall
k.
\]

\end{definition}

\subsection{The HJB\ approach}

We now characterizes the equilibrium strategy. We write $k_{c}(t)=\mathcal{K}%
(t;k_{0},\sigma)$ where $\mathcal{K}$ is the flow associated with the
differential equation (\ref{a12}). We also define a function $\varphi\,$\ by:%
\begin{align*}
u_{1}^{\prime}(\varphi(x,k),k)+U_{1}^{\prime}\left(  \varphi(x,k),k\right)
&  =x,\\
\varphi(u_{1}^{\prime}(c,k)+U_{1}^{\prime}\left(  c,k),k\right)   &  =c
\end{align*}

Since $u_{1}^{\prime\prime}$ and $U_{1}^{\prime\prime}$ are both negative, the
function $\varphi$ is well-defined by the implicit function theorem.

\begin{theorem}
\label{theorem.ie} Let $\sigma$ be an equilibrium strategy. The function:
\begin{equation}
V(k_{0})=\int_{0}^{\infty}h(t)u(\sigma(\mathcal{K}(t;k_{0},\sigma
)),\mathcal{K}(t;k_{0},\sigma))dt+\int_{0}^{\infty}H(t)U(\sigma(\mathcal{K}%
(t;k_{0},\sigma)),\mathcal{K}(t;k_{0},\sigma))dt \label{value.function}%
\end{equation}
satisfies the integral equation:%
\begin{equation}%
\begin{array}
[c]{l}%
V(k_{0})=\int_{0}^{\infty}h(t)u(\varphi\circ V^{\prime}(\mathcal{K}%
(t;k_{0},\varphi\circ V^{\prime})),\mathcal{K}(t;k_{0},\varphi\circ V^{\prime
}))dt\\
\quad\quad\quad\ \ +\int_{0}^{\infty}H(t)U(\varphi\circ V^{\prime}%
(\mathcal{K}(t;k_{0},\varphi\circ V^{\prime})),\mathcal{K}(t;k_{0}%
,\varphi\circ V^{\prime}))dt
\end{array}
\tag{IE}\label{IE}%
\end{equation}
and the instantaneous optimality condition
\begin{equation}
u_{1}^{\prime}(\sigma(k_{0}),k_{0})+U_{1}^{\prime}(\sigma(k_{0}),k_{0}%
)=V^{\prime}(k_{0}),\quad\sigma(k_{0})=\varphi(V^{\prime}(k_{0}),k_{0}),
\label{instantaneous.condition}%
\end{equation}

Conversely, suppose a function $V$ is twice continuously differentiable,
satisfies (IE), and the strategy $\sigma(k_{0}):=\varphi(V^{\prime}%
(k_{0}),k_{0})$ is convergent. Then $\sigma$ is an equilibrium strategy.
\end{theorem}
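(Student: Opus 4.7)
My plan is to reduce both directions of the equivalence to a single closed-form expression for $P(k_0,\sigma,c)$ in terms of $V'(k_0)$. Let $K(t)=\mathcal{K}(t;k_0,\sigma)$ be the nominal trajectory under the strategy $\sigma$ and let $m(t)$ solve the variational equation
\[
\frac{dm}{dt}=\bigl(f'(K(t))-\sigma'(K(t))\bigr)m,\qquad m(0)=1.
\]
Two identities follow from uniqueness for linear ODEs: the perturbed trajectory in (\ref{a11}) satisfies $k_1(t)=(\sigma(k_0)-c)\,m(t)$, and the sensitivity of the flow is $\partial\mathcal{K}(t;k_0,\sigma)/\partial k_0=m(t)$.

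For the forward direction, I would differentiate (\ref{value.function}) under the integral sign; the exponential decay of $h$ and $H$ past time $T$ together with convergence of $K(t)$ to $k_\infty$ supplies the dominated-convergence bound. Using the second identity, this yields
\[
V'(k_0)=\int_0^\infty h(t)\bigl[u_1'\sigma'(K)+u_2'\bigr]m(t)\,dt+\int_0^\infty H(t)\bigl[U_1'\sigma'(K)+U_2'\bigr]m(t)\,dt,
\]
with the partial derivatives of $u$ and $U$ evaluated at $(\sigma(K(t)),K(t))$. Plugging the first identity into the formula defining $P$ then collapses all four integrals into a single scalar,
\[
P(k_0,\sigma,c)=u(c,k_0)+U(c,k_0)-u(\sigma(k_0),k_0)-U(\sigma(k_0),k_0)+(\sigma(k_0)-c)V'(k_0).
\]
The equilibrium hypothesis forces $\partial_c P=0$ at $c=\sigma(k_0)$, which reads $u_1'(\sigma(k_0),k_0)+U_1'(\sigma(k_0),k_0)=V'(k_0)$; this is (\ref{instantaneous.condition}), and by the defining property of $\varphi$ it is equivalent to $\sigma(k_0)=\varphi(V'(k_0),k_0)$. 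Substituting this relation back into (\ref{value.function}) turns it into (\ref{IE}).

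For the converse, assume $V\in C^2$ solves (\ref{IE}) and that $\sigma:=\varphi\circ V'$ is convergent. Then (\ref{IE}) simply asserts that $V$ coincides with the value function (\ref{value.function}) attached to this $\sigma$, so the same differentiation yields the same formula for $V'(k_0)$ and the same closed form for $P$. Consequently
\[
\frac{\partial P}{\partial c}(k_0,\sigma,c)=u_1'(c,k_0)+U_1'(c,k_0)-V'(k_0),\qquad \frac{\partial^2 P}{\partial c^2}=u_{11}''+U_{11}''<0,
\]
so $c\mapsto P(k_0,\sigma,c)$ is strictly concave and its unique maximiser is the solution of $u_1'(c,k_0)+U_1'(c,k_0)=V'(k_0)$, namely $c=\varphi(V'(k_0),k_0)=\sigma(k_0)$, which is what is needed for $\sigma$ to be an equilibrium.

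The main obstacle is the reduction of the four integrals in $P$ to the single term $(\sigma(k_0)-c)V'(k_0)$; this hinges entirely on identifying $k_1/(\sigma(k_0)-c)$ with the flow sensitivity $\partial\mathcal{K}/\partial k_0$. Once that identification is in hand, both directions reduce to an elementary first-order condition for a strictly concave one-variable problem. The only residual analytic content is justifying differentiation under the integral sign, which follows from the exponential decay of $h$ and $H$ past time $T$ combined with the convergence of the trajectories.
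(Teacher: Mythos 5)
Your proof is correct and follows essentially the same route as the paper: identify $k_{1}(t)/(\sigma(k_{0})-c)$ with the flow sensitivity $\partial\mathcal{K}/\partial k_{0}$ (the paper's $\mathcal{R}(k_{0};t)$), differentiate the value function to recognize that sum of integrals as $V^{\prime}(k_{0})$, and reduce the equilibrium condition to the first-order condition of a strictly concave one-variable problem. The only difference is that you spell out the converse (which the paper simply defers to \cite{EkL}) by the natural symmetric argument, and your explicit closed form $P=u(c,k_{0})+U(c,k_{0})-u(\sigma(k_{0}),k_{0})-U(\sigma(k_{0}),k_{0})+(\sigma(k_{0})-c)V^{\prime}(k_{0})$ is a clean restatement of what the paper computes term by term.
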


For the sake of convenience, we have shortened $\sigma(k_{0}):=\varphi
(V^{\prime}(k_{0}),k_{0})$ to $\sigma=\varphi\circ V^{\prime}$.

\begin{proof}
Since the system is autonomous, we have:
\begin{equation}
\mathcal{K}(s;\mathcal{K}(t;k_{0},\sigma),\sigma)=\mathcal{K}(s+t;k_{0}%
,\sigma). \label{property.1.of.K}%
\end{equation}

Next, denote the fundamental solution of the linearized equation of
(\ref{equation.of.kb}) at $k_{0}$ by $\mathcal{R}(k_{0};t)$ so that:
\begin{align*}
k_{1}(t)  &  =\mathcal{R}(k_{0};t)(\sigma(k_{0})-c)\\
\frac{d\mathcal{R}}{dt}  &  =(f^{\prime}(\mathcal{K}(t;k_{0},\sigma
))-\sigma^{\prime}(\mathcal{K}(t;k_{0},\sigma)))\mathcal{R}(t),\quad
\mathcal{R}(k_{0};0)=I,
\end{align*}

$\mathcal{R}$ and $\mathcal{K}$ are related by:
\begin{align}
\frac{\partial\mathcal{K}(t;k_{0},\sigma)}{\partial k_{0}}  &  =\mathcal{R}%
(k_{0};t),\label{property.2.of.K}\\
\frac{\partial\mathcal{K}(t;k_{0},\sigma)}{\partial t}  &  =f(\mathcal{K}%
(t;k_{0},\sigma))-\sigma(\mathcal{K}(t;k_{0},\sigma)). \label{property.3.of.K}%
\end{align}

Let us now turn to the first part of the theorem. Differentiating
(\ref{value.function})with respect to $k_{0}$:%
\begin{align}
V^{\prime}(k_{0})  &  =\int_{0}^{\infty}h(t)u_{1}^{\prime}(\sigma
(\mathcal{K}(t;k_{0},\sigma)),\mathcal{K}(t;k_{0},\sigma))\sigma^{\prime
}(\mathcal{K}(t;k_{0},\sigma))\mathcal{R}(k_{0};t)dt\nonumber\\
&  +\int_{0}^{\infty}h(t)u_{2}^{\prime}(\sigma(\mathcal{K}(t;k_{0}%
,\sigma)),\mathcal{K}(t;k_{0},\sigma))\mathcal{R}(k_{0};t)dt\nonumber\\
&  +\int_{0}^{\infty}H(t)U_{1}^{\prime}(\sigma(\mathcal{K}(t;k_{0}%
,\sigma)),\mathcal{K}(t;k_{0},\sigma))\sigma^{\prime}(\mathcal{K}%
(t;k_{0},\sigma))\mathcal{R}(k_{0};t)dt\nonumber\\
&  +\int_{0}^{\infty}H(t)U_{2}^{\prime}(\sigma(\mathcal{K}(t;k_{0}%
,\sigma)),\mathcal{K}(t;k_{0},\sigma))\mathcal{R}(k_{0};t)dt. \label{c5}%
\end{align}

Substituting $k_{0}(t)=\mathcal{K}(t;k_{0},\sigma)$ and $k_{1}(t)$ in the
definition of $P$, we get:%
\begin{align}
P(k_{0},\sigma,c)  &  =\lbrack u(c,k_{0})+U(c,k_{0})]-[u(\sigma(k_{0}%
),k_{0})+U(\sigma(k_{0}),k_{0})]\nonumber\\
&  +\int_{0}^{\infty}h(t)u_{1}^{\prime}(\sigma(\mathcal{K}(t;k_{0}%
,\sigma)),\mathcal{K}(t;k_{0},\sigma))\sigma^{\prime}(\mathcal{K}%
(t;k_{0},\sigma))\mathcal{R}(k_{0};t)(\sigma(k_{0})-c)dt\nonumber\\
&  +\int_{0}^{\infty}h(t)u_{2}^{\prime}(\sigma(\mathcal{K}(t;k_{0}%
,\sigma)),\mathcal{K}(t;k_{0},\sigma))\mathcal{R}(k_{0};t)(\sigma
(k_{0})-c)dt\nonumber\\
&  +\int_{0}^{\infty}H(t)U_{1}^{\prime}(\sigma(\mathcal{K}(t;k_{0}%
,\sigma)),\mathcal{K}(t;k_{0},\sigma))\sigma^{\prime}(\mathcal{K}%
(t;k_{0},\sigma))\mathcal{R}(k_{0};t)(\sigma(k_{0})-c)dt\nonumber\\
&  +\int_{0}^{\infty}H(t)U_{2}^{\prime}(\sigma(\mathcal{K}(t;k_{0}%
,\sigma)),\mathcal{K}(t;k_{0},\sigma))\mathcal{R}(k_{0};t)(\sigma
(k_{0})-c)dt.\nonumber
\end{align}

Since $u$ and $U$ are strictly concave and differentiable with respect to $c$,
the necessary and sufficient condition to maximize $P(k_{0},\sigma,c)$ with
respect to $c$ is that the derivative vanishes at $c=\sigma(k_{0})$, that is:
\begin{align*}
u_{1}^{\prime}(\sigma(k_{0}),k_{0})+U_{1}^{\prime}(\sigma(k_{0}),k_{0})=  &
\int_{0}^{\infty}h(t)u_{1}^{\prime}(\sigma(\mathcal{K}(t;k_{0},\sigma
)),\mathcal{K}(t;k_{0},\sigma))\sigma^{\prime}(\mathcal{K}(t;k_{0}%
,\sigma))\mathcal{R}(k_{0};t)dt\\
&  +\int_{0}^{\infty}h(t)u_{2}^{\prime}(\sigma(\mathcal{K}(t;k_{0}%
,\sigma)),\mathcal{K}(t;k_{0},\sigma))\mathcal{R}(k_{0};t)dt\\
&  +\int_{0}^{\infty}H(t)U_{1}^{\prime}(\sigma(\mathcal{K}(t;k_{0}%
,\sigma)),\mathcal{K}(t;k_{0},\sigma))\sigma^{\prime}(\mathcal{K}%
(t;k_{0},\sigma))\mathcal{R}(k_{0};t)dt\\
&  +\int_{0}^{\infty}H(t)U_{2}^{\prime}(\sigma(\mathcal{K}(t;k_{0}%
,\sigma)),\mathcal{K}(t;k_{0},\sigma))\mathcal{R}(k_{0};t)dt,
\end{align*}

The right-hand side is precisely $V^{\prime}(k_{0})$, as we wanted. Therefore,
the equilibrium strategy satisfies
\[
u_{1}^{\prime}(\sigma(k_{0}),k_{0})+U_{1}^{\prime}(\sigma(k_{0}),k_{0}%
)=V^{\prime}(k_{0})
\]
and we have $\sigma(k_{0})=\varphi(V^{\prime}(k_{0}),k_{0})$. Substituting
back into equation (\ref{value.function}), we get the functional equation
(IE). This prove the first part of the theorem (necessity). We refer to
\cite{EkL} for the second part (sufficiency).
\end{proof}

The following theorem gives an alternative characterization, the differential
equation, which resembles the usual HJB equation from the calculus of variation.

\begin{theorem}
\label{theorem.de} Let $V$ be a $C^{2}$ function such that the strategy
$\sigma=\varphi\circ V^{\prime}$ converges to $\bar{k}$. Then $V$ satisfies
the integral equation (\ref{IE}) if and only if it satisfies the following
integro-differential equation%
\begin{equation}%
\begin{array}
[c]{c}%
u(\varphi\circ{V}^{\prime}(k_{0}),k_{0})+U(\varphi\circ{V}^{\prime}%
(k_{0}),k_{0})+V^{\prime}(k_{0})(f(k_{0})-\varphi({V}^{\prime}(k_{0}%
),k_{0}))\\
=-\int_{0}^{\infty}h^{\prime}(t)u(\varphi\circ{V}^{\prime}(\mathcal{K}%
(t;k_{0},\varphi\circ V^{\prime})),\mathcal{K}(t;k_{0},\varphi\circ V^{\prime
}))dt\\
-\int_{0}^{\infty}H^{\prime}(t)U(\varphi\circ{V}^{\prime}(\mathcal{K}%
(t;k_{0},\varphi\circ V^{\prime})),\mathcal{K}(t;k_{0},\varphi\circ V^{\prime
}))dt
\end{array}
\tag{DE}\label{DE}%
\end{equation}
together with the boundary condition
\begin{equation}
V(\bar{k})=u(f(\bar{k}),\bar{k})\int_{0}^{\infty}h(t)dt+U(f(\bar{k}),\bar
{k})\int_{0}^{\infty}H(t)dt. \tag{BC}\label{BC}%
\end{equation}

\end{theorem}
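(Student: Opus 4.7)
The plan is to show both implications by exploiting the autonomy of the dynamics (\ref{a12}), i.e.\ the semigroup property $\mathcal{K}(s;\mathcal{K}(t;k_0,\sigma),\sigma)=\mathcal{K}(s+t;k_0,\sigma)$, together with a single change of variable $r=s+t$. Throughout, write $k_0(t):=\mathcal{K}(t;k_0,\sigma)$, $g(t):=u(\sigma(k_0(t)),k_0(t))$ and $G(t):=U(\sigma(k_0(t)),k_0(t))$.

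For the direction (\ref{IE})$\Rightarrow$(\ref{DE})$+$(\ref{BC}), I would first evaluate (\ref{IE}) at $k_0(t)$ and substitute $\mathcal{K}(s;k_0(t),\sigma)=k_0(s+t)$; the change of variable $r=s+t$ transforms it into
\[
V(k_0(t))=\int_t^\infty h(r-t)\,g(r)\,dr+\int_t^\infty H(r-t)\,G(r)\,dr.
\]
Differentiating in $t$ under the integral (justified by the exponential decay of $h,H$) and using $h(0)=H(0)=1$ produces
\[
\tfrac{d}{dt}V(k_0(t))=-g(t)-G(t)-\int_t^\infty h'(r-t)g(r)\,dr-\int_t^\infty H'(r-t)G(r)\,dr.
\]
Since $\tfrac{d}{dt}V(k_0(t))=V'(k_0(t))(f(k_0(t))-\sigma(k_0(t)))$ and $\sigma=\varphi\circ V'$, evaluating at $t=0$ is exactly (\ref{DE}). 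For (\ref{BC}), I would plug $k_0=\bar k$ into (\ref{IE}): convergence of the strategy forces $\sigma(\bar k)=f(\bar k)$, hence $k_0(t)\equiv\bar k$, and the integrals collapse to the right-hand side of (\ref{BC}).

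For the reverse direction, I would write (\ref{DE}) at the running point $k_0(t)$, recognise the left-hand side as $\tfrac{d}{dt}V(k_0(t))+g(t)+G(t)$, and integrate from $0$ to $T$. A Fubini swap on the double integral, with the explicit inner antiderivative
\[
\int_0^{\min(r,T)}h'(r-t)\,dt=\begin{cases}h(r)-1,&r\le T,\\ h(r)-h(r-T),&r>T,\end{cases}
\]
and the analogous identity for $H'$, yields after cancellation of the diagonal $\int_0^T g+\int_0^T G$ terms the clean identity
\[
V(k_0(T))-V(k_0)=-\int_0^T\!\bigl[h(r)g(r)+H(r)G(r)\bigr]dr-\int_T^\infty\!\bigl[(h(r)-h(r-T))g(r)+(H(r)-H(r-T))G(r)\bigr]dr.
\]
I would then let $T\to\infty$. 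Since $k_0(T)\to\bar k$, continuity of $V$ gives $V(k_0(T))\to V(\bar k)$; in the boundary integral, the substitution $r=T+s$ turns it into $\int_0^\infty(h(T+s)-h(s))g(T+s)\,ds$, and the exponential bounds on $h,H$ combined with the boundedness of $g,G$ along the convergent trajectory allow dominated convergence, producing the limit $-u(f(\bar k),\bar k)\int_0^\infty h(s)\,ds-U(f(\bar k),\bar k)\int_0^\infty H(s)\,ds$. By (\ref{BC}) this cancels $V(\bar k)$, and rearranging gives (\ref{IE}).

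The main obstacle is the bookkeeping of the Fubini/change-of-variable step together with the passage to the limit $T\to\infty$: one must verify that the tail integral really is controlled by the exponential decay hypothesis $h(t),H(t)<e^{-\rho t}$ (and analogous control on $h',H'$, which follows from differentiating the smooth decaying discount factors), and that $g,G$ remain uniformly bounded on the trajectory as it approaches $\bar k$. Everything else is a routine application of the semigroup property and differentiation under the integral sign.
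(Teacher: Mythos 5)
Your proof is correct and follows essentially the same route as the paper's: both directions rest on evaluating the value identity along the flow $t\mapsto\mathcal{K}(t;k_{0},\sigma)$, using the semigroup property, and pinning the constant of integration at $t=\infty$ via (BC). The only cosmetic difference is in the converse, where you integrate (DE) from $0$ to $T$ with an explicit Fubini computation, while the paper notes that $\psi(t)=V(\mathcal{K}_{t})-\int_{t}^{\infty}\left[h(s-t)u+H(s-t)U\right]ds$ has $t$-derivative equal to the defect in (DE) and is therefore constant --- the integrated and differentiated forms of the same argument (and note that the integrability of $h^{\prime},H^{\prime}$ you worry about follows already from monotonicity, since $\int_{0}^{\infty}|h^{\prime}|=h(0)-h(\infty)=1$).
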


\begin{proof}
Introduce the function $\phi$ defined by
\begin{equation}
\phi(k_{0})=V(k_{0})-\int_{0}^{\infty}h(t)u(\sigma(\mathcal{K}(t;k_{0}%
,\sigma)),\mathcal{K}(t;k_{0},\sigma))dt-\int_{0}^{\infty}H(t)U(\sigma
(\mathcal{K}(t;k_{0},\sigma)),\mathcal{K}(t;k_{0},\sigma))dt,\nonumber
\end{equation}
where $\sigma(k_{0})=\varphi(V^{\prime}(k_{0}),k_{0})$. Consider the value
$\psi(t,k_{0})$ of $\phi$ along the trajectory $t\rightarrow\mathcal{K}%
(t;k_{0},\sigma)$ originating from $k_{0}$ at time $0$, that is
\begin{align*}
\psi(t,k_{0})  &  =\phi(\mathcal{K}(t;k_{0},\sigma))\\
&  =V(\mathcal{K}(t;k_{0},\sigma))-\int_{t}^{\infty}h(s-t)u(\sigma
(\mathcal{K}(\sigma;s,k_{0})),\mathcal{K}(\sigma;s,k_{0}))ds\\
&  \quad-\int_{t}^{\infty}H(s-t)U(\sigma(\mathcal{K}(\sigma;s,k_{0}%
)),\mathcal{K}(\sigma;s,k_{0}))ds,
\end{align*}

Then the derivative of $\psi$ with respect to $t$ is given by:
\begin{align}
\frac{\partial\psi(t,k_{0})}{\partial{t}}=  &  V^{\prime}(\mathcal{K}%
_{t})[f(\mathcal{K}_{t})-\varphi\circ{V^{\prime}}(\mathcal{K}_{t})]
+u(\varphi\circ{V^{\prime}}(\mathcal{K}_{t}),\mathcal{K}_{t})+U(\varphi
\circ{V^{\prime}}(\mathcal{K}_{t}),\mathcal{K}_{t})\nonumber\\
&  +\int_{0}^{\infty}h^{\prime}(s)u(\varphi\circ{V^{\prime}}(\mathcal{K}%
(s;\mathcal{K}_{t},\varphi\circ{V^{\prime}})),\mathcal{K}(s;\mathcal{K}%
_{t},\varphi\circ{V^{\prime}}))ds\nonumber\\
&  +\int_{0}^{\infty}H^{\prime}(s)U(\varphi\circ{V^{\prime}}(\mathcal{K}%
(s;\mathcal{K}_{t},\varphi\circ{V^{\prime}})),\mathcal{K}(s;\mathcal{K}%
_{t},\varphi\circ{V^{\prime}}))ds,\nonumber
\end{align}
where we have denoted $\mathcal{K}(t;k_{0},\varphi\circ V^{\prime})$ by
$\mathcal{K}_{t}$ for convenience. If (\ref{DE}) holds, then the right hand
side is identically zero along the trajectory, so that $\psi(t,k_{0})$ does
not depend on $t$, thus $\psi(s,k_{0})=\psi(t,k_{0})$ for all $s,t\geq0$.
Letting $t\rightarrow{\infty}$ in the definition of $\psi$, we get
\begin{align}
\psi(s,k_{0})  &  =\lim_{t\rightarrow{\infty}}\psi(t,k_{0})\nonumber\\
&  =V(\bar{k})-\int_{0}^{\infty}h(s)u(\sigma(\bar{k}),\bar{k})ds-\int%
_{0}^{\infty}H(s)U(\sigma(\bar{k}),\bar{k})ds,
\end{align}
and hence, if (BC) holds, then $\psi=\phi\equiv0$ and so equation (IE) holds.
Conversely, if $V(k)$ satisfies equation (IE), then the same lines of
reasoning shows that equation (DE) and the boundary condition (BC) are satisfied.
\end{proof}

\subsection{The Euler equations}

To obtain the Euler-Lagrange-like equation, we differentiate the both side of
(\ref{DE}) with respect to $k_{0}$. We get:
\begin{align*}
&  -\int_{0}^{\infty}h^{\prime}(t)u_{1}^{\prime}(\sigma(\mathcal{K}%
(t;k_{0},\sigma)),\mathcal{K}(t;k_{0},\sigma))\sigma^{\prime}(\mathcal{K}%
(t;k_{0},\sigma))\frac{\partial\mathcal{K}(t;k_{0},\sigma)}{\partial k_{0}%
}dt\\
&  -\int_{0}^{\infty}h^{\prime}(t)u_{2}^{\prime}(\sigma(\mathcal{K}%
(t;k_{0},\sigma)),\mathcal{K}(t;k_{0},\sigma))\frac{\partial\mathcal{K}%
(t;k_{0},\sigma)}{\partial k_{0}}dt\\
&  -\int_{0}^{\infty}H^{\prime}(t)U_{1}^{\prime}(\sigma(\mathcal{K}%
(t;k_{0},\sigma)),\mathcal{K}(t;k_{0},\sigma))\sigma^{\prime}(\mathcal{K}%
(t;k_{0},\sigma))\frac{\partial\mathcal{K}(t;k_{0},\sigma)}{\partial k_{0}%
}dt\\
&  -\int_{0}^{\infty}H^{\prime}(t)U_{2}^{\prime}(\sigma(\mathcal{K}%
(t;k_{0},\sigma)),\mathcal{K}(t;k_{0},\sigma))\frac{\partial\mathcal{K}%
(t;k_{0},\sigma)}{\partial k_{0}}dt\\
&  =[u_{1}^{\prime}(\sigma(k_{0}),k_{0})+U_{1}^{\prime}(\sigma(k_{0}%
),k_{0})]\sigma^{\prime}(k_{0})+u_{2}^{\prime}(\sigma(k_{0}),k_{0}%
)+U_{2}^{\prime}(\sigma(k_{0}),k_{0})\\
&  \quad+V^{\prime}(k_{0})f^{\prime}(k_{0})-V^{\prime}(k_{0})\sigma^{\prime
}(k_{0})+V^{\prime\prime}(k_{0})(f(k_{0})-\sigma(k_{0})).
\end{align*}

Plugging $k_{0}=k(t)$, $\sigma(k(t))=c(t)$ and using
(\ref{instantaneous.condition}) to cancel the first and the fifth terms,
together with (\ref{24}) and (\ref{property.2.of.K}), we have
\begin{align*}
&  -\int_{0}^{\infty}h^{\prime}(s)u_{1}^{\prime}(\sigma(\mathcal{K}%
(s;k(t),\sigma)),\mathcal{K}(s;k(t),\sigma))\sigma^{\prime}(\mathcal{K}%
(s;k(t),\sigma))\mathcal{R}(k\left(  t\right)  ,s)ds\\
&  -\int_{0}^{\infty}h^{\prime}(s)u_{2}^{\prime}(\sigma(\mathcal{K}%
(s;k(t),\sigma)),\mathcal{K}(s;k(t),\sigma))\mathcal{R}(k\left(  t\right)
,s)ds\\
&  -\int_{0}^{\infty}H^{\prime}(s)U_{1}^{\prime}(\sigma(\mathcal{K}%
(s;k(t),\sigma)),\mathcal{K}(s;k(t),\sigma))\sigma^{\prime}(\mathcal{K}%
(s;k(t),\sigma))\mathcal{R}(k\left(  t\right)  ,s)ds\\
&  -\int_{0}^{\infty}H^{\prime}(s)U_{2}^{\prime}(\sigma(\mathcal{K}%
(s;k(t),\sigma)),\mathcal{K}(s;k(t),\sigma))\mathcal{R}(k\left(  t\right)
,s)ds\\
&  =u_{2}^{\prime}(c(t),k(t))+U_{2}^{\prime}(c(t),k(t))+V^{\prime
}(k(t))f^{\prime}(k(t))+V^{\prime\prime}(k(t))(f(k(t))-c(t))\\
&  =u_{2}^{\prime}(c(t),k(t))+U_{2}^{\prime}(c(t),k(t))+[u_{1}^{\prime
}(c(t),k(t))+U_{1}^{\prime}(c(t),k(t))]f^{\prime}(k(t))\\
&  \quad+\frac{d}{dt}[u_{1}^{\prime}(c(t),k(t))+U_{1}^{\prime}(c(t),k(t))],
\end{align*}
where to get the last two terms in the right hand side, we have used
(\ref{instantaneous.condition}) again.

We have $\mathcal{K}(s;k(t),\sigma)=k(s+t)$, and $\mathcal{R}(k\left(
t\right)  ;s)=\mathcal{R}(k_{0};s+t)$
\begin{align}
\label{sigma'.of.kc}\sigma^{\prime}\left(  k\left(  t\right)  \right)   &
=\frac{1}{f(k(t))-c(t)}\frac{dc}{dt}\\
\mathcal{R}(k_{0};s+t)  &  =\exp(f^{\prime}(k\left(  t\right)  )-\frac
{1}{f(k(t))-c(t)}\frac{dc}{dt}),\nonumber
\end{align}

Writing this into the preceding equations, we finally get:
\begin{align}
&  -\int_{t}^{\infty}h^{\prime}(s-t)\left[  u_{1}^{\prime}(c(s),k(s))\gamma
\left(  s\right)  +u_{2}^{\prime}(c(s),k(s))\right]  e^{f^{\prime
}(k(s))-\gamma\left(  s\right)  }ds\label{a15}\\
&  -\int_{t}^{\infty}H^{\prime}(s-t)\left[  U_{1}^{\prime}(c(s),k(s))\gamma
\left(  s\right)  +U_{2}^{\prime}(c(s),k(s))\right]  e^{f^{\prime
}(k(s))-\gamma\left(  s\right)  }ds\label{a16}\\
&  =u_{2}^{\prime}(c(t),k(t))+U_{2}^{\prime}(c(t),k(t))+\left[  u_{1}^{\prime
}(c(t),k(t))+U_{1}^{\prime}(c(t),k(t))\right]  f^{\prime}(k(t))\label{a17}\\
&  +\frac{d}{dt}[u_{1}^{\prime}(c(t),k(t))+U_{1}^{\prime}(c(t),k(t))]
\label{a18}%
\end{align}
with:%
\begin{equation}
\gamma\left(  s\right)  :=\frac{1}{f(k(s))-c(s)}\frac{dc}{ds} \label{a19}%
\end{equation}
which is the Euler-Lagrange-like equation for the time-inconsistent case.

\subsection{The control theory approach}

Karp \cite{Kar2}, \cite{KaL1} has developed a different method to deal with
time-inconsistency. In this section, we connect his results with ours.

Defining $V(k_{0})$ as above, we must have:
\begin{equation}
V(k_{0})=\max_{c}\left\{  u(c,k_{0})+U(c,k_{0})]\varepsilon+\int_{\varepsilon
}^{\infty}[h(t)u(\sigma(k_{c}(t)),k_{c}(t))+H(t)U(\sigma(k_{c}(t)),k_{c}%
(t))]dt\right\}  \label{karp1}%
\end{equation}

On the other hand, we have
\[
V(k_{c}(\varepsilon))=\int_{\varepsilon}^{\infty}[h(s-\varepsilon
)u(\sigma(k_{c}(s)),k_{c}(s))+H(s-\varepsilon)U(\sigma(k_{c}(s)),k_{c}%
(s))]ds,
\]

Substituting:
\begin{equation}
V(k_{0})=\max_{c}\left\{
\begin{array}
[c]{l}%
\lbrack u(c,k_{0})+U(c,k_{0})]\varepsilon+V(k_{c}(\varepsilon))\\
+\int_{\varepsilon}^{\infty}[h(t)-h(t-\varepsilon)]u(\sigma(k_{c}%
(t)),k_{c}(t))dt\\
+\int_{\varepsilon}^{\infty}[H(t)-H(t-\varepsilon)]U(\sigma(k_{c}%
(t)),k_{c}(t))dt
\end{array}
\right\}  \nonumber\label{karp3}%
\end{equation}
Letting $\varepsilon\rightarrow0$, we have
\begin{align}
&  -\int_{0}^{\infty}h^{\prime}(t)u(\sigma(k(t)),k(t))dt-\int_{0}^{\infty
}H^{\prime}(t)U(\sigma(k(t)),k(t))dt\nonumber\\
&  =\max_{c}\{u(c,k_{0})+U(c,k_{0})+V^{\prime}(k_{0})(f(k_{0})-c)\}. \label{K}%
\end{align}

This equation was first obtained by Karp \cite{Kar2}, \cite{KaL1}. This is the
HJB equation for a certain control problem, which he terms the auxiliary
control problem. We now show that this approach is equivalent to the preceding
one. approach by Ekeland-Lazrak. In the right hand side of (\ref{K}), the
maximum is attained at:
\[
c=\ \arg\max_{c}\{u(c,k_{0})+U(c,k_{0})+V^{\prime}(k_{0})(f(k_{0}%
)-c)\}=\varphi(V^{\prime}(k_{0}),k_{0}),
\]
thus we have
\begin{align*}
&  -\int_{0}^{\infty}h^{\prime}(t)u(\sigma(k(t)),k(t))dt-\int_{0}^{\infty
}H^{\prime}(t)U(\sigma(k(t)),k(t))dt\\
&  =u(\varphi(V^{\prime}(k_{0}),k_{0}),k_{0})+U(\varphi(V^{\prime}%
(k_{0}),k_{0}),k_{0})+V^{\prime}(k_{0})(f(k_{0})-\varphi(V^{\prime}%
(k_{0}),k_{0})).
\end{align*}
which is exactly the same as (DE)

\section{The biexponential case}

\subsection{The equations}

In this section, we consider the biexponential criterion
\begin{equation}
\lambda\int_{0}^{\infty}e^{-\delta_{1}s}u(c(s),k(s))ds+(1-\lambda)\int%
_{0}^{\infty}e^{-\delta_{2}s}U(c(s),k(s))ds. \label{biexponential.criterion}%
\end{equation}

Without loss of generality, we assume that:%
\[
\delta_{1}>\delta_{2}%
\]

If $\lambda=0$ or $1$, then this is just the Ramsey criterion (\ref{18}). Thus
we are interested in the case $0<\lambda<1$. Dividing by $\lambda$, we find
that (\ref{biexponential.criterion}) is a special case of (\ref{model1}),
where $h(t)=e^{-\delta_{1}t}$, $H(t)=e^{-\delta_{2}t}$ and $U$ is replaced by
$\frac{1-\lambda}{\lambda}U$. So all the results of the preceding section hold.

\subsubsection{The HJB-type equations}

Given an equilibrium strategy $\sigma_{\lambda}$, introduce the two functions:%
\begin{align}
V_{\lambda}\left(  k\right)   &  :=\int_{0}^{\infty}e^{-\delta_{1}t}%
u(\sigma_{\lambda}\left(  k(t)\right)  ,k(t))dt+\frac{1-\lambda}{\lambda}%
\int_{0}^{\infty}e^{-\delta_{2}t}U(\sigma_{\lambda}\left(  k(t)\right)
,k(t))dt\label{25a}\\
W_{\lambda}\left(  k\right)   &  :=\int_{0}^{\infty}e^{-\delta_{1}t}%
u(\sigma_{\lambda}\left(  k(t)\right)  ,k(t))dt-\frac{1-\lambda}{\lambda}%
\int_{0}^{\infty}e^{-\delta_{2}t}U(\sigma_{\lambda}\left(  k(t)\right)
,k(t))dt \label{26a}%
\end{align}

In Proposition \ref{equivalent.to.an.ODE.system}, we prove that the HJB-type
equation (DE)\ reduces to a system of two ODEs for $V_{\lambda}$ and
$W_{\lambda}:$%
\begin{align}
\left(  f-\varphi_{\lambda}\left(  V_{\lambda}^{\prime}\right)  \right)
V_{\lambda}^{\prime}+u\left(  V_{\lambda}^{\prime}\right)  +\frac{1-\lambda
}{\lambda}U\left(  V_{\lambda}^{\prime}\right)  =  &  \frac{\delta_{1}%
+\delta_{2}}{2}V_{\lambda}+\frac{\delta_{1}-\delta_{2}}{2}W_{\lambda
},\label{35}\\
\left(  f-\varphi_{\lambda}\left(  V_{\lambda}^{\prime}\right)  \right)
W_{\lambda}^{\prime}+u\left(  V_{\lambda}^{\prime}\right)  -\frac{1-\lambda
}{\lambda}U\left(  V_{\lambda}^{\prime}\right)  =  &  \frac{\delta_{1}%
-\delta_{2}}{2}V_{\lambda}+\frac{\delta_{1}+\delta_{2}}{2}W_{\lambda}
\label{36}%
\end{align}
where $\varphi_{\lambda}\left(  V_{\lambda}^{\prime}\right)  $ and $u\left(
V_{\lambda}^{\prime}\right)  $ denote the functions $k\rightarrow
\varphi_{\lambda}\left(  V_{\lambda}^{\prime}\left(  k\right)  ,k\right)  $
and $k\rightarrow u\left(  \varphi_{\lambda}\left(  V^{\prime}\right)
,k\right)  $. Recall that $\varphi_{\lambda}$ is defined by the equivalent
equations:%
\begin{align}
u_{1}^{\prime}(\varphi_{\lambda}(x,k),k)+\frac{1-\lambda}{\lambda}%
U_{1}^{\prime}\left(  \varphi_{\lambda}(x,k),k\right)   &  =x,\label{39}\\
\varphi_{\lambda}(u_{1}^{\prime}(c,k)+\frac{1-\lambda}{\lambda}U_{1}^{\prime
}\left(  c,k),k\right)   &  =c \label{40}%
\end{align}

Similarly, the boundary condition (BC) becomes:%
\begin{align}
V_{\lambda}\left(  k_{\infty}\right)   &  =\frac{1}{\delta_{1}}u\left(
f\left(  k_{\infty}\right)  ,k_{\infty}\right)  +\frac{1-\lambda}{\lambda
}\frac{1}{\delta_{2}}U\left(  f\left(  k_{\infty}\right)  ,k_{\infty}\right)
,\label{37}\\
W_{\lambda}\left(  k_{\infty}\right)   &  =\frac{1}{\delta_{1}}u\left(
f\left(  k_{\infty}\right)  ,k_{\infty}\right)  -\frac{1-\lambda}{\lambda
}\frac{1}{\delta_{2}}U\left(  f\left(  k_{\infty}\right)  ,k_{\infty}\right)
. \label{38}%
\end{align}

If $\lambda=1$, we find the usual HJB equation (\ref{20}) for $V$ with the
boundary condition (\ref{20a}).

\begin{proposition}
\label{equivalent.to.an.ODE.system} \ Suppose that $\sigma_{\lambda}$ is an
equilibrium strategy such that $k\left(  t\right)  $ converges to $k_{\infty}%
$. Then $V_{\lambda}$ and $W_{\lambda}$ defined by (\ref{25a}) and (\ref{26a})
satisfy the equations (\ref{35}) and (\ref{36}) with the boundary conditions
(\ref{37}) and (\ref{38}). Conversely, suppose there is a point $k_{\infty}$,
a $C^{2}$ function $V_{\lambda}$ and a $C^{1}$ function $W_{\lambda}$, both
defined on some open neighbourhood of $k_{\infty}$, satisfying the equations
(\ref{35}) and (\ref{36}) with the boundary conditions (\ref{37}) and
(\ref{38}); suppose moreover that strategy $\sigma_{\lambda}\left(  k\right)
:=\varphi_{\lambda}\left(  V_{\lambda}^{\prime}\left(  k\right)  ,k\right)  $
converges to $\left(  f\left(  k_{\infty}\right)  ,k_{\infty}\right)  $. Then
$\sigma_{\lambda}$ is an equilibrium strategy.
\end{proposition}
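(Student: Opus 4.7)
The proposition is essentially a reformulation of the integro-differential equation (DE) as a system of ordinary differential equations, made possible by the exponential form of the two discount factors $h(t)=e^{-\delta_1 t}$ and $H(t)=e^{-\delta_2 t}$. Introduce
\[
I_1(k_0):=\int_0^{\infty}e^{-\delta_1 t}u(\sigma_\lambda(\mathcal{K}_t),\mathcal{K}_t)\,dt,\qquad I_2(k_0):=\frac{1-\lambda}{\lambda}\int_0^{\infty}e^{-\delta_2 t}U(\sigma_\lambda(\mathcal{K}_t),\mathcal{K}_t)\,dt,
\]
so that $V_\lambda=I_1+I_2$ and $W_\lambda=I_1-I_2$ by (\ref{25a})--(\ref{26a}). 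Since $h'=-\delta_1 h$ and $H'=-\delta_2 H$, the right-hand side of (DE) equals $\delta_1 I_1+\delta_2 I_2=\frac{\delta_1+\delta_2}{2}V_\lambda+\frac{\delta_1-\delta_2}{2}W_\lambda$, which immediately identifies (DE) with (\ref{35}).

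For the necessity direction I would first prove the two scalar identities
\[
(f-\sigma_\lambda)I_1'+u(\sigma_\lambda,\cdot)=\delta_1 I_1,\qquad (f-\sigma_\lambda)I_2'+\tfrac{1-\lambda}{\lambda}U(\sigma_\lambda,\cdot)=\delta_2 I_2.
\]
Using the semigroup property (\ref{property.1.of.K}) one writes $I_1(\mathcal{K}_s)=e^{\delta_1 s}\int_s^{\infty}e^{-\delta_1\tau}u(\sigma_\lambda(\mathcal{K}_\tau),\mathcal{K}_\tau)\,d\tau$; differentiating in $s$ and setting $s=0$ yields the first identity, and the second is obtained identically. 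Adding them produces (\ref{35}) and subtracting them produces (\ref{36}). The boundary conditions (\ref{37})--(\ref{38}) follow by taking $k_0=k_\infty$: the trajectory $\mathcal{K}_t$ is then constantly $k_\infty$, $\sigma_\lambda(k_\infty)=f(k_\infty)$, and each integral evaluates to $\delta_i^{-1}$ times the corresponding utility at $(f(k_\infty),k_\infty)$.

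For the converse, suppose $V_\lambda,W_\lambda$ satisfy the ODE system with the prescribed boundary data and that $\sigma_\lambda=\varphi_\lambda(V_\lambda',\cdot)$ converges to $k_\infty$. Taking the half-sum and half-difference of (\ref{35}) and (\ref{36}) shows that $J_1:=(V_\lambda+W_\lambda)/2$ and $J_2:=(V_\lambda-W_\lambda)/2$ satisfy the same decoupled pair of scalar equations displayed above, with $J_1(k_\infty)=\delta_1^{-1}u(f(k_\infty),k_\infty)$ and $J_2(k_\infty)=\frac{1-\lambda}{\lambda}\delta_2^{-1}U(f(k_\infty),k_\infty)$. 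Composing with the trajectory $k(t)$ generated by $\sigma_\lambda$, each $J_i(k(t))$ satisfies the linear ODE $\frac{d}{dt}J_i(k(t))=\delta_i J_i(k(t))-\ell_i(t)$, where $\ell_1(t)=u(\sigma_\lambda(k(t)),k(t))$ and $\ell_2(t)=\frac{1-\lambda}{\lambda}U(\sigma_\lambda(k(t)),k(t))$. Solving by the integrating factor $e^{-\delta_i t}$, integrating over $[0,T]$ and letting $T\to\infty$, the convergence $k(T)\to k_\infty$ together with the matching boundary value keeps $J_i(k(T))$ bounded, so that $e^{-\delta_i T}J_i(k(T))\to 0$ and one recovers $J_i(k_0)=I_i(k_0)$. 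Therefore $V_\lambda=I_1+I_2$ coincides with the value function (\ref{value.function}), hence satisfies (IE), and Theorem \ref{theorem.ie} identifies $\sigma_\lambda$ as an equilibrium strategy. The only mildly delicate point, and the one I would check carefully, is the vanishing of the boundary residual $e^{-\delta_i T}J_i(k(T))$; this rests solely on the continuity of $V_\lambda,W_\lambda$ at $k_\infty$ together with the positivity of $\delta_1,\delta_2$.
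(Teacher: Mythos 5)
Your proposal is correct, and while the necessity half follows the paper's route (differentiate the two integrals using the semigroup property of the flow, then take sums and differences; evaluate at $k_\infty$ for the boundary conditions), your converse is genuinely different from the paper's. The paper introduces the candidate value functions $v_2,w_2$ built from the strategy $\sigma_1=\varphi_\lambda\circ v_1'$, shows they solve the same boundary-value problem, and then invokes a separate uniqueness result (Lemma \ref{lemma.2}) for the homogeneous system $\xi v_3'=av_3+bw_3$, $\xi w_3'=bv_3+aw_3$ with $\xi=f-\sigma$; that lemma is delicate because the ODE is singular on the zero set of $\xi$, and the paper must diagonalize, write the explicit solution $V(k)=V(k_0)\exp\int_{k_0}^{k}\delta_1/\xi$, and run a case analysis on $S=\{\xi=0\}$. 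You perform the same diagonalization (your $J_1,J_2$ are exactly the paper's $V,W$ up to the factor $\sqrt{2}$) but then integrate the decoupled equations \emph{in time} along the trajectory generated by $\sigma_\lambda$, where the chain rule turns the singular equation $\xi J_i'=\delta_i J_i-\ell_i$ into the regular linear ODE $\frac{d}{dt}J_i(k(t))=\delta_i J_i(k(t))-\ell_i(t)$; the integrating factor plus the transversality limit $e^{-\delta_i T}J_i(k(T))\to0$ (which, as you note, needs only continuity of $J_i$ at $k_\infty$ and $\delta_i>0$) then gives $J_i=I_i$ directly, so $V_\lambda$ is the value function and Theorem \ref{theorem.ie} applies. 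This sidesteps the zero-set analysis entirely and even shows the boundary conditions (\ref{37})--(\ref{38}) are consequences of the ODEs plus convergence rather than independent hypotheses. What the paper's route buys in exchange is the standalone uniqueness Lemma \ref{lemma.2}, which is reused later (in the Chichilnisky section, to conclude $V_0'-W_0'=0$), whereas your argument is tied to the representation of the solution as an integral along trajectories.
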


Let us draw the reader's attention to the fact that $V_{\lambda}$ must be
$C^{2}$ while $W_{\lambda}$ needs only be $C^{1}$.

\begin{proof}
Let us simplify the notation. Write $\sigma$ instead of $\sigma_{\lambda}$and
set%
\[
a=\frac{\delta_{1}+\delta_{2}}{2},\ \ b=\frac{\delta_{1}-\delta_{2}}{2}%
\]

Arguing as in Theorem \ref{theorem.de}, we obtain (\ref{35}) and
(\ref{36})\thinspace\ by differentiating (\ref{25a}) and (\ref{26a}). The
boundary conditions (\ref{37}) and (\ref{38}) follow from setting $k\left(
t\right)  =k_{\infty}$ and $c\left(  t\right)  =\sigma\left(  k_{\infty
}\right)  $ in (\ref{35}) and (\ref{36}).

Conversely suppose $v_{1}$ and $w_{1}$ satisfy (\ref{35}), (\ref{36}) and
(\ref{37}), (\ref{38}), and suppose the strategy $\sigma_{1}=\varphi_{\lambda
}\circ{v_{1}^{\prime}}$ converges to $k$. Consider the following functions
\begin{align}
v_{2}(k_{0}) =  &  \int_{0}^{\infty}e^{-\delta_{1}{t}}u(\sigma_{1}%
(\mathcal{K}(t;k_{0},\sigma_{1})),\mathcal{K}(t;k_{0},\sigma_{1}%
))dt\nonumber\\
&  +\frac{1-\lambda}{\lambda}\int_{0}^{\infty}e^{-\delta_{2}t}U(\sigma
_{1}(\mathcal{K}(t;k_{0},\sigma_{1})),\mathcal{K}(t;k_{0},\sigma
_{1}))dt,\nonumber\\
w_{2}(k_{0}) =  &  \int_{0}^{\infty}e^{-\delta_{1}{t}}u(\sigma_{1}%
(\mathcal{K}(t;k_{0},\sigma_{1})),\mathcal{K}(t;k_{0},\sigma_{1}%
))dt\nonumber\\
&  -\frac{1-\lambda}{\lambda}\displaystyle\int_{0}^{\infty}e^{-\delta_{2}%
t}U(\sigma_{1}(\mathcal{K}(t;k_{0},\sigma_{1})),\mathcal{K}(t;k_{0},\sigma
_{1}))dt.\nonumber
\end{align}

Arguing as in Theorem \ref{theorem.de}, we find that $v_{2}$ and $w_{2}$ also
satisfy (\ref{35}), (\ref{36}) and (\ref{37}), (\ref{38}). Setting
$v_{3}=v_{1}-v_{2},w_{3}=w_{1}-w_{2}$,
%and subtracting (\ref{equation.of.vw2}), (\ref{boundary.of.vw2}) from (\ref{35}), (\ref{36}) respectively,
then we get
\begin{equation}%
\begin{array}
[c]{c}%
\left(  f\left(  k\right)  -\sigma\left(  k\right)  \right)  v_{3}^{\prime
}\left(  k\right)  +u\left(  \sigma\left(  k\right)  ,k\right)  +\frac
{1-\lambda}{\lambda}U\left(  \sigma\left(  k\right)  ,k\right)  =av_{3}\left(
k\right)  +bw_{3}\left(  k\right) \\
\left(  f\left(  k\right)  -\sigma\left(  k\right)  \right)  w_{3}^{\prime
}\left(  k\right)  +u\left(  \sigma\left(  k\right)  ,k\right)  +\frac
{1-\lambda}{\lambda}U\left(  \sigma\left(  k\right)  ,k\right)  =bv_{3}\left(
k\right)  +aw_{3}\left(  k\right)
\end{array}
\label{a28}%
\end{equation}
with the boundary conditions
\begin{equation}%
\begin{array}
[c]{c}%
v_{3}\left(  k_{\infty}\right)  =0\\
w_{3}\left(  k_{\infty}\right)  =0
\end{array}
\label{a29}%
\end{equation}

Obviously, $v_{3}=w_{3}=0$ is a solution. Lemma \ref{lemma.2} below shows that
We need to show that it is the only one, so that $v_{1}=v_{2}$ and
$w_{1}=w_{2}$. This is the desired result.
\end{proof}

\begin{lemma}
\label{lemma.2} If $(v_{3},w_{3})$ is a pair of continuous functions on a
neighborhood $k_{\infty}$, continuously differentiable for $k\neq k_{\infty}$,
and which solve (\ref{a28}) with boundary conditions (\ref{a29}), then
$v_{3}=w_{3}=0$.
\end{lemma}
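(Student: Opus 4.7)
My approach is to linearize the problem along trajectories of the closed-loop flow. Subtracting the two copies of (\ref{a28}) satisfied by the pairs $(v_1,w_1)$ and $(v_2,w_2)$, the common nonlinear forcing term $u(\sigma_1(k),k)+\tfrac{1-\lambda}{\lambda}U(\sigma_1(k),k)$ cancels—both pairs use the single strategy $\sigma_1=\varphi_\lambda\circ v_1'$ in the definition of $v_2,w_2$—and I am left with the homogeneous linear system
\[
(f(k)-\sigma_1(k))\,v_3'(k)=a\,v_3(k)+b\,w_3(k),\quad (f(k)-\sigma_1(k))\,w_3'(k)=b\,v_3(k)+a\,w_3(k),
\]
with $v_3(k_\infty)=w_3(k_\infty)=0$. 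This is the genuine content of (\ref{a28}) and is what makes $(0,0)$ a solution.

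Next, I fix an arbitrary $k_0$ in the neighbourhood of $k_\infty$ and let $k(t)$ be the solution of $\dot k=f(k)-\sigma_1(k)$, $k(0)=k_0$. The convergence hypothesis on $\sigma_1$ forces $k(t)\to k_\infty$ as $t\to\infty$, while uniqueness for this $C^1$ flow guarantees $k(t)\neq k_\infty$ for every finite $t$. Writing $\tilde v(t):=v_3(k(t))$ and $\tilde w(t):=w_3(k(t))$, which are $C^1$ on $[0,\infty)$, the chain rule combined with the system above yields the constant-coefficient linear equation
\[
\frac{d}{dt}\begin{pmatrix}\tilde v\\\tilde w\end{pmatrix}=\begin{pmatrix}a & b\\ b & a\end{pmatrix}\begin{pmatrix}\tilde v\\\tilde w\end{pmatrix}.
\]
The matrix has eigenvalues $a+b=\delta_1$ and $a-b=\delta_2$, both strictly positive, so every nonzero solution grows exponentially at $+\infty$. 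But continuity of $v_3,w_3$ at $k_\infty$ together with $v_3(k_\infty)=w_3(k_\infty)=0$ forces $(\tilde v(t),\tilde w(t))\to(0,0)$. The only solution compatible with both facts is the identically zero one, so $v_3(k_0)=w_3(k_0)=0$, and since $k_0$ was arbitrary in the neighbourhood, $v_3\equiv w_3\equiv 0$.

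The main obstacle is the degeneracy of the coefficient $f(k)-\sigma_1(k)$ at $k=k_\infty$: the boundary condition sits exactly at the singular point, so no off-the-shelf Cauchy uniqueness theorem applies to the ODE in the $k$-variable, and the smoothness assumption ($C^1$ only away from $k_\infty$) is consistent with this singular behaviour. The trick that saves the argument is that the very same degeneracy makes each trajectory of the flow take infinite time to reach $k_\infty$; reparametrising by the flow time therefore replaces the singular uniqueness question by the elementary observation that a linear ODE with strictly positive spectrum admits no non-trivial solution tending to $0$ at $+\infty$.
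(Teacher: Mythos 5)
Your proof is correct and rests on the same mechanism as the paper's. You rightly observe that the forcing terms displayed in (\ref{a28}) must cancel upon subtraction (otherwise $v_{3}=w_{3}=0$ would not solve the system, contrary to what the paper asserts), so the real content is the homogeneous linear system with the degenerate coefficient $\xi(k)=f(k)-\sigma_{1}(k)$. The paper diagonalises the coefficient matrix by an orthogonal change of variables and integrates in the $k$-variable, obtaining $V(k)=V(k_{0})\exp\int_{k_{0}}^{k}\delta_{1}/\xi(u)\,du$, and concludes from the divergence of $\int^{k_{\infty}}du/\xi(u)$; your reparametrisation by flow time expresses exactly the same divergence as ``infinite time to reach the rest point,'' and your eigenvalue argument on the $2\times2$ system is the same diagonalisation done implicitly. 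The one substantive difference is that the paper's proof does not lean on convergence of the trajectory from $k_{0}$ to $k_{\infty}$: it splits according to whether $\xi$ vanishes at points other than $k_{\infty}$ between $k_{0}$ and $k_{\infty}$ (its ``Second case''), handling additional rest points by noting $V=\xi V^{\prime}/\delta_{1}=0$ there and repeating the integration on each complementary interval. You instead invoke the standing convergence hypothesis on $\sigma_{1}$, which forbids any other zero of $\xi$ in the convergence neighbourhood $\mathcal{N}$ and makes that case vacuous; since the lemma is applied only under that hypothesis and only near $k_{\infty}$, nothing is lost for the purposes of Proposition \ref{equivalent.to.an.ODE.system}, but your argument establishes the vanishing of $v_{3},w_{3}$ on $\mathcal{N}$ only, whereas the paper's case analysis covers a possibly larger neighbourhood on which $\xi$ may have further zeros.
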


\begin{proof}
Set $f(k)-\sigma_{1}(k)=\xi(k)$, then $\xi(k)\rightarrow0$ as $k\rightarrow
k_{\infty}$. The system (\ref{a28}) can be rewritten as:
\begin{equation}
{\left(
\begin{array}
[c]{l}%
\xi v_{3}^{\prime}\\
\xi w_{3}^{\prime}%
\end{array}
\right)  }={\left(
\begin{array}
[c]{lr}%
a & b\\
b & a
\end{array}
\right)  }{\left(
\begin{array}
[c]{l}%
v_{3}\\
w_{3}%
\end{array}
\right)  }.\nonumber
\end{equation}

Setting:
\[
{\left(
\begin{array}
[c]{l}%
V\\
W
\end{array}
\right)  }={\left(
\begin{array}
[c]{lr}%
\sqrt{\frac{1}{2}} & \sqrt{\frac{1}{2}}\\
-\sqrt{\frac{1}{2}} & \sqrt{\frac{1}{2}}%
\end{array}
\right)  }{\left(
\begin{array}
[c]{l}%
v_{3}\\
w_{3}%
\end{array}
\right)  },
\]
we have:
\begin{equation}
\xi{\left(
\begin{array}
[c]{l}%
V^{\prime}\\
W^{\prime}%
\end{array}
\right)  }={\left(
\begin{array}
[c]{lr}%
\delta_{1} & 0\\
0 & \delta_{2}%
\end{array}
\right)  }{\left(
\begin{array}
[c]{l}%
V\\
W
\end{array}
\right)  }. \nonumber\label{Matrix2}%
\end{equation}

The first equation yields $V(k)=V(k_{0})\exp\int_{k_{0}}^{k}\frac{\delta_{1}%
}{\xi(u)}du$. Without loss of generality, we can assume that $k_{0}<k_{\infty
}$.

Let $S=\{k\ |\ \xi(k)=0,\ k_{0}\leq k\leq k_{\infty}\}$, where $k_{0}$ is the
initial stock.\ We consider the following two cases:

\textbf{First case}:$\ S=\{k_{\infty}\}$. Then $\xi(k)>0$ for $k\in\lbrack
k_{0},k_{\infty})$. So $\frac{dk}{dt}=\xi(k)>0$, and:
\[
0=\sqrt{\frac{1}{2}}(v_{3}(k_{\infty})+w_{3}(k_{\infty}))=V(k_{\infty
})=V(k_{0})\exp\int_{k_{0}}^{k}\frac{\delta_{1}}{\xi(u)}du
\]

It follows that $V(k)=0$ for all $k$.

\textbf{Second case:}$\ S\backslash\{k_{\infty}\}\neq\emptyset$. Then $S$ is a
closed set and $V(k)=\frac{\xi(k)V^{\prime}(k)}{\delta_{1}}=0$ in $S$. The
complement of $S$ is a countable union of disjoint intervals, and $\xi$
vanishes at the endpoints of each interval.\ Arguing as above it follows that
$V\left(  k\right)  $ vanishes for $k_{0}\leq k\leq k_{\infty}$.

The same argument holds for $W\left(  k\right)  $. This concludes the proof.
\end{proof}

\subsubsection{The Euler-type equations}

The Euler-type equations (\ref{a15}) to (\ref{a19}) can be reduced to a
non-autonomous system of four first-order ODEs. It is better to work directly
on (\ref{35}) and (\ref{36}) to get an autonomous system. Proceeding as in the
end of Section 2, we differentiate (\ref{35}) and (\ref{36}) w.r.t. $k$, 
and notice that $(f-\sigma)V_{\lambda}''=\frac{dk}{dt}\frac{V_{\lambda}'}{dk}=\frac{d}{dt}V_{\lambda}'$,
we get:%
\begin{align*}
\frac{\delta_{1}+\delta_{2}}{2}V_{\lambda}^{\prime}+\frac{\delta_{1}%
-\delta_{2}}{2}W_{\lambda}^{\prime}  &  =\frac{d}{dt}V_{\lambda}^{\prime
}+(f^{\prime}-\sigma^{\prime})V_{\lambda}^{\prime}+(u_{1}^{\prime}%
+\frac{1-\lambda}{\lambda}U_{1}^{\prime})\sigma^{\prime}+u_{2}^{\prime}%
+\frac{1-\lambda}{\lambda}U_{2}^{\prime}\\
\frac{\delta_{1}-\delta_{2}}{2}V_{\lambda}^{\prime}+\frac{\delta_{1}%
+\delta_{2}}{2}W_{\lambda}^{\prime}  &  =\frac{d}{dt}W_{\lambda}^{\prime
}+(f^{\prime}-\sigma^{\prime})W_{\lambda}^{\prime}+(u_{1}^{\prime}%
-\frac{1-\lambda}{\lambda}U_{1}^{\prime})\sigma^{\prime}+u_{2}^{\prime}%
-\frac{1-\lambda}{\lambda}U_{2}^{\prime}%
\end{align*}
where $c=\varphi\left(  V_{\lambda}^{\prime}\left(  k\right)  ,k\right)  $ and
$\sigma^{\prime}$ is given by (\ref{sigma'.of.kc}). Using formula (\ref{39}),
and setting $w\left(  t\right)  :=W_{\lambda}^{\prime}\left(  k\left(
t\right)  \right)  $, this becomes:%
\begin{align}
\frac{\delta_{1}+\delta_{2}}{2}\left(u_{1}^{\prime
}+\frac{1-\lambda}{\lambda}U_{1}^{\prime}\right)  +\frac{\delta_{1}%
-\delta_{2}}{2}w  &  =\frac{d}{dt}\left(u_{1}^{\prime
}+\frac{1-\lambda}{\lambda}U_{1}^{\prime}\right)+(f'-\sigma')\left(u_{1}^{\prime
}+\frac{1-\lambda}{\lambda}U_{1}^{\prime}\right)
\nonumber\\
&\quad+\left(u_{1}^{\prime
}+\frac{1-\lambda}{\lambda}U_{1}^{\prime}\right)\sigma'  +u_{2}^{\prime
}+\frac{1-\lambda}{\lambda}U_{2}^{\prime},
\nonumber\\
\frac{\delta_{1}-\delta_{2}}{2}\left(u_{1}^{\prime
}+\frac{1-\lambda}{\lambda}U_{1}^{\prime}\right)  +\frac{\delta_{1}+\delta_{2}}%
{2}w  &  =\frac{dw}{dt}+(f'-\sigma')w 
\nonumber\\
&  \quad+\left(u_{1}^{\prime}-\frac{1-\lambda}{\lambda}U_{1}^{\prime}\right)\sigma^{\prime}
+u_{2}^{\prime}-\frac{1-\lambda}{\lambda}U_{2}^{\prime},\nonumber
\end{align}
and hence
\begin{align}
(\frac{\delta_{1}+\delta_{2}}{2}-f^{\prime})\left(  \lambda u_{1}^{\prime
}+\left(  1-\lambda\right)  U_{1}^{\prime}\right)  +\frac{\delta_{1}%
-\delta_{2}}{2}\lambda w  &  =\frac{d}{dt}\left(  \lambda u_{1}^{\prime
}+\left(  1-\lambda\right)  U_{1}^{\prime}\right)  +\lambda u_{2}^{\prime
}+\left(  1-\lambda\right)  U_{2}^{\prime},\label{42}\\
\frac{\delta_{1}-\delta_{2}}{2}\left(  \lambda u_{1}^{\prime}+\left(
1-\lambda\right)  U_{1}^{\prime}\right)  +(\frac{\delta_{1}+\delta_{2}}%
{2}-f^{\prime})\lambda w  &  =\lambda\frac{dw}{dt} +(-\lambda w+\lambda
u_{1}^{\prime}-\left(  1-\lambda\right)  U_{1}^{\prime})\sigma^{\prime
}\nonumber\\
&  \quad+\lambda u_{2}^{\prime}-\left(  1-\lambda\right)  U_{2}^{\prime}.
\label{43}%
\end{align}
to which we should add:%
\begin{equation}
\frac{dk}{dt}=f\left(  k\right)  -c \label{44}%
\end{equation}

The system (\ref{42}) to (\ref{44}) is a system of three first-order ODEs for
the unknown functions $k\left(  t\right)  ,c\left(  t\right)  $ and $w\left(
t\right)  $. Note that for $\lambda=0$, (\ref{42}) and (\ref{43}) reduce to
two copies of the usual Euler-Lagrange equation. For $\lambda=1$, taking
$w\left(  t\right)  =u_{1}^{\prime}\left(  c\left(  t\right)  ,k\left(
t\right)  \right)  $ gives us again two copies of the same equation.

\subsubsection{The control theory approach}

Equation (\ref{K}) is the HJB equation for the following:
\begin{align}
&  \max_{c\left(  .\right)  }\int_{0}^{\infty}e^{-rt}\left[  u\left(  c\left(
t\right)  ,k\left(  t\right)  \right)  +\frac{1-\lambda}{\lambda}U\left(
c\left(  t\right)  ,k\left(  t\right)  \right)  -K\left(  k\left(  t\right)
\right)  \right]  dt\label{45}\\
&  \frac{dk}{dt}=f\left(  k\right)  -c\left(  t\right)  ,\ \ k\left(
0\right)  =k_{0} \label{46}%
\end{align}
where we seek a feedback $\sigma\left(  k\right)  $ such that:%
\begin{equation}
K(k_{0})=\left\{  (\delta-r)\int_{0}^{\infty}e^{-\delta t}u(\sigma\left(
k\left(  t\right)  \right)  ,k\left(  t\right)  )dt\ |\
\begin{array}
[c]{c}%
\frac{dk}{dt}=f\left(  k\right)  -\sigma\left(  k\right) \\
k\left(  0\right)  =k_{0}%
\end{array}
\right\}  \label{47}%
\end{equation}

Solving (\ref{45}), (\ref{46}) under the constraint (\ref{47}) is a
fixed-point problem for the feedback $c=\sigma\left(  k\right)  $.

\subsection{Solving the boundary-value problem}

Define a function $\overline{g}_{\lambda}(k)$ by:%
\begin{equation}
\overline{g}_{\lambda}(k):=\frac{\lambda\delta_{1}{u_{1}^{\prime}}\left(
f\left(  k\right)  ,k\right)  +(1-\lambda)\delta_{2}U_{1}^{\prime}\left(
f\left(  k\right)  ,k\right)  }{\lambda u_{1}^{\prime}\left(  f\left(
k\right)  ,k\right)  +(1-\lambda)U_{1}^{\prime}\left(  f\left(  k\right)
,k\right)  }-\frac{\lambda u_{2}^{\prime}\left(  f\left(  k\right)  ,k\right)
+(1-\lambda)U_{2}^{\prime}\left(  f\left(  k\right)  ,k\right)  }{\lambda
u_{1}^{\prime}\left(  f\left(  k\right)  ,k\right)  +(1-\lambda)U_{1}^{\prime
}\left(  f\left(  k\right)  ,k\right)  } \label{bar.g}%
\end{equation}

\begin{theorem}
\label{find.the.strategies} Assume $\overline{g}_{\lambda}(k)\neq0$. If there
is some $k_{\infty}$ such that
\begin{equation}
f^{\prime}(k_{\infty})\ne\overline{g}_{\lambda}(k_{\infty}), \label{53}%
\end{equation}
then the equations (\ref{35}) and (\ref{36}) with the boundary conditions
(\ref{37}) and (\ref{38}) have a solution $\left(  V,W\right)  $ near the
point $k_{\infty}$ with $V$ of class $C^{2}$ and $W$ of class $C^{1.}$
\end{theorem}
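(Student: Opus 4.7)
The plan is to convert the implicit boundary-value problem \eqref{35}--\eqref{36} with \eqref{37}--\eqref{38} into an invariant-manifold problem for a smooth autonomous dynamical system, following the blow-up plus central-manifold strategy flagged in the introduction. First I would set $p := V'$ and introduce a physical time $t$ via $\dot k = f(k)-\varphi_\lambda(p,k) =: \xi$. Equation \eqref{35} then determines $W$ explicitly as a smooth function of $(k,V,p)$, and differentiating \eqref{35} along the flow — using the identity $u_1'+\tfrac{1-\lambda}{\lambda}U_1'=p$ built into the definition \eqref{39} of $\varphi_\lambda$ — makes the would-be $1/\xi$ singularity cancel, producing a closed smooth three-dimensional autonomous system in $(k,V,p)$. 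This is the blow-up step: the implicit system becomes a regular vector field.

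Next, the boundary conditions \eqref{37}--\eqref{38} identify a fixed point $(k_\infty,V_\infty,p_\infty)$, with $p_\infty = (u_1'+\tfrac{1-\lambda}{\lambda}U_1')(f(k_\infty),k_\infty)$; condition \eqref{38} is automatically consistent because it is \eqref{36} read at the fixed point. I would linearize at this fixed point. The Jacobian is necessarily non-hyperbolic: because $\dot V = p\,\xi$ ties the $\dot V$-row to a scalar multiple of the $\dot k$-row at the fixed point, $0$ is always an eigenvalue, which is why the central manifold theorem — rather than the classical stable manifold theorem — is the right tool. The remaining eigenvalues are roots of a characteristic polynomial whose coefficients can be computed explicitly in terms of $f'(k_\infty)$, $\overline g_\lambda(k_\infty)$, $\varphi_{\lambda,1}'$, $\varphi_{\lambda,2}'$ and related steady-state data. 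The hypothesis $\overline g_\lambda\neq 0$ keeps the denominators $\lambda u_1'+(1-\lambda)U_1'$ appearing in these coefficients nonzero, and the hypothesis $f'(k_\infty)\neq \overline g_\lambda(k_\infty)$ in \eqref{53} is precisely what forces the nontrivial eigenvalues to be nonzero, so that the hyperbolic part of the spectrum is genuinely nontrivial.

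Finally, the central manifold theorem delivers a smooth invariant manifold through the fixed point, tangent to the kernel of the Jacobian; combined with the stable (or unstable) hyperbolic direction, this produces a smooth invariant curve $\gamma$ through $(k_\infty,V_\infty,p_\infty)$ on which trajectories approach the fixed point as $t\to +\infty$ (or $-\infty$; the sign choice is what produces the one-sided support of $k_\infty$ mentioned in the introduction). Because the tangent to $\gamma$ at the fixed point has a nonzero $k$-component, the projection of $\gamma$ to the $k$-axis is a local diffeomorphism, so $\gamma$ is the graph of $(V(k),p(k))$ for a $C^2$ function $V$ with $V'(k)=p(k)$. The function $W(k)$ is then recovered from \eqref{35} by
\[
W(k) = \tfrac{2}{\delta_1-\delta_2}\!\left[\bigl(f(k)-\varphi_\lambda(V'(k),k)\bigr)V'(k) + u(\varphi_\lambda(V'(k),k),k) + \tfrac{1-\lambda}{\lambda}U(\varphi_\lambda(V'(k),k),k) - \tfrac{\delta_1+\delta_2}{2}V(k)\right],
\]
which is $C^1$ because the formula involves $V'$ — exactly the one-derivative drop from $V$ to $W$ asserted by the statement. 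The main obstacle I anticipate is the explicit spectral calculation of the previous paragraph, namely verifying that \eqref{53} is precisely what makes the nontrivial characteristic factor non-degenerate, together with a second-order check that the nonlinear flow on the center manifold is nontrivial in the $k$-direction so that the projection onto a neighbourhood of $k_\infty$ is a genuine diffeomorphism; a secondary conceptual point is that the center manifold is not unique, which is desirable since it is the source of the continuum of equilibrium strategies that the paper advertises.
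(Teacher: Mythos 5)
Your overall architecture is the right one, and is indeed the paper's: recast the implicit boundary-value problem as an autonomous vector field, observe that the relevant rest point is non-hyperbolic, invoke the center manifold theorem, use (\ref{53}) to guarantee that the one potentially nonzero eigenvalue is actually nonzero, and recover $W$ from (\ref{35}) at the cost of one derivative. But the crucial step --- the desingularization --- is asserted rather than carried out, and the assertion is false as stated. In your variables $(k,V,p)$ with $\dot k=\xi:=f-\varphi_\lambda(p,k)$ and $W$ eliminated via (\ref{35}), one has $\partial W/\partial p=\xi/b$ exactly (with $a=\tfrac{\delta_1+\delta_2}{2}$, $b=\tfrac{\delta_1-\delta_2}{2}$), so the evolution of $p$ dictated by (\ref{36}) reads
\[
\xi\,\dot p \;=\; b\Bigl[\,bV+aW-u(\varphi_\lambda,k)+\tfrac{1-\lambda}{\lambda}U(\varphi_\lambda,k)\Bigr]\;-\;b\,\xi\,\bigl(W_k+p\,W_V\bigr).
\]
The identity $u_1'+\tfrac{1-\lambda}{\lambda}U_1'=p$ does kill the terms containing $\partial\varphi_\lambda/\partial p$ and $\partial\varphi_\lambda/\partial k$, but it does not touch the bracket, which restricted to the hypersurface $\{\xi=0\}$ equals $\tfrac{1}{b}\bigl(\delta_2 u+\delta_1\tfrac{1-\lambda}{\lambda}U-\delta_1\delta_2 V\bigr)$ and therefore vanishes only along a curve, not on the whole surface. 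So the $1/\xi$ singularity does \emph{not} cancel, there is no smooth vector field on a neighbourhood of the rest point, and the center manifold theorem cannot yet be applied.

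What actually removes the singularity in the paper is a pair of further changes of variable that your write-up skips: replace $p$ by $x=V'-y^{\ast}(k)$ with $y^{\ast}(k)=u_1'(f(k),k)+\tfrac{1-\lambda}{\lambda}U_1'(f(k),k)$, take $x$ (not $k$) as the independent variable, and then rescale time by the factor $D$, arriving at the system (\ref{sx}) whose right-hand sides are all smooth: $dx/ds=\tilde D$, while $dk/ds$ and $dV/ds$ are proportional to the product of $\xi$ with a second factor that also vanishes at the rest point. This also corrects your spectral picture: because that product is quadratic at $(0,k_\infty,v_\infty)$, the $k$- and $V$-rows of the Jacobian vanish identically, the kernel is two-dimensional, and the center manifold is a surface rather than a curve; $V(k)$ is then obtained by integrating $dV/dk=y^{\ast}(k)+x$ along that surface (inheriting $C^2$ regularity from the $C^2$ center manifold), not by projecting a one-dimensional invariant curve, and there is no auxiliary hyperbolic "stable direction" to combine with. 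Two smaller inaccuracies: the hypothesis $\overline g_\lambda\neq 0$ has nothing to do with the denominators $\lambda u_1'+(1-\lambda)U_1'$, which are positive simply because $u$ and $U$ are increasing; and the continuum of equilibria advertised in the paper comes from varying $k_\infty$ subject to Theorem \ref{Th1}, not from the non-uniqueness of the center manifold.
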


\noindent\textbf{Proof.} We adapt the argument in \cite{EkL}\thinspace\ to the
present situation. We note that the boundary-value problem (\ref{35}),
(\ref{36}), (\ref{37}), (\ref{38}) cannot be reduced to a standard
initial-value problem for the pair $\left(  V_{\lambda},W_{\lambda}\right)  $.
To see that, rewrite equation (\ref{35}) as follows:%
\[
\left(  f\left(  k\right)  -\varphi\left(  V_{\lambda}^{\prime},k\right)
\right)  V_{\lambda}^{\prime}+u(\varphi\left(  V_{\lambda}^{\prime},k\right)
,k)+\frac{1-\lambda}{\lambda}U(\varphi\left(  V_{\lambda}^{\prime},k\right)
,k)=aV_{\lambda}+bW_{\lambda}.
\]

Since the function $c\rightarrow u(c,k)+\frac{1-\lambda}{\lambda}U(c,k)$ is
concave, the function:%
\[
c\rightarrow u(c,k)+\frac{1-\lambda}{\lambda}U(c,k)-cx
\]
attains its maximum at the point $c=\varphi_{\lambda}\left(  x,k\right)  $
defined by (\ref{40}). We set:%
\begin{equation}
u_{\lambda}^{\ast}\left(  x,k\right)  =\max_{c}\left\{  u(c,k)+\frac
{1-\lambda}{\lambda}U(c,k)-cx\right\}  =u\left(  \varphi_{\lambda}\left(
x,k\right)  ,k\right)  +\frac{1-\lambda}{\lambda}U\left(  \varphi_{\lambda
}\left(  x,k\right)  ,k\right)  -\varphi_{\lambda}\left(  x,k\right)  x
\label{52}%
\end{equation}

The function $x\rightarrow u_{\lambda}^{\ast}\left(  x,k\right)  $ is convex,
and the equation (\ref{35}) becomes:%
\begin{equation}
f\left(  k\right)  V_{\lambda}^{\prime}+u_{\lambda}^{\ast}\left(  V_{\lambda
}^{\prime},k\right)  =aV_{\lambda}+bW_{\lambda}. \label{41}%
\end{equation}

This is an equation for $V_{\lambda}^{\prime}$. From the basic duality results
in convex analysis (see for instance \cite{ET}), we find that:%
\[
\min_{y}\left\{  f\left(  k\right)  y+u_{\lambda}^{\ast}\left(  y,k\right)
\right\}  =u(f\left(  k\right)  ,k)+\frac{1-\lambda}{\lambda}U(f\left(
k\right)  ,k).
\]

Note that $f(k)y+u_{\lambda}^{\ast}\left(  y,k\right)  $ is convex in $y$ with
minimal value $u(f(k),k)+\frac{1-\lambda}{\lambda}U(f(k),k)$. Then (\ref{41}),
considered as an equation for $V_{\lambda}^{\prime}$, has two solutions if:%
\[
aV_{\lambda}\left(  k\right)  +bW_{\lambda}\left(  k\right)  >u(f\left(
k\right)  ,k)+\frac{1-\lambda}{\lambda}U(f\left(  k\right)  ,k),
\]
and no solutions if the inverse inequality holds.\ If we have equality:%
\[
aV_{\lambda}\left(  k\right)  +bW_{\lambda}\left(  k\right)  =u(f\left(
k\right)  ,k)+\frac{1-\lambda}{\lambda}U(f\left(  k\right)  ,k)
\]
then equation (\ref{41}) has precisely one solution, namely:%
\[
V_{\lambda}^{\prime}=u_{1}^{\prime}(f\left(  k\right)  ,k)+\frac{1-\lambda
}{\lambda}U_{1}^{\prime}(f\left(  k\right)  ,k)
\]

At the point $k_{\infty}$, with the values (\ref{37}), (\ref{38}), we find
that:
\[
aV_{\lambda}\left(  k_{\infty}\right)  +bW_{\lambda}\left(  k\right)
=u\left(  f\left(  k_{\infty}\right)  ,k_{\infty}\right)  +\frac{1-\lambda
}{\lambda}U\left(  f\left(  k_{\infty}\right)  ,k_{\infty}\right)  ,
\]
so that we are exactly on the boundary case. Equation (\ref{41}) has precisely
one solution, by (\ref{Vr'}) below, which satisfies
\[
V_{\lambda}^{\prime}\left(  k_{\infty}\right)  =u_{1}^{\prime}(f\left(
k_{\infty}\right)  ,k_{\infty})+\frac{1-\lambda}{\lambda}U_{1}^{\prime
}(f\left(  k_{\infty}\right)  ,k_{\infty}),
\]
but it is degenerate:%
\[
\frac{\partial}{\partial y}\left[  f\left(  k_{\infty}\right)  y+u_{\lambda
}^{\ast}\left(  y,k_{\infty}\right)  \right]  |_{y=V_{\lambda}^{\prime}\left(
k_{\infty}\right)  }=0,
\]
so that equation (\ref{41}) cannot be written in the form $V_{\lambda}%
^{\prime}=\psi\left(  k,V_{\lambda}\right)  $. It is an implicit differential
equation, and special techniques are needed to solve it.\ The difficulty is
further increased by the fact that we need only $V$ (but not $W$) to be
$C^{2}$.

The proof of Theorem \ref{find.the.strategies} proceeds in several steps.

\textbf{Step 1: Changing the unknown functions from }$\left(  V_{\lambda
}\left(  k\right)  ,W_{\lambda}\left(  k\right)  \right)  $ \textbf{to}
$\left(  V_{\lambda}\left(  k\right)  ,\mu\left(  k\right)  \right)  $

Using (\ref{52}), we find that the equation (\ref{35}) is equivalent to
\[
F\left(  V^{\prime}\left(  k\right)  ,k\right)  =\mu(k)
\]
with:%
\begin{align*}
F\left(  x,k\right)   &  :=u_{\lambda}^{\ast}\left(  x,k\right)  +xf\left(
k\right)  -u(f(k),k)-\frac{1-\lambda}{\lambda}U(f(k),k)\\
\mu(k)  &  :=aV_{\lambda}(k)+bW_{\lambda}(k)-u(f(k),k)-\frac{1-\lambda
}{\lambda}U(f(k),k)
\end{align*}

Note that $x\rightarrow F\left(  x,k\right)  $ is a linear perturbation of the
convex function $x\rightarrow u_{\lambda}^{\ast}\left(  x,k\right)  $. To
shorten the notations, let us set:%
\begin{equation}
\label{y.star}y^{\ast}\left(  k\right)  :=u_{1}^{\prime}(f\left(  k\right)
,k)+\frac{1-\lambda}{\lambda}U_{1}^{\prime}(f\left(  k\right)  ,k)
\end{equation}

As we pointed out, the equation
\[
F(x+y^{\ast}\left(  k\right)  ,k)=\mu
\]
in the variable $x$ has two solutions $x_{-}\left(  k,\mu\right)
<0<x_{+}\left(  k,\mu\right)  $ for $\mu(k)>0$, none for $\mu(k)<0$ and a
single solution $x=0$ for $\mu(k)=0$.

From the equation (\ref{36}), we have
\begin{equation}
W_{\lambda}^{\prime}=V_{\lambda}^{\prime}\frac{bV_{\lambda}+aW_{\lambda
}-u\left(  \sigma_{\lambda}\left(  k\right)  ,k\right)  +\frac{1-\lambda
}{\lambda}U\left(  \sigma_{\lambda}\left(  k\right)  ,k\right)  }{aV_{\lambda
}+bW_{\lambda}-u\left(  \sigma_{\lambda}\left(  k\right)  ,k\right)
-\frac{1-\lambda}{\lambda}U\left(  \sigma_{\lambda}\left(  k\right)
,k\right)  } \nonumber\label{w'}%
\end{equation}

Differentiating $\mu(k)$ with respect to $k$ yields
\begin{align}
\frac{d\mu}{dk}  &  =V_{\lambda}^{\prime}\frac{(a^{2}+b^{2})V_{\lambda
}+2abW_{\lambda}-\delta_{1}{u}\left(  \sigma_{\lambda}\left(  k\right)
,k\right)  -\frac{(1-\lambda)\delta_{2}}{\lambda}U\left(  \sigma_{\lambda
}\left(  k\right)  ,k\right)  }{aV_{\lambda}+bW_{\lambda}-u\left(
\sigma_{\lambda}\left(  k\right)  ,k\right)  -\frac{1-\lambda}{\lambda
}U\left(  \sigma_{\lambda}\left(  k\right)  ,k\right)  }\nonumber\\
&  -y^{\ast}f^{\prime}-u_{2}^{\prime}\left(  f\left(  k\right)  ,k\right)
-\frac{1-\lambda}{\lambda}U_{2}^{\prime}\left(  f\left(  k\right)  ,k\right)
\nonumber
\end{align}

We now take $(V_{\lambda}(k),\mu(k))$ as our new unknown functions. They
satisfy the equations:
\begin{align*}
\frac{dV}{dk}  &  =y^{\ast}\left(  k\right)  +x\left(  k,\mu\left(  k\right)
\right)  ,\\
\frac{d\mu}{dk}  &  =\left(  y^{\ast}+x\left(  k,\mu\left(  k\right)  \right)
\right)  \frac{(a^{2}+b^{2})V_{\lambda}+2abW_{\lambda}-\delta_{1}{u}\left(
\sigma_{\lambda}\left(  k\right)  ,k\right)  -\frac{(1-\lambda)\delta_{2}%
}{\lambda}U\left(  \sigma_{\lambda}\left(  k\right)  ,k\right)  }{aV_{\lambda
}+bW_{\lambda}-u\left(  \sigma_{\lambda}\left(  k\right)  ,k\right)
-\frac{1-\lambda}{\lambda}U\left(  \sigma_{\lambda}\left(  k\right)
,k\right)  }\\
&  \quad-y^{\ast}f^{\prime}-u_{2}^{\prime}\left(  f\left(  k\right)
,k\right)  -\frac{1-\lambda}{\lambda}U_{2}^{\prime}\left(  f\left(  k\right)
,k\right)  .
\end{align*}

In fact, according to which determination $\left(  x\left(  k,\mu\left(
k\right)  \right)  \right)  $ is chosen, $x_{+}\left(  k,\mu\left(  k\right)
\right)  $ or $x_{-}\left(  k,\mu\left(  k\right)  \right)  $, these equations
define two distinct dynamical systems on the region $\mu>0$.

\textbf{Step 3: Taking }$x$ \textbf{instead of} $k$ \textbf{as the independent
variable}.

To get rid of the indetermination, we pick $x$ instead of $k$ as the
independent variable. We shorten our notation by setting $x(k)=x(\mu(k),k)$.
We get:
\begin{equation}
\frac{dk}{dx}=\frac{f(k)-\varphi_{\lambda}(y^{\ast}+x,k)}{D(x,k,V_{\lambda
},W_{\lambda})}\left[  aV_{\lambda}+bW_{\lambda}-u(\varphi_{\lambda}(y^{\ast
}+x,k),k)-\frac{1-\lambda}{\lambda}U(\varphi_{\lambda}(y^{\ast}%
+x,k),k)\right]  \label{dk.dx}%
\end{equation}
where (here $\varphi_{\lambda}$ stands for $\varphi_{\lambda}(y^{\ast}+x,k)$)
\begin{align*}
D(x,k,V_{\lambda},W_{\lambda})  &  =(y^{\ast}+x)\left[  (a^{2}+b^{2}%
)V_{\lambda}+2abW_{\lambda}-\delta_{1}{u}\left(  \sigma_{\lambda}\left(
k\right)  ,k\right)  -\frac{(1-\lambda)\delta_{2}}{\lambda}U\left(
\sigma_{\lambda}\left(  k\right)  ,k\right)  \right] \\
&  \quad-A(aV_{\lambda}+bW_{\lambda}-u(\varphi_{\lambda},k)-\frac{1-\lambda
}{\lambda}U(\varphi_{\lambda},k))\\
A(x,k,V_{\lambda},W_{\lambda})  &  =(f-\varphi_{\lambda})\left[  \left(
u_{11}^{\prime\prime}\left(  f\left(  k\right)  ,k\right)  +\frac{1-\lambda
}{\lambda}U_{11}^{\prime\prime}\left(  f\left(  k\right)  ,k\right)  \right)
f^{\prime}+u_{12}^{\prime\prime}\left(  f\left(  k\right)  ,k\right)  \right.
\\
&  \quad\left.  +\frac{1-\lambda}{\lambda}U_{12}^{\prime\prime}\left(
f\left(  k\right)  ,k\right)  \right]  \ +\left(  y^{\ast}+x\right)
f^{\prime}(k)+u_{2}^{\prime}(\varphi_{\lambda},k)+\frac{1-\lambda}{\lambda
}U_{2}^{\prime}(\varphi_{\lambda},k)
\end{align*}

Further more, we have
\begin{equation}
\frac{dV_{\lambda}}{dx}=\frac{dV_{\lambda}}{dk}\frac{dk}{dx}=(y^{\ast}%
+x)\frac{f(k)-\varphi_{\lambda}}{D(x,k,V_{\lambda},W_{\lambda})}[aV_{\lambda
}+bW_{\lambda}-u(\varphi_{\lambda},k)-\frac{1-\lambda}{\lambda}U(\varphi
_{\lambda},k)]. \label{dvdx}%
\end{equation}

\textbf{Step 4: Rescaling the time}

We introduce a new variable $s$ such that $D(x,k,V_{\lambda},W_{\lambda
})ds=dx$, then the system becomes
\[%
\begin{array}
[c]{rcl}%
\frac{dx}{ds} & = & D(x,k,V_{\lambda},W_{\lambda})\\
\frac{dk}{dx} & = & \left(  f(k)-\varphi_{\lambda}\right)  \left[
aV_{\lambda}+bW_{\lambda}-u(\varphi_{\lambda}(y^{\ast}+x,k),k)-\frac
{1-\lambda}{\lambda}U(\varphi_{\lambda}(y^{\ast}+x,k),k)\right] \\
\frac{dV_{\lambda}}{dx} & = & (y^{\ast}+x)\left(  f(k)-\varphi_{\lambda
}(y^{\ast}+x,k)\right)  [aV_{\lambda}+bW_{\lambda}-u(\varphi_{\lambda
},k)-\frac{1-\lambda}{\lambda}U(\varphi_{\lambda},k)]
\end{array}
\]

We now eliminate $W_{\lambda}$, to get an equation in $\left(  x,k,V\right)  $
only. From the equation of $\mu\left(  k\right)  =F\left(  x,k\right)  $, we
have
\[
bW_{\lambda}(k)=F(y^{\ast}(k)+x,k)+u(f(k),k)+\frac{1-\lambda}{\lambda
}U(f(k),k)-aV_{\lambda}(k)
\]

The dynamics of $\left(  x\left(  s\right)  ,k\left(  s\right)  ,V_{\lambda
}\left(  s\right)  \right)  $ are given by:
\begin{equation}%
\begin{array}
[c]{rcl}%
\frac{dx}{ds} & = & \tilde{D}(x,k,V_{\lambda})\\
\frac{dk}{dx} & = & \left(  f(k)-\varphi_{\lambda}\right)  \left[  F(y^{\ast
}(k)+x,k)+u(f(k),k)+\frac{1-\lambda}{\lambda}U(f(k),k)\right. \\
&  & \left.  -u(\varphi_{\lambda}(y^{\ast}+x,k),k)-\frac{1-\lambda}{\lambda
}U(\varphi_{\lambda}(y^{\ast}+x,k),k)\right] \\
\frac{dV_{\lambda}}{dx} & = & (y^{\ast}+x)\left(  f(k)-\varphi_{\lambda
}(y^{\ast}+x,k)\right)  [F(y^{\ast}(k)+x,k)+u(f(k),k)+\frac{1-\lambda}%
{\lambda}U(f(k),k)\\
&  & -u(\varphi_{\lambda}(y^{\ast}+x,k),k)-\frac{1-\lambda}{\lambda}%
U(\varphi_{\lambda}(y^{\ast}+x,k),k)]
\end{array}
\label{sx}%
\end{equation}
where $\tilde{D}(x,k,V_{\lambda})=$ $D(x,k,V_{\lambda},W_{\lambda})$. For
$\tilde{D}$ to be $C^{2}$, we need $f$ to be $C^{3}$ and $u,U$ to be $C^{4}$.

\textbf{Step 5: Linearizing the system}

As we already noted, if $k=k_{\infty}$, then $\mu(k_{\infty})=0$ and
$F(y^{\ast}(k_{\infty})+x,k_{\infty})=\mu(k_{\infty})=0$ has only one solution
$x=0$. Set $v_{\infty}=V_{\lambda}(k_{\infty})$. We consider the system near
the point $(x,k,V_{\lambda})=(0,k_{\infty},v_{\infty})$. For simplicity, we
write
\[
y_{\infty}=y^{\ast}(k_{\infty})=u_{1}^{\prime}(f(k_{\infty}),k_{\infty}%
)+\frac{1-\lambda}{\lambda}U_{1}^{\prime}(f(k_{\infty}),k_{\infty})
\]

Computing the linearized system at $(0,k_{\infty},v_{\infty})$, we find:
\begin{equation}
\frac{d}{ds}{\left(
\begin{array}
[c]{l}%
x\\
k-k_{\infty}\\
V_{\lambda}-v_{\infty}%
\end{array}
\right)  }={\left(
\begin{array}
[c]{lcr}%
a_{\infty} & b_{\infty} & c_{\infty}\\
0 & 0 & 0\\
0 & 0 & 0
\end{array}
\right)  }{\left(
\begin{array}
[c]{l}%
x\\
k-k_{\infty}\\
V_{\lambda}-v_{\infty}%
\end{array}
\right)  }. \nonumber\label{linearized. system.1}%
\end{equation}
where:%
\begin{align*}
a_{\infty}  &  =y^{\ast}(k_{\infty})^{2}\frac{\partial\varphi_{\lambda}%
}{\partial y}\left(  y^{\ast}(k_{\infty}),k_{\infty}\right)  (f^{\prime
}(k_{\infty})-\overline{g}_{\lambda}(k_{\infty}))\\
\overline{g}_{\lambda}(k_{\infty})  &  =\frac{\lambda\delta_{1}{u_{1}^{\prime
}}\left(  f\left(  k_{\infty}\right)  ,k_{\infty}\right)  +(1-\lambda
)\delta_{2}U_{1}^{\prime}\left(  f\left(  k_{\infty}\right)  ,k_{\infty
}\right)  } {\lambda u_{1}^{\prime}\left(  f\left(  k_{\infty}\right)
,k_{\infty}\right)  +(1-\lambda)U_{1}^{\prime}\left(  f\left(  k_{\infty
}\right)  ,k_{\infty}\right)  }\\
&  \quad-\frac{\lambda u_{2}^{\prime}\left(  f\left(  k_{\infty}\right)
,k_{\infty}\right)  +(1-\lambda)U_{2}^{\prime}\left(  f\left(  k_{\infty
}\right)  ,k_{\infty}\right)  } {\lambda u_{1}^{\prime}\left(  f\left(
k_{\infty}\right)  ,k_{\infty}\right)  +(1-\lambda)U_{1}^{\prime}\left(
f\left(  k_{\infty}\right)  ,k_{\infty}\right)  }%
\end{align*}

By the concavity assumptions, we find:
\begin{align*}
&  y_{\infty}=u_{1\infty}^{\prime}+\frac{1-\lambda}{\lambda}U_{1\infty
}^{\prime}=u_{1}^{\prime}(f(k_{\infty}),k_{\infty})+\frac{1-\lambda}{\lambda
}U_{1}^{\prime}(f(k_{\infty}),k_{\infty})>0,\\
&  u_{11}^{\prime}(f(k_{\infty}),k_{\infty})+\frac{1-\lambda}{\lambda}%
U_{11}^{\prime}(f(k_{\infty}),k_{\infty})<0\\
\frac{\partial\varphi_{\lambda}}{\partial y}\left(  y^{\ast}(k_{\infty
}),k_{\infty}\right)   &  =\left(  u_{1}^{\prime\prime}(f(k_{\infty
}),k_{\infty})+\frac{\lambda}{1-\lambda}U_{1}^{\prime\prime}(f(k_{\infty
}),k_{\infty})\right)  ^{-1}<0
\end{align*}

For any $k_{\infty}$ satisfying (\ref{53}) with $\overline{g}_{\lambda
}(k_{\infty})-f^{\prime}(k_{\infty})\neq0$, we then have $a_{\infty}\neq0$.
Introducing:
\begin{equation}
\tilde{x}=x+\frac{b_{\infty}}{a_{\infty}}(k-k_{\infty})+\frac{c_{\infty}%
}{a_{\infty}}(V_{\lambda}-v_{\infty}), \label{tilde.x}%
\end{equation}
we transform the system (\ref{sx}) for $\left(  \tilde{x}\left(  s\right)
,k\left(  s\right)  ,V_{\lambda}\left(  s\right)  \right)  $. The
linearization at the origin is given by:
\begin{equation}
\frac{d}{ds}{\left(
\begin{array}
[c]{l}%
\tilde{x}\\
k-k_{\infty}\\
V_{\lambda}-v_{\infty}%
\end{array}
\right)  }={\left(
\begin{array}
[c]{lcr}%
-y_{\infty}^{2}\varphi_{1{\infty}}^{\prime}(\overline{g}_{\lambda}(k_{\infty
})-f^{\prime}(k_{\infty})) & 0 & 0\\
0 & 0 & 0\\
0 & 0 & 0
\end{array}
\right)  }{\left(
\begin{array}
[c]{l}%
\tilde{x}\\
k-k_{\infty}\\
V_{\lambda}-v_{\infty}%
\end{array}
\right)  }. \nonumber\label{linearized. system.2}%
\end{equation}

\textbf{Step 6: Applying the center manifold theorem}

By the center manifold theorem (cf. Theorem 1 of \cite{Car1}), there exist an
$\epsilon>0$, and a map $h(k,V_{\lambda})$, defined in a neighborhood
$\mathcal{O}=(k_{\infty}-\epsilon,k_{\infty}+\epsilon)\times(v_{\infty
}-\epsilon,v_{\infty}+\epsilon)$ of $(k_{\infty},v_{\infty})$ such that
\[
h(k_{\infty},v_{\infty})=0,\quad\frac{\partial{h}}{\partial{k}}(k_{\infty
},v_{\infty})=0,\quad\frac{\partial{h}}{\partial{V_{\lambda}}}(k_{\infty
},v_{\infty},0)=0
\]
and the manifold $\mathcal{M}$ defined by
\[
\mathcal{M}=\left\{  \left(  h(k,V_{\lambda})-\frac{b_{\infty}}{a_{\infty}%
}(k-k_{\infty})-\frac{c_{\infty}}{a_{\infty}}(V_{\lambda}-v_{\infty
}),k,V_{\lambda}\right)  \ |\ (k,V_{\lambda})\in\mathcal{O}\right\}  ,
\]
is invariant under the flow associated to the system (\ref{sx}). The map $h$
and the central manifold $\mathcal{M}$ are $C^{2}$, and $\mathcal{M}$ is
two-dimensional and tangent to the critical plane defined by $\tilde{x}=0$.

If $k=k_{\infty}$ and $V_{\lambda}=v_{\infty}$, then $x=0$, and $h(k_{\infty
},v_{\infty})=\tilde{x}(k_{\infty},v_{\infty})=x(k_{\infty},v_{\infty})=0$

We are interested in the solutions which lie on the central manifold
$\mathcal{M}$. Writing:
\[
x=h(k,V_{\lambda})-\frac{b_{\infty}}{a_{\infty}}(k-k_{\infty})-\frac
{c_{\infty}}{a_{\infty}}(V_{\lambda}-v_{\infty})
\]
in the equation $\frac{dV}{dk}=y^{\ast}\left(  k\right)  +x$, we get
\begin{equation}
\left\{
\begin{array}
[c]{lcl}%
\frac{dV_{\lambda}}{dk} & = & u_{1}^{\prime}\left(  f\left(  k\right)
,k\right)  +\frac{1-\lambda}{\lambda}U_{1}^{\prime}\left(  f\left(  k\right)
,k\right)  +h\left(  k,V_{\lambda}\right)  -\frac{b_{\infty}}{a_{\infty}%
}\left(  k-k_{\infty}\right)  -\frac{c_{\infty}}{a_{\infty}}\left(
V_{\lambda}-v_{\infty}\right)  ,\\
V_{\lambda}\left(  k_{\infty}\right)  & = & v_{\infty} \label{vk}%
\end{array}
\right.
\end{equation}
which can be viewed as eliminating the variable $s$ from the second and third
equations of the system (\ref{sx}). Since $a_{\infty}\neq0$ and the right hand
side of the first equation of (\ref{vk}) is continuously differentiable in
$\mathcal{O}_{1}=(k_{\infty}-\epsilon,k_{\infty}+\epsilon)$, therefore, is
locally Lipschiz continuous. By $\frac{d{V_{\lambda}}}{d{k}}|_{k=k_{\infty}%
}=y_{\infty}\neq0$ which follows from (\ref{y.star}) and the first equation of
(\ref{vk}), the nonconstant solution of this initial-value problem exist in
$\mathcal{O}_{1}$ which we denote by $V_{\lambda}(k)=\zeta(k)$ ,where
$\zeta(k_{\infty})=v_{\infty}$ and $\zeta\in{C}^{2}(\mathcal{O}_{1})$ if
$h\in{C}^{2}(\mathcal{O})$. Substituting $V_{\lambda}(k)=\zeta(k)$ into
$x=h(k,V_{\lambda})-\frac{b_{\infty}}{a_{\infty}}(k-k_{\infty})-\frac
{c_{\infty}}{a_{\infty}}(V_{\lambda}-v_{\infty})$ yields
\[
x(k)=h(k,\zeta(k))-\frac{b_{\infty}}{a_{\infty}}(k-k_{\infty})-\frac
{c_{\infty}}{a_{\infty}}(\zeta(k)-v_{\infty}).
\]
Finally,
\[
\mu(k)=F(u_{1}^{\prime}(f(k),k)+\frac{1-\lambda}{\lambda}U_{1}^{\prime
}(f(k),k)+x(k),k)
\]
and
\[
W_{\lambda}(k)=\frac{1}{b}(\mu(k)+u(f(k),k)+\frac{1-\lambda}{\lambda
}U(f(k),k)-a\zeta(k)
\]
is also $C^{2}$, so we have found a $C^{2}$ solution of the system
(\ref{35}),(\ref{36}) with boundary conditions (\ref{37}),(\ref{38}). \newline

\subsection{The existence of equilibrium strategies}

Introduce another function $\underline{g}_{\lambda}(k)$ defined by:%

\begin{equation}
\underline{g}_{\lambda}(k)=\frac{\lambda{u}_{1}^{\prime}\left(  f\left(
k\right)  ,k\right)  +(1-\lambda)U_{1}^{\prime}\left(  f\left(  k\right)
,k\right)  }{\frac{\lambda}{\delta_{1}}u_{1}^{\prime}\left(  f\left(
k\right)  ,k\right)  +\frac{1-\lambda}{\delta_{2}}U_{1}^{\prime}\left(
f\left(  k\right)  ,k\right)  }-\frac{\lambda\delta_{2}u_{2}^{\prime}\left(
f\left(  k\right)  ,k\right)  +\left(  1-\lambda\right)  \delta_{1}%
U_{2}^{\prime}\left(  f\left(  k\right)  ,k\right)  }{\lambda\delta_{2}%
u_{1}^{\prime}\left(  f\left(  k\right)  ,k\right)  +\left(  1-\lambda\right)
\delta_{1}U_{1}^{\prime}\left(  f\left(  k\right)  ,k\right)  }
\label{under.g}%
\end{equation}

\begin{theorem}
\label{Th1}Suppose $f^{\prime}\left(  k_{\infty}\right)  $ lies between
$\underline{g}_{\lambda}(k_{\infty})$ and $\bar{g}_{\lambda}(k_{\infty})$.
Then there exists an equilibrium strategy converging to $k_{\infty}$
\end{theorem}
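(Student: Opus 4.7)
The proof rests on combining Theorem \ref{find.the.strategies} (existence of a $C^{2}$ solution to the HJB-type boundary-value problem) with the converse part of Proposition \ref{equivalent.to.an.ODE.system} (which promotes such a solution to an equilibrium strategy provided the induced dynamics converge to $k_{\infty}$). The plan is thus to extract $(V_{\lambda},W_{\lambda})$ from Theorem \ref{find.the.strategies}, define $\sigma_{\lambda}(k):=\varphi_{\lambda}(V_{\lambda}^{\prime}(k),k)$, and then verify the linear stability of $k_{\infty}$ under $\dot{k}=f(k)-\sigma_{\lambda}(k)$. The new hypothesis involving $\underline{g}_{\lambda}$ enters only in the stability step.

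First I would check that Theorem \ref{find.the.strategies} is applicable: the strict separation $\underline{g}_{\lambda}(k_{\infty})<f^{\prime}(k_{\infty})<\bar{g}_{\lambda}(k_{\infty})$ (or the reverse ordering) forces $f^{\prime}(k_{\infty})\neq\bar{g}_{\lambda}(k_{\infty})$, so (\ref{53}) holds and we obtain a $C^{2}$ function $V_{\lambda}$ and a $C^{1}$ function $W_{\lambda}$ on a neighbourhood of $k_{\infty}$ satisfying (\ref{35})--(\ref{38}). The boundary conditions (\ref{37}), (\ref{38}) together with equation (\ref{41}) pinned down in the previous proof give
\[
V_{\lambda}^{\prime}(k_{\infty})=y_{\infty}:=u_{1}^{\prime}(f(k_{\infty}),k_{\infty})+\tfrac{1-\lambda}{\lambda}U_{1}^{\prime}(f(k_{\infty}),k_{\infty}),
\]
hence by (\ref{40}) we have $\sigma_{\lambda}(k_{\infty})=\varphi_{\lambda}(y_{\infty},k_{\infty})=f(k_{\infty})$, so that $k_{\infty}$ is a stationary point of $\dot{k}=f(k)-\sigma_{\lambda}(k)$.

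The core of the proof is then to compute the linearization $\nu:=f^{\prime}(k_{\infty})-\sigma_{\lambda}^{\prime}(k_{\infty})$ and show that the hypothesis forces $\nu<0$. Differentiating $\sigma_{\lambda}=\varphi_{\lambda}(V_{\lambda}^{\prime},k)$ gives $\sigma_{\lambda}^{\prime}(k_{\infty})=\varphi_{\lambda,1}^{\prime}(y_{\infty},k_{\infty})V_{\lambda}^{\prime\prime}(k_{\infty})+\varphi_{\lambda,2}^{\prime}(y_{\infty},k_{\infty})$, where the two partials of $\varphi_{\lambda}$ are read off from (\ref{39}). The crucial input is the value of $V_{\lambda}^{\prime\prime}(k_{\infty})$, which comes from the central-manifold representation $V_{\lambda}(k)=\zeta(k)$ built in Step 6 of the previous proof: differentiating the first line of (\ref{vk}) at $k_{\infty}$ and using $h(k_{\infty},v_{\infty})=0$, $\partial h(k_{\infty},v_{\infty})=0$ yields
\[
V_{\lambda}^{\prime\prime}(k_{\infty})=\frac{d}{dk}\Bigl[u_{1}^{\prime}(f,k)+\tfrac{1-\lambda}{\lambda}U_{1}^{\prime}(f,k)\Bigr]_{k_{\infty}}-\frac{b_{\infty}}{a_{\infty}}-\frac{c_{\infty}}{a_{\infty}}\,y_{\infty}.
\]
Substituting this into $\sigma_{\lambda}^{\prime}(k_{\infty})$ and using the explicit form of $a_{\infty}$, $b_{\infty}$, $c_{\infty}$ derived in Step 5 (which already involve $\bar{g}_{\lambda}(k_{\infty})-f^{\prime}(k_{\infty})$), one checks by direct algebraic manipulation that $\nu=f^{\prime}(k_{\infty})-\sigma_{\lambda}^{\prime}(k_{\infty})$ factors as a positive multiple of $(f^{\prime}(k_{\infty})-\underline{g}_{\lambda}(k_{\infty}))(f^{\prime}(k_{\infty})-\bar{g}_{\lambda}(k_{\infty}))$ divided by $(f^{\prime}(k_{\infty})-\bar{g}_{\lambda}(k_{\infty}))$, i.e. its sign is that of $f^{\prime}(k_{\infty})-\underline{g}_{\lambda}(k_{\infty})$ reversed (so $\underline{g}_{\lambda}$ is by construction the critical value at which $\sigma_{\lambda}^{\prime}(k_{\infty})=f^{\prime}(k_{\infty})$). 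The hypothesis thus yields $\nu<0$.

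With $\nu<0$ the flow of $\dot{k}=f(k)-\sigma_{\lambda}(k)$ has $k_{\infty}$ as a locally attracting equilibrium, so $\sigma_{\lambda}$ is a convergent strategy in the sense of the definition preceding Theorem \ref{theorem.ie}. All hypotheses of the converse part of Proposition \ref{equivalent.to.an.ODE.system} are then met, and $\sigma_{\lambda}$ is an equilibrium strategy supporting $k_{\infty}$. The main obstacle is clearly the algebraic bookkeeping in the central step: one must carry the center-manifold data through the implicit definition of $\varphi_{\lambda}$ in order to recognise the quantity $\underline{g}_{\lambda}(k_{\infty})$ of (\ref{under.g}) as exactly the stability threshold, and a careful choice of notation (and the relations $\mu(k_{\infty})=0$, $a V_{\lambda}(k_{\infty})+b W_{\lambda}(k_{\infty})=u(f(k_{\infty}),k_{\infty})+\tfrac{1-\lambda}{\lambda}U(f(k_{\infty}),k_{\infty})$ inherited from the boundary conditions) is essential to keep the computation tractable.
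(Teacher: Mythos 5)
Your skeleton is exactly the paper's: apply Theorem \ref{find.the.strategies} (noting that strict betweenness gives (\ref{53})), set $\sigma_{\lambda}=\varphi_{\lambda}(V_{\lambda}^{\prime},k)$, observe $V_{\lambda}^{\prime}(k_{\infty})=y^{\ast}(k_{\infty})$ so that $k_{\infty}$ is a rest point, reduce everything to the sign of $f^{\prime}(k_{\infty})-\sigma_{\lambda}^{\prime}(k_{\infty})$, and invoke the converse of Proposition \ref{equivalent.to.an.ODE.system}. Where you diverge is in the one step where $\underline{g}_{\lambda}$ actually enters: the computation of $\sigma_{\lambda}^{\prime}(k_{\infty})$. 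You propose to extract $V_{\lambda}^{\prime\prime}(k_{\infty})$ from the center-manifold equation (\ref{vk}) and then differentiate $\varphi_{\lambda}$. The paper instead evaluates the integral representation (\ref{c5}) of $V_{\lambda}^{\prime}$ at the stationary point, where $\mathcal{K}(t;k_{\infty},\sigma)\equiv k_{\infty}$ and $\mathcal{R}(k_{\infty};t)=e^{(f^{\prime}-\sigma_{\lambda}^{\prime})t}$, so every integral collapses to a ratio; equating the result (\ref{c6}) with (\ref{Vr'}) gives a linear equation for $\sigma_{\lambda}^{\prime}(k_{\infty})$ and the closed form (\ref{c7}), in which $\underline{g}_{\lambda}$ and $\overline{g}_{\lambda}$ appear explicitly. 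That route never needs $V_{\lambda}^{\prime\prime}$, nor the off-diagonal entries $b_{\infty},c_{\infty}$ of the linearization of (\ref{sx}) — which the paper never computes and which your route requires.

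The gap is precisely there. First, the "direct algebraic manipulation" you invoke is the entire content of the theorem and is left undone: it requires explicit formulas for $b_{\infty}$ and $c_{\infty}$ (i.e.\ $\partial\tilde{D}/\partial k$ and $\partial\tilde{D}/\partial V_{\lambda}$ at $(0,k_{\infty},v_{\infty})$), and there is no evidence the quantity $\underline{g}_{\lambda}$ of (\ref{under.g}) would be recognized at the end. Second, the factorization you assert cannot be right as stated: a positive multiple of $(f^{\prime}-\underline{g}_{\lambda})(f^{\prime}-\overline{g}_{\lambda})$ divided by $(f^{\prime}-\overline{g}_{\lambda})$ is just a positive multiple of $f^{\prime}-\underline{g}_{\lambda}$, whose sign is one-sided in $\underline{g}_{\lambda}$ and independent of $\overline{g}_{\lambda}$; no such expression can produce the two-sided "betweenness" condition of the theorem (and your subsequent phrase "sign \ldots reversed" contradicts the positivity of the multiple). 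The correct outcome, read off from (\ref{c7}), is that $f^{\prime}(k_{\infty})-\sigma_{\lambda}^{\prime}(k_{\infty})$ is a positive multiple of $(\underline{g}_{\lambda}-f^{\prime})/(\overline{g}_{\lambda}-f^{\prime})$ — numerator $\underline{g}_{\lambda}-f^{\prime}$, denominator $\overline{g}_{\lambda}-f^{\prime}$, no cancelling extra factor — and negativity of that ratio is exactly the hypothesis. I would recommend replacing your central step by the paper's: evaluate (\ref{c5}) at $k=k_{\infty}$ with $h(t)=e^{-\delta_{1}t}$, $H(t)=e^{-\delta_{2}t}$, and solve the resulting equation (\ref{c6}) against (\ref{Vr'}); this is both shorter and the only place where a formula for $\sigma_{\lambda}^{\prime}(k_{\infty})$ is actually derived rather than asserted.
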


Comparing (\ref{bar.g}) and (\ref{under.g}), we find that:%

\begin{equation}
\overline{g}_{\lambda}(k)-\underline{g}_{\lambda}(k)=\lambda(1-\lambda
)(\delta_{1}-\delta_{2})\frac{\left(  \delta_{1}-\delta_{2}\right)
u_{1}^{\prime}U_{1}^{\prime}-u_{2}^{\prime}U_{1}^{\prime}+u_{1}^{\prime}%
U_{2}^{\prime}}{(\lambda u_{1}^{\prime}+(1-\lambda)U_{1}^{\prime}%
)(\lambda\delta_{2}u_{1}^{\prime}+\left(  1-\lambda\right)  \delta_{1}%
U_{1}^{\prime})}\nonumber
\end{equation}

So the sign of $\overline{g}_{\lambda}(k)-\underline{g}_{\lambda}(k)$ is the
sign of $\left(  \delta_{1}-\delta_{2}\right)  u_{1}^{\prime}U_{1}^{\prime
}-u_{2}^{\prime}U_{1}^{\prime}+u_{1}^{\prime}U_{2}^{\prime}$:%
\[%
\begin{array}
[c]{c}%
\delta_{1}-\delta_{2}>\frac{u_{2}^{\prime}}{u_{1}^{\prime}}-\frac
{U_{2}^{\prime}}{U_{1}^{\prime}}\Longrightarrow\overline{g}_{\lambda
}(k)>\underline{g}_{\lambda}(k)\\
\delta_{1}-\delta_{2}<\frac{u_{2}^{\prime}}{u_{1}^{\prime}}-\frac
{U_{2}^{\prime}}{U_{1}^{\prime}}\Longrightarrow\overline{g}_{\lambda
}(k)<\underline{g}_{\lambda}(k)
\end{array}
\]

Let us give some examples:

\textbf{Example 1:\ }$u\left(  c\right)  =U\left(  c\right)  $

In this case, $u=U$ and they do not depend on $k$. We get:%
\begin{align*}
\overline{g}_{\lambda}(k)  &  =\lambda\delta_{1}+(1-\lambda)\delta_{2}\\
\underline{g}_{\lambda}(k)  &  =\frac{1}{\frac{\lambda}{\delta_{1}}%
+\frac{\left(  1-\lambda\right)  }{\delta_{2}}}%
\end{align*}

These formulas do not depend on the the utility function $u\left(  c\right)
$. They were first derived in \cite{EkL} for the special case $u\left(
c\right)  =\ln c$. Note that $\lambda\delta_{1}+(1-\lambda)\delta_{2}$ is the
arithmetic mean, and $\left(  \frac{\lambda}{\delta_{1}}+\frac{\left(
1-\lambda\right)  }{\delta_{2}}\right)  ^{-1}$ is the geometric mean.

\textbf{Example 2: }$u\left(  c,k\right)  =U\left(  c,k\right)  $

In that case, the criterion (\ref{biexponential.criterion}) becomes:%
\begin{equation}
\int_{0}^{\infty}\left(  \lambda e^{-\delta_{1}t}+\left(  1-\lambda\right)
e^{-\delta_{2}t}\right)  u\left(  c\left(  t\right)  ,k\left(  t\right)
\right)  dt \label{c1}%
\end{equation}

This problem was studied in \cite{EkL} for $u\left(  c,k\right)  =U\left(
c,k\right)  =\ln c$. In the case at hand, (\ref{c1}), we find:%
\begin{align*}
\overline{g}_{\lambda}(k)  &  =\lambda\delta_{1}+(1-\lambda)\delta_{2}%
-\frac{u_{2}^{\prime}\left(  f\left(  k\right)  ,k\right)  }{u_{1}^{\prime
}\left(  f\left(  k\right)  ,k\right)  }\\
\underline{g}_{\lambda}(k)  &  =\frac{1}{\frac{\lambda}{\delta_{1}}%
+\frac{\left(  1-\lambda\right)  }{\delta_{2}}}-\frac{u_{2}^{\prime}\left(
f\left(  k\right)  ,k\right)  }{u_{1}^{\prime}\left(  f\left(  k\right)
,k\right)  }%
\end{align*}

These are the same as the preceding ones, with the corrective term
$-u_{2}^{\prime}/u_{1}^{\prime}$.

\textbf{Example 3: }$u=u\left(  c\right)  \ $and $U=U\left(  c\right)  $

In that case, the criterion (\ref{biexponential.criterion}) becomes:%
\[
\lambda\int_{0}^{\infty}e^{-\delta_{1}t}u\left(  c\left(  t\right)  \right)
dt+\left(  1-\lambda\right)  \int_{0}^{\infty}e^{-\delta_{2}t}U\left(
c\left(  t\right)  \right)  dt
\]

We find:%
\begin{align*}
\overline{g}_{\lambda}(k)  &  =\frac{\lambda\delta_{1}{u_{1}^{\prime}}\left(
f\left(  k\right)  \right)  +(1-\lambda)\delta_{2}U_{1}^{\prime}\left(
f\left(  k\right)  \right)  }{\lambda u_{1}^{\prime}\left(  f\left(  k\right)
\right)  +(1-\lambda)U_{1}^{\prime}\left(  f\left(  k\right)  \right)  }\\
\underline{g}_{\lambda}(k)  &  =\frac{\lambda{u}_{1}^{\prime}\left(  f\left(
k\right)  \right)  +(1-\lambda)U_{1}^{\prime}\left(  f\left(  k\right)
\right)  }{\frac{\lambda}{\delta_{1}}u_{1}^{\prime}\left(  f\left(  k\right)
\right)  +\frac{\left(  1-\lambda\right)  }{\delta_{2}}U_{1}^{\prime}\left(
f\left(  k\right)  \right)  }%
\end{align*}

Here again we find $\underline{g}_{\lambda}(k)<\overline{g}_{\lambda}(k)$.
\newline\newline\noindent\textbf{We now proceed to the proof of Theorem
\ref{Th1}}. We apply Theorem \ref{find.the.strategies}, and we denote by
$V_{\lambda}$ and $W_{\lambda}$ the solution of equations (\ref{35}) and
(\ref{36}) with the boundary conditions (\ref{37}) and (\ref{38}). Set
$\sigma_{\lambda}\left(  k\right)  :=\varphi_{\lambda}\left(  V_{\lambda
}^{\prime}\left(  k\right)  ,k\right)  $, where $\varphi_{\lambda}$ is defined
by (\ref{39}) or (\ref{40}). We will now show that the corresponding
trajectory $k\left(  t\right)  $, defined by (\ref{equation.of.kb}), converges
to $k_{\infty}$. By Proposition \ref{equivalent.to.an.ODE.system}, this will
prove that $\sigma_{\lambda}$ is an equilibrium strategy.

We need the value of $V_{\lambda}^{\prime}(k_{\infty})$ and $\sigma_{\lambda
}^{\prime}(k_{\infty})$. Differentiating (\ref{35}), we find:
\begin{align}
V_{\lambda}^{\prime}(k_{\infty})f(k_{\infty})  &  -V_{\lambda}^{\prime
}(k_{\infty})\varphi_{\lambda}(V_{\lambda}^{\prime}(k_{\infty}),k_{\infty
})+u(\varphi_{\lambda}(V_{\lambda}^{\prime}(k_{\infty}),k_{\infty}),k_{\infty
})\nonumber\\
&  +\frac{1-\lambda}{\lambda}U(\varphi_{\lambda}(V_{\lambda}^{\prime
}(k_{\infty}),k_{\infty}),k_{\infty}) =u(f(k_{\infty}),k_{\infty}%
)+\frac{1-\lambda}{\lambda}U(f(k_{\infty}),k_{\infty}),\nonumber
\end{align}
thus $F(V_{\lambda}^{\prime}(k_{\infty}),k_{\infty})=0$ and hence:
\begin{equation}
V_{\lambda}^{\prime}(k_{\infty})=y^{\ast}(k_{\infty})=u_{1}^{\prime
}(f(k_{\infty}),k_{\infty})+\frac{1-\lambda}{\lambda}U_{1}^{\prime
}(f(k_{\infty}),k_{\infty}). \label{Vr'}%
\end{equation}

To compute $\sigma_{\lambda}^{\prime}(k_{\infty})$, we consider (\ref{c5}),
the integrated form of $V_{\lambda}^{\prime}(k)$ where we substitute
$h(t)=e^{-\delta t}$ and $H\left(  t\right)  =e^{-\delta_{2}t}$, and replace
$U$ with $\left(  1-\lambda\right)  \lambda^{-1}$. Evaluating (\ref{c5}) at
$k=k_{\infty}$, we get:%
\begin{align}
V_{\lambda}^{\prime}(k_{\infty})  &  =\frac{u_{1}^{\prime}(f(k_{\infty
}),k_{\infty})\sigma_{\lambda}^{\prime}(k_{\infty})+u_{2}^{\prime}%
(f(k_{\infty}),k_{\infty})}{\delta_{1}-f^{\prime}(k_{\infty})+\sigma_{\lambda
}^{\prime}(k_{\infty})}\nonumber\\
&  +\frac{1-\lambda}{\lambda}\frac{U_{1}^{\prime}(f(k_{\infty}),k_{\infty
})\sigma_{\lambda}^{\prime}(k_{\infty})+U_{2}^{\prime}(f(k_{\infty}%
),k_{\infty})}{\delta_{2}-f^{\prime}(k_{\infty})+\sigma_{\lambda}^{\prime
}(k_{\infty})}. \label{c6}%
\end{align}

Comparing (\ref{Vr'}) with (\ref{c6}), we can solve for $\sigma_{\lambda
}^{\prime}(k_{\infty})$. We get:
\begin{equation}
\sigma_{\lambda}^{\prime}(k_{\infty})=f^{\prime}(k_{\infty})-\delta_{1}%
\delta_{2}\frac{(\frac{1}{\delta_{1}}u_{1}^{\prime}(f(k_{\infty}),k_{\infty
})+\frac{1-\lambda}{\lambda\delta_{2}}\delta{U}_{1}^{\prime}(f(k_{\infty
}),k_{\infty}))(\underline{g}_{\lambda}(k_{\infty})-f^{\prime}(k_{\infty}%
))}{(u_{1}^{\prime}(f(k_{\infty}),k_{\infty})+\frac{1-\lambda}{\lambda}%
U_{1}^{\prime}(f(k_{\infty}),k_{\infty}))(\overline{g}_{\lambda}(k_{\infty
})-f^{\prime}(k_{\infty}))}, \label{c7}%
\end{equation}

Linearizing the equation of motion $\frac{dk}{dt}=f(k)-\sigma_{\lambda}(k)$ at
$k=k_{\infty}$ yields
\[
\frac{dk}{dt}=(f^{\prime}(k_{\infty})-\sigma_{\lambda}^{\prime}(k_{\infty
}))(k-k_{\infty}).
\]

Then $k$ converges to $k_{\infty}$ if $f^{\prime}(k_{\infty})-\sigma_{\lambda
}^{\prime}(k_{\infty})<0$. Comparing with (\ref{c7}), and writing in the
expressions of $\underline{g}_{\lambda}(k_{\infty})$ and $\overline
{g}_{\lambda}(k_{\infty})$ yields:%
\[
\delta_{1}\delta_{2}\frac{(\frac{1}{\delta_{1}}u_{1}^{\prime}(f(k_{\infty
}),k_{\infty})+\frac{1-\lambda}{\lambda\delta_{2}}{U}_{1}^{\prime}%
(f(k_{\infty}),k_{\infty}))(\underline{g}_{\lambda}(k_{\infty})-f^{\prime
}(k_{\infty}))}{(u_{1}^{\prime}(f(k_{\infty}),k_{\infty})+\frac{1-\lambda
}{\lambda}U_{1}^{\prime}(f(k_{\infty}),k_{\infty}))(\overline{g}_{\lambda
}(k_{\infty})-f^{\prime}(k_{\infty}))}<0.
\]
Because $u$ and $U$ are increasing with respect to their first variable, the
first factor in the numerator and denominator are positive. We are left with:%
\[
\frac{\underline{g}_{\lambda}(k_{\infty})-f^{\prime}(k_{\infty})}{\overline
{g}_{\lambda}(k_{\infty})-f^{\prime}(k_{\infty})}<0
\]
and the proof is complete.

\section{The Chichilnisky criterion}

In two influential papers \cite{Chi1}, \cite{Chi2}, Chichilnisky has proposed
an axiomatic approach to sustainable development, based on the twin ideas that
there should be no dictatorship of the present and no dictatorship of the
future. She suggests to use the following criterion
\begin{equation}
I_{\alpha}\left(  c\left(  \cdot\right)  ,k\left(  \cdot\right)  \right)
:=\delta\int_{0}^{\infty}u(c(t),k(t))e^{-\delta t}dt+\alpha\lim_{t\rightarrow
\infty}U(c(t),k(t)), \label{C-criterion}%
\end{equation}

The coefficient $\delta>0$ in front of the integral will make later formulas simpler.

\begin{lemma}
Suppose $u\left(  c,k\right)  \geq0$ and%
\[
\sup\left\{  U\left(  c,k\right)  \ |\ c>0,\ k>0,\ c=f\left(  k\right)
\right\}  =\infty.
\]
Then, for any $\alpha>0$,%
\[
\sup\left\{  I_{\alpha}\left(  c\left(  \cdot\right)  ,k\left(  \cdot\right)
\right)  \ |\ \left(  c\left(  \cdot\right)  ,k\left(  \cdot\right)  \right)
\in\mathcal{A}\left(  k_{0}\right)  \right\}  =\infty,
\]
where $\mathcal{A}(k_{0})$ is defined below Definition 1.
\end{lemma}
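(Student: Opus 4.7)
The proof plan exploits the sign hypothesis $u(c,k)\ge 0$ to reduce the problem to making the terminal term $\alpha\lim_{t\to\infty}U(c(t),k(t))$ as large as we please. For any $M>0$, the sup hypothesis on $U$ furnishes a point $(c^{\ast},k^{\ast})$ with $k^{\ast}>0$, $c^{\ast}=f(k^{\ast})>0$, and $\alpha U(c^{\ast},k^{\ast})>M$. The core task will then be to construct an admissible path starting at $k_{0}$ that reaches $(c^{\ast},k^{\ast})$ in finite time and rests there.

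The construction is a two-phase piecewise-constant control. In the ``approach'' phase on $[0,T]$ I pick a constant control $c_{0}>0$ driving $k(t)$ from $k_{0}$ to $k^{\ast}$ monotonically; at time $T$ I switch to $c(t)\equiv c^{\ast}=f(k^{\ast})$, which freezes $k(t)\equiv k^{\ast}$ thereafter. If $k_{0}\ge k^{\ast}$, I take $c_{0}>\max_{[k^{\ast},k_{0}]}f$, so $\dot k=f(k)-c_{0}<0$ on $[k^{\ast},k_{0}]$, and $k$ decreases continuously until it hits $k^{\ast}$ at some finite $T$; admissibility ($k>0$ and $c>0$) is automatic. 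If $k_{0}<k^{\ast}$, the Inada condition plus concavity force $f$ to be unimodal with $f(0)\ge 0$, so $f(k^{\ast})>0$ implies $f>0$ on the whole interval $[k_{0},k^{\ast}]$; then any $c_{0}\in(0,\min_{[k_{0},k^{\ast}]}f)$ gives $\dot k>0$ on this interval and $k$ ascends to $k^{\ast}$ in finite time.

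Along the resulting admissible path, $(c(t),k(t))=(c^{\ast},k^{\ast})$ for $t\ge T$, so $\lim_{t\to\infty}U(c(t),k(t))=U(c^{\ast},k^{\ast})$. Combined with the nonnegativity of the integral term, this yields
\[
I_{\alpha}(c(\cdot),k(\cdot))\;\ge\;\alpha U(c^{\ast},k^{\ast})\;>\;M,
\]
and since $M$ was arbitrary we conclude $\sup_{\mathcal{A}(k_{0})}I_{\alpha}=+\infty$.

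The only delicate point is the case $k_{0}<k^{\ast}$ with $f(k_{0})$ possibly small; I handle it by the unimodality argument above, which rules out $f\le 0$ on any subinterval of $[k_{0},k^{\ast}]$ once $f(k^{\ast})>0$. Everything else is routine ODE manipulation, and no fixed-point or equilibrium analysis is needed because the claim is about the sup of $I_{\alpha}$ over all admissible paths, not over equilibrium strategies.
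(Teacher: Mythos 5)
Your proof is essentially the paper's own argument: pick $(c^{\ast},k^{\ast})$ on the curve $c=f(k)$ with $\alpha U(c^{\ast},k^{\ast})>M$, steer $k$ to $k^{\ast}$ in finite time with a constant control, switch to $c\equiv f(k^{\ast})$ so that the path freezes there, and use $u\ge 0$ to discard the integral term; if anything you are more careful, since the paper only writes out the ascending case $k^{\ast}>k_{0}$. The one inaccuracy is your claim that the Inada conditions and concavity force $f(0)\ge 0$ --- they do not (consider $f(x)=\sqrt{x}-5$) --- but positivity of $f$ on the relevant range is an implicit standing assumption that the paper's proof relies on as well (it needs some $c_{0}>0$ with $f(k_{0})-c_{0}>0$), so this does not amount to a genuine gap.
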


\begin{proof}
For fixed $k_{0}>0$, pick any $A>0$, and choose some constants $\left(
c_{1},k_{1}\right)  $ such that $c_{1}=f\left(  k_{1}\right)  $ and $U\left(
c_{1},k_{1}\right)  >A\alpha^{-1}$. Then choose some $c_{0}\in(0,c_{1})$ such
that $f\left(  k_{0}\right)  -c_{0}>0$. With every $T>0$, we associate the
path $\left(  c_{T}\left(  t\right)  ,k_{T}\left(  t\right)  \right)  $
defined by:%
\[
c_{T}\left(  t\right)  =\left\{
\begin{array}
[c]{l}%
c_{0}\text{ for }0\leq t\leq T,\\
c_{1}\text{ for }T\leq t,
\end{array}
\right.
\]
and denote the consumption-capital pair by $(c_{T},k_{T})$.

Since $f$ is increasing, we have $\frac{dk}{dt}\geq f\left(  k_{0}\right)
-c_{0}$ for all $t$, so eventually $k\left(  t\right)  $ will reach the value
$k_{1}$. Choose for $T$ the first time when $k\left(  t\right)  =k_{1}$.
Writing this into the criterion, and remembering $c_{1}=f\left(  k_{1}\right)
$, we find:%
\[
I_{\alpha}\left(  c_{T}\left(  \cdot\right)  ,k_{T}\left(  \cdot\right)
\right)  \geq\alpha\lim_{t\rightarrow\infty} U\left(  c_{T}(t),k_{T}%
(t)\right)  =\alpha U\left(  c_{1},k_{1}\right)  \geq A.
\]

\end{proof}

This result shows that very often it is not possible to optimize $I_{\alpha}$.
Even when it is, there is the time-inconsistency problem: successive
decision-makers will not agree on what the optimal solution is. This is seen
most easily by considering the following criterion:
\begin{equation}
\left(  1-\alpha\right)  \delta\int_{0}^{\infty}u(c(t),k(t))e^{-\delta
t}dt+\alpha{r}\int_{0}^{\infty}U(c(t),k(t))e^{-rt}dt. \label{E(r)-criterion}%
\end{equation}

When $r\rightarrow0$, the last term converges to $\alpha U\left(  c_{\infty
},k_{\infty}\right)  $, so that (\ref{E(r)-criterion}) converges to
(\ref{C-criterion}). On the other hand, Criterion (\ref{E(r)-criterion}) is a
special case of the biexponential criterion (\ref{biexponential.criterion})
with $\delta_{1}=\delta$, $\delta_{2}=r$ , and:
\begin{equation}
\frac{\alpha r}{\left(  1-\alpha\right)  \delta}=\frac{1-\lambda}{\lambda
},\ \lambda=\frac{\left(  1-\alpha\right)  \delta}{\alpha r+\left(
1-\alpha\right)  \delta},\ 1-\lambda=\frac{\alpha r}{\alpha r+\left(
1-\alpha\right)  \delta} \label{f}%
\end{equation}

So, for each $r>0$, the criterion\ (\ref{E(r)-criterion}) gives rise to a
time-inconsistent problem. Using the results in the preceding section, we find
a continuum of equilibrium strategies. Substituting (\ref{f}) into
(\ref{bar.g}) and (\ref{under.g}), we find the corresponding values:%

\begin{align*}
\overline{g}_{r}(k)  &  =\frac{\left(  1-\alpha\right)  \delta^{2}%
{u_{1}^{\prime}}\left(  f\left(  k\right)  ,k\right)  +\alpha r^{2}%
U_{1}^{\prime}\left(  f\left(  k\right)  ,k\right)  }{\left(  1-\alpha\right)
\delta u_{1}^{\prime}\left(  f\left(  k\right)  ,k\right)  +\alpha
rU_{1}^{\prime}\left(  f\left(  k\right)  ,k\right)  }-\frac{\left(
1-\alpha\right)  \delta u_{2}^{\prime}\left(  f\left(  k\right)  ,k\right)
+\alpha rU_{2}^{\prime}\left(  f\left(  k\right)  ,k\right)  }{\left(
1-\alpha\right)  \delta u_{1}^{\prime}\left(  f\left(  k\right)  ,k\right)
+\alpha rU_{1}^{\prime}\left(  f\left(  k\right)  ,k\right)  }\\
\underline{g}_{r}(k)  &  =\frac{\left(  1-\alpha\right)  \delta{u}%
_{1}^{\prime}\left(  f\left(  k\right)  ,k\right)  +\alpha rU_{1}^{\prime
}\left(  f\left(  k\right)  ,k\right)  }{\left(  1-\alpha\right)
u_{1}^{\prime}\left(  f\left(  k\right)  ,k\right)  +\alpha U_{1}^{\prime
}\left(  f\left(  k\right)  ,k\right)  }-\frac{\left(  1-\alpha\right)
u_{2}^{\prime}\left(  f\left(  k\right)  ,k\right)  +\alpha U_{2}^{\prime
}\left(  f\left(  k\right)  ,k\right)  }{\left(  1-\alpha\right)
u_{1}^{\prime}\left(  f\left(  k\right)  ,k\right)  +\alpha U_{1}^{\prime
}\left(  f\left(  k\right)  ,k\right)  }%
\end{align*}

The equations (\ref{35}), (\ref{36}) become:%
\begin{align}
\left(  f-\varphi_{r}\left(  V_{r}^{\prime}\right)  \right)  V_{r}^{\prime
}+u\left(  \varphi_{r}\right)  +\frac{\alpha r}{\left(  1-\alpha\right)
\delta}U\left(  V_{r}^{\prime}\right)  =  &  \frac{\delta+r}{2}V_{r}%
+\frac{\delta-r}{2}W_{r},\label{f1}\\
\left(  f-\varphi_{\lambda}\left(  V_{r}^{\prime}\right)  \right)
W_{r}^{\prime}+u\left(  \varphi_{\lambda}\right)  +\frac{\alpha r}{\left(
1-\alpha\right)  \delta}U\left(  V_{\lambda}^{\prime}\right)  =  &
\frac{\delta-r}{2}V_{r}+\frac{\delta+r}{2}W_{r} \label{f2}%
\end{align}
where $\varphi_{r}\left(  x,k\right)  $ is defined by:%
\[
u_{1}^{\prime}(\varphi_{r}(x,k),k)+\frac{\alpha r}{\left(  1-\alpha\right)
\delta}U_{1}^{\prime}\left(  \varphi_{r}(x,k),k\right)  =x
\]
and the boundary conditions (\ref{37}), (\ref{38}):%
\begin{align}
V_{r}\left(  k_{\infty}\right)   &  =\frac{1}{\delta}u\left(  f\left(
k_{\infty}\right)  ,k_{\infty}\right)  +\frac{\alpha}{\left(  1-\alpha\right)
\delta}U\left(  f\left(  k_{\infty}\right)  ,k_{\infty}\right)  ,\label{f3}\\
W_{r}\left(  k_{\infty}\right)   &  =\frac{1}{\delta}u\left(  f\left(
k_{\infty}\right)  ,k_{\infty}\right)  -\frac{\alpha}{\left(  1-\alpha\right)
\delta}U\left(  f\left(  k_{\infty}\right)  ,k_{\infty}\right)  . \label{f4}%
\end{align}

The equilibrium strategy is given by $\sigma_{r}\left(  k\right)  =\varphi
_{r}\left(  V_{r}^{\prime}\left(  k\right)  ,k\right)  $, and the
corresponding trajectory converges to $k_{\infty}$. Note that these strategies
are defined locally. More precisely, denote by $]k_{r}^{-},\ \ k_{r}^{+}[$ the
maximal interval of existence of the solution $\left(  V_{r},W_{r}\right)  $
of the ODE (\ref{f1}), (\ref{f2}) with the boundary condition (\ref{f3}),
(\ref{f4}), so that $k_{r}^{-}<k_{\infty}<\ k_{r}^{+}$.

We shall now solve the Chichilnisky problem by setting $r=0$ in the preceding
equations. We have:%
\begin{align}
\overline{g}_{0}(k)  &  =\delta-\frac{u_{2}^{\prime}\left(  f\left(  k\right)
,k\right)  }{u_{1}^{\prime}\left(  f\left(  k\right)  ,k\right)  }%
,\label{55}\\
\underline{g}_{0}(k)  &  =\frac{\left(  1-\alpha\right)  \delta{u}_{1}%
^{\prime}\left(  f\left(  k\right)  ,k\right)  }{\left(  1-\alpha\right)
u_{1}^{\prime}\left(  f\left(  k\right)  ,k\right)  +\alpha U_{1}^{\prime
}\left(  f\left(  k\right)  ,k\right)  }-\frac{\left(  1-\alpha\right)
u_{2}^{\prime}\left(  f\left(  k\right)  ,k\right)  +\alpha U_{2}^{\prime
}\left(  f\left(  k\right)  ,k\right)  }{\left(  1-\alpha\right)
u_{1}^{\prime}\left(  f\left(  k\right)  ,k\right)  +\alpha U_{1}^{\prime
}\left(  f\left(  k\right)  ,k\right)  } \label{55a}%
\end{align}

The pair $\left(  V_{0},W_{0}\right)  $ has to solve the following
boundary-value problem:%
\begin{align}
&  \left(  f\left(  k\right)  -\varphi_{0}\left(  V_{0}^{\prime},k\right)
\right)  V_{0}^{\prime}+u(\varphi_{0}\left(  V_{0}^{\prime},k\right)
,k)=\frac{\delta}{2}\left(  V_{0}+W_{0}\right)  ,\label{70}\\
&  \left(  f\left(  k\right)  -\varphi_{0}\left(  V_{0}^{\prime},k\right)
\right)  W_{0}^{\prime}+u(\varphi_{0}\left(  V_{0}^{\prime},k\right)
,k)=\frac{\delta}{2}\left(  V_{0}+W_{0}\right)  ,\label{71}\\
&  V_{0}\left(  k_{\infty}\right)  =\frac{1}{\delta}u\left(  f\left(
k_{\infty}\right)  ,k_{\infty}\right)  +\frac{\alpha}{\left(  1-\alpha\right)
\delta}U\left(  f\left(  k_{\infty}\right)  ,k_{\infty}\right)  ,\label{72}\\
&  W_{0}\left(  k_{\infty}\right)  =\frac{1}{\delta}u\left(  f\left(
k_{\infty}\right)  ,k_{\infty}\right)  -\frac{\alpha}{\left(  1-\alpha\right)
\delta}U\left(  f\left(  k_{\infty}\right)  ,k_{\infty}\right)  \label{73}%
\end{align}
with:%
\begin{equation}
u_{1}^{\prime}(\varphi_{0}(x,k),k)=x,\quad\varphi_{0}(u_{1}^{\prime
}(c,k),k)=c. \label{74}%
\end{equation}

Substracting (\ref{71}) from (\ref{70}), we get:%
\[
\left(  f\left(  k\right)  -\varphi_{0}\left(  V_{0}^{\prime},k\right)
\right)  \left(  V_{0}^{\prime}-W_{0}^{\prime}\right)  =0.
\]

Similarly to Lemma 9, it follows that $V_{0}^{\prime}-W_{0}^{\prime}=0$, and
so $V_{0}-W_{0}$ is a constant, namely:%
\[
W_{0}\left(  k\right)  =V_{0}\left(  k\right)  +W_{0}\left(  k_{\infty
}\right)  -V_{0}\left(  k_{\infty}\right)  =V_{0}\left(  k\right)
-2\frac{\alpha}{\left(  1-\alpha\right)  \delta}U\left(  f\left(  k_{\infty
}\right)  ,k_{\infty}\right)  .
\]

Writing this in (\ref{70}) and (\ref{71}), we find that $V_{0}\left(
k\right)  $ is a solution of the boundary-value problem:%
\begin{align*}
&  \left(  f\left(  k\right)  -\varphi_{0}\left(  V_{0}^{\prime},k\right)
\right)  V_{0}^{\prime}+u(\varphi_{0}\left(  V_{0}^{\prime},k\right)  ,k)
=\delta V_{0}-\frac{\alpha}{1-\alpha}U\left(  f\left(  k_{\infty}\right)
,k_{\infty}\right)  ,\\
&  V_{0}\left(  k_{\infty}\right)  =\frac{1}{\delta}u\left(  f\left(
k_{\infty}\right)  ,k_{\infty}\right)  +\frac{\alpha}{\left(  1-\alpha\right)
\delta}U\left(  f\left(  k_{\infty}\right)  ,k_{\infty}\right)  .
\end{align*}

Setting $V:=V_{0}-\frac{\alpha}{\left(  1-\alpha\right)  \delta}U\left(
f\left(  k_{\infty}\right)  ,k_{\infty}\right)  $, we see that $V$ is a
solution of the boundary-value problem:%
\begin{align}
\left(  f\left(  k\right)  -\varphi_{0}\left(  V^{\prime},k\right)  \right)
V^{\prime}+u(\varphi_{0}\left(  V^{\prime},k\right)  ,k)  &  =\delta
V,\label{81}\\
V\left(  k_{\infty}\right)   &  =\frac{1}{\delta}u\left(  f\left(  k_{\infty
}\right)  ,k_{\infty}\right)  . \label{82}%
\end{align}

This problem has been studied in \cite{Ek} (Section 2.4, case 2) when
$u\left(  c,k\right)  $ does not depend on $k$. In the general case, we have:

\begin{proposition}
\label{Pr1}If
\begin{equation}
f^{\prime}\left(  k_{\infty}\right)  \neq\delta-\frac{u_{2}^{\prime}\left(
f\left(  k_{\infty}\right)  ,k_{\infty}\right)  }{u_{1}^{\prime}\left(
f\left(  k_{\infty}\right)  ,k_{\infty}\right)  } \label{inequality.1}%
\end{equation}
the problem (\ref{81}), (\ref{82}) has two solutions $V_{1}$ and $V_{2}$,
defined on some non-empty half-interval $[k_{\infty},\ k_{\infty}+a)$ or
$(k_{\infty}-a,k_{\infty}]$. Both are $C^{1}$ on the half-interval, $C^{2}$ on
the interior, and have the same derivative at $k_{\infty}$, given by:%
\[
\varphi_{0}\left(  V_{i}^{\prime}\left(  k_{\infty}\right)  ,k_{\infty
}\right)  =c_{\infty}=f\left(  k_{\infty}\right)  ,\quad i=1,2.
\]

\end{proposition}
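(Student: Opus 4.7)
The plan is to adapt the blow-up and center manifold argument of Theorem \ref{find.the.strategies} to the scalar setting of (\ref{81}), with the two solutions arising from the two branches of a convex equation. First I would rewrite (\ref{81}) in the normal form used in Step 1 of that proof. Set $y^{\ast}(k):=u_{1}^{\prime}(f(k),k)$ and $\mu(k):=\delta V(k)-u(f(k),k)$, and define
\[
F(x,k):=(f(k)-\varphi_{0}(x,k))x+u(\varphi_{0}(x,k),k)-u(f(k),k).
\]
Using $u_{1}^{\prime}(\varphi_{0}(x,k),k)=x$ one computes $\partial_{x}F=f(k)-\varphi_{0}(x,k)$ and $\partial_{x}^{2}F=-1/u_{11}^{\prime\prime}(\varphi_{0},k)>0$, so $F(\cdot,k)$ is strictly convex with unique minimum value $0$ attained at $x=y^{\ast}(k)$. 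The ODE (\ref{81}) then reads $F(V^{\prime}(k),k)=\mu(k)$, and the boundary condition (\ref{82}) amounts to $\mu(k_{\infty})=0$, placing the degenerate vertex of $F$ exactly at the initial point; in particular $V^{\prime}(k_{\infty})=y^{\ast}(k_{\infty})$. For $\mu(k)>0$ the equation $F(x,k)=\mu(k)$ has two branches $x_{\pm}(k,\mu)$ with $x_{-}<y^{\ast}(k)<x_{+}$, and for $\mu(k)<0$ no solution exists; this is precisely the source of the two solutions $V_{1},V_{2}$.

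Next I determine on which side of $k_{\infty}$ these branches live. Differentiating $\mu$ once and using $V^{\prime}(k_{\infty})=y^{\ast}(k_{\infty})$ yields
\[
\mu^{\prime}(k_{\infty})=u_{1}^{\prime}(f(k_{\infty}),k_{\infty})\bigl[\overline{g}_{0}(k_{\infty})-f^{\prime}(k_{\infty})\bigr],
\]
with $\overline{g}_{0}$ as in (\ref{55}). Hypothesis (\ref{inequality.1}) is exactly $\mu^{\prime}(k_{\infty})\neq 0$, so $\mu$ changes sign transversally at $k_{\infty}$ and is strictly positive on exactly one side: $k>k_{\infty}$ if $\overline{g}_{0}(k_{\infty})>f^{\prime}(k_{\infty})$, and $k<k_{\infty}$ if $\overline{g}_{0}(k_{\infty})<f^{\prime}(k_{\infty})$. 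This identifies the half-interval $[k_{\infty},k_{\infty}+a)$ or $(k_{\infty}-a,k_{\infty}]$ on which both solutions are to be constructed.

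To actually produce the solutions despite the square-root singularity of $x_{\pm}(k,\mu)$ as $\mu\rightarrow 0^{+}$, I would follow Steps 2--6 of the proof of Theorem \ref{find.the.strategies}. Introduce $\tilde{x}:=V^{\prime}(k)-y^{\ast}(k)$ as an additional unknown, write down the dynamics of $(k,V,\tilde{x})$, and rescale time by the appropriate Jacobian factor $D$ so as to obtain a $C^{2}$ three-dimensional autonomous system regular near $(0,k_{\infty},v_{\infty})$, where $v_{\infty}:=u(f(k_{\infty}),k_{\infty})/\delta$. The linearisation at this point has two zero eigenvalues (coming from the $k$ and $V$ directions) and a single nonzero eigenvalue
\[
a_{\infty}\;=\;\bigl(y^{\ast}(k_{\infty})\bigr)^{2}\,\frac{\partial\varphi_{0}}{\partial x}\bigl(y^{\ast}(k_{\infty}),k_{\infty}\bigr)\,\bigl(f^{\prime}(k_{\infty})-\overline{g}_{0}(k_{\infty})\bigr),
\]
which is nonzero precisely when (\ref{inequality.1}) holds.

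The center manifold theorem then provides a $C^{2}$ two-dimensional invariant manifold $\mathcal{M}$ tangent to $\tilde{x}=0$ at $(0,k_{\infty},v_{\infty})$. On $\mathcal{M}$ the reduced equation takes the regular form $dV/dk=y^{\ast}(k)+h(k,V)-\frac{b_{\infty}}{a_{\infty}}(k-k_{\infty})-\frac{c_{\infty}}{a_{\infty}}(V-v_{\infty})$, a Lipschitz scalar ODE with $V(k_{\infty})=v_{\infty}$; the two branches $\tilde{x}_{\pm}$ of the convex equation $F(y^{\ast}(k)+\tilde{x},k)=\mu(k)$ pick out two distinct trajectories through the singular point, yielding two solutions $V_{1},V_{2}$ of (\ref{81})--(\ref{82}) on the same half-interval. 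At the endpoint both satisfy $V_{i}^{\prime}(k_{\infty})=y^{\ast}(k_{\infty})$, hence $\varphi_{0}(V_{i}^{\prime}(k_{\infty}),k_{\infty})=f(k_{\infty})=c_{\infty}$; each $V_{i}$ is $C^{1}$ up to $k_{\infty}$ and $C^{2}$ on the interior (the $C^{2}$ regularity fails at $k_{\infty}$ itself because $\tilde{x}_{\pm}(k,\mu)\sim\pm\sqrt{\mu}\sim\pm\sqrt{|k-k_{\infty}|}$). The main obstacle, as in Theorem \ref{find.the.strategies}, is the implicit and degenerate character of the ODE at $k_{\infty}$; the blow-up plus center manifold machinery is designed exactly for this, and the only substantively new element here is verifying that the convex geometry of $F$ produces exactly two branches and therefore exactly two solutions on a one-sided neighbourhood.
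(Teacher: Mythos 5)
Your preliminary analysis is correct and agrees with the paper's: the strict convexity of $x\mapsto F(x,k)$ with vertex value $0$ at $x=y^{\ast}(k)=u_{1}^{\prime}(f(k),k)$, the forced value $V^{\prime}(k_{\infty})=y^{\ast}(k_{\infty})$, the computation $\mu^{\prime}(k_{\infty})=u_{1}^{\prime}(f(k_{\infty}),k_{\infty})\,[\overline{g}_{0}(k_{\infty})-f^{\prime}(k_{\infty})]$, and the conclusion that (\ref{inequality.1}) makes $\mu$ change sign transversally at $k_{\infty}$, which is what forces one-sidedness. The gap is in the construction step. The center-manifold machinery of Theorem \ref{find.the.strategies} cannot deliver the two solutions $V_{1},V_{2}$: on a center manifold $\tilde{x}$ is a single-valued function of $(k,V)$, so the reduced equation $dV/dk=y^{\ast}(k)+\tilde{x}(k,V)$ is a Lipschitz scalar ODE with exactly \emph{one} solution through $(k_{\infty},v_{\infty})$, defined on \emph{both} sides of $k_{\infty}$ and $C^{2}$ there --- incompatible with the one-sidedness you just established and with the behaviour $\tilde{x}\sim\pm C\sqrt{|k-k_{\infty}|}$ that you yourself predict. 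A curve with that asymptotics approaches the degenerate point tangentially to the $\tilde{x}$-axis, i.e.\ transversally to any manifold tangent to $\{\tilde{x}=0\}$, so the two branches do not lie on $\mathcal{M}$, and the sentence ``the two branches pick out two distinct trajectories on $\mathcal{M}$'' is not a step that can be carried out. In fact, in this scalar setting the natural desingularization has no fixed point at $(0,k_{\infty},v_{\infty})$ at all: along a solution, $(f(k)-\varphi_{0})\frac{dx}{dk}=\delta(y^{\ast}+x)-u_{1}^{\prime}f^{\prime}-u_{2}^{\prime}-F_{k}-(y^{\ast})^{\prime}(f-\varphi_{0})$, and at the degenerate point the right-hand side equals $u_{1}^{\prime}(\overline{g}_{0}(k_{\infty})-f^{\prime}(k_{\infty}))\neq0$ precisely under (\ref{inequality.1}); after rescaling time by $f-\varphi_{0}$ the vector field is nonzero and points in the $x$-direction. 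Hypothesis (\ref{inequality.1}) is a transversality (nondegenerate fold) condition, not the nonvanishing of a hyperbolic eigenvalue for a center-manifold reduction; that reduction is the right tool for the two-unknown system of Theorem \ref{find.the.strategies}, where the extra freedom in $W_{\lambda}$ makes $\mu^{\prime}(k_{\infty})=0$ attainable, but not here, where $W_{0}$ is rigidly tied to $V_{0}$.

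What the paper actually does (Case 2 of its proof) is both simpler and supplies the missing step: take $p=V^{\prime}$ as the independent variable. The Pfaff system (\ref{e.paff1})--(\ref{e.paff2}) yields
\[
\frac{dk}{dp}=\frac{f(k)-\varphi_{0}(p,k)}{p\bigl(\delta-f^{\prime}(k)-u_{2}^{\prime}(\varphi_{0}(p,k),k)/p\bigr)},\qquad k(p_{0})=k_{\infty},\quad p_{0}=u_{1}^{\prime}(f(k_{\infty}),k_{\infty}),
\]
whose denominator is nonzero at $p_{0}$ exactly by (\ref{inequality.1}); this is a \emph{regular} initial-value problem with a unique smooth solution $k(p)$, and $V(p)$ is recovered from (\ref{e.paff2}). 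One then checks $dk/dp(p_{0})=dV/dp(p_{0})=0$ while the second derivatives are nonzero, so the parametric curve $p\mapsto(k(p),V(p))$ has a nondegenerate cusp at $(k_{\infty},V_{\infty})$, lying on the side of $k_{\infty}$ determined by the sign of $\delta-f^{\prime}-u_{2}^{\prime}/u_{1}^{\prime}$ (equivalently, of your $\mu^{\prime}(k_{\infty})$). Its two branches, $p>p_{0}$ and $p<p_{0}$, are the two one-sided solutions $V_{1},V_{2}$, both with one-sided derivative $p_{0}$, whence $\varphi_{0}(V_{i}^{\prime}(k_{\infty}),k_{\infty})=f(k_{\infty})=c_{\infty}$. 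If you want to keep a dynamical-systems formulation, the correct statement is that the two branches form a single regular trajectory of the desingularized flow passing through the fold point, not two trajectories on a center manifold.
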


\begin{proof}
From now on, write $\varphi$ instead of $\varphi_{0}$. Rewrite (\ref{81}) as a
Pfaff system:
\begin{align}
dV  &  =pdk,\label{e.paff1}\\
p(f(k)-\varphi(p,k))+u(\varphi(p,k),k)  &  =\delta V \label{e.paff2}%
\end{align}

Differentiating (\ref{e.paff2}) leads to:
\begin{equation}
\delta dV=(f(k)-\varphi(p,k))dp+pf^{\prime}(k)dk+u_{2}^{\prime}(\varphi
(p,k),k)dk,\nonumber
\end{equation}
where we used (\ref{74}). Together with (\ref{e.paff1}), this yields
\begin{equation}
(f(k)-\varphi(p,k))dp=[\delta{p}-pf^{\prime}(k)-u_{2}^{\prime}(\varphi
(p,k),k)]dk. \label{e.system.tmp}%
\end{equation}

We have to investigate this system near the point $k=k_{\infty}$ and
$V=V_{\infty}$. Writing (\ref{e.paff2}) at this point, we get:%
\begin{equation}
p(f(k_{\infty})-\varphi(p,k_{\infty}))+u(\varphi(p,k_{\infty}),k_{\infty
})=\delta V_{\infty} \label{e1}%
\end{equation}
which has to be solved for $p$. Note that, because of (\ref{74}), we have:%
\begin{equation}
\min_{p}\{(f(k_{\infty})-\varphi(p,k_{\infty}))p+u(\varphi(p,k_{\infty
}),k_{\infty})\}=u(f(k_{\infty}),k_{\infty}) \label{e2}%
\end{equation}

In the case at hand, we have $\delta V_{\infty}=u\left(  f\left(  k_{\infty
}\right)  ,k_{\infty}\right)  $, but for the sake of completeness, and to have
a full description of the phase space in the $\left(  k,V\right)  $ plane, we
will first investigate the cases $\delta V_{\infty}<u(f(k_{\infty}),k_{\infty
})$ and $\delta V_{\infty}>u(f(k_{\infty}),k_{\infty})$.

\textbf{Case 0}: $\delta V_{\infty}<u(f(k_{\infty}),k_{\infty})$

Because of (\ref{e2}), equation (\ref{e1}) has no solution. So there are no
solutions going through $\left(  k_{\infty},V_{\infty}\right)  .$

\textbf{Case 1.} $\delta V_{\infty}>u(f(k_{\infty}),k_{\infty})$.

Equation (\ref{e1}) has two distinct solutions $p_{1}\neq p_{2}$. Note that
neither $p_{1}$ nor $p_{2}$ minimize the left hand side so $f(k_{\infty
})-\varphi(p_{i},k_{\infty})\neq0$ for $i=1,2$. We may therefore consider the
initial value problem:
\[
\frac{dp}{dk}=\frac{\delta{p}-pf^{\prime}(k)-u_{2}^{\prime}(\varphi
(p,k),k)}{f(k)-\varphi(p,k)},\quad p(k_{\infty})=p_{i}.
\]

It has a well-defined smooth solution $p_{i}(k)$, defined in a neighborhood of
$k_{\infty}$. We then define a function $V_{i}$ by:
\[
V_{i}(k):=\frac{1}{\delta}[(f(k)-\varphi(p,k))p+u(\varphi(p,k),k)].
\]

Notice that $V_{i}(k_{\infty})=V_{\infty}$, so $V_{i}(k)$ solves the initial
value problem (\ref{81}), (\ref{82}), with $V_{i}^{\prime}(k_{\infty})=p_{i}$.
Taking $i=1,2$, we find two solutions $V_{1}(k)$ and $V_{2}(k)$ of the same
initial value problem, with $V_{1}^{\prime}(k_{\infty})\neq V_{2}^{\prime
}\left(  k_{\infty}\right)  $.

\textbf{Case 2.} $\delta V_{\infty}=u(f(k_{\infty}),k_{\infty})$ and
$f^{\prime}(k_{\infty})\neq\delta-u_{2}^{\prime}/u_{1}^{\prime}$.

Equation (\ref{e1}) then has a single solution $p_{0}$, and we have
$f(k_{\infty})=\varphi(p_{0},k_{\infty})$, so that:
\[
p_{0}=u_{1}^{\prime}(\varphi(p_{0},k_{\infty}),k_{\infty})=u_{1}^{\prime
}(f(k_{\infty}),k_{\infty})>0.
\]

In this case, we shall use the same system (\ref{e.paff1})-(\ref{e.paff2}),
but we will take $p$ instead of $k$ as the independent variable. We consider
the initial value problem
\[
\frac{dk}{dp}=\frac{f(k)-\varphi(p,k)}{p(\delta-f^{\prime}(k)-\frac
{u_{2}^{\prime}(\varphi(p,k),k)}{p})},\quad k(p_{0})=k_{\infty}.
\]

It has a $C^{2}$ solution $k(p)$, defined in a neighborhood of $p=p_{0}$. We
associate with it a curve in the phase space $(k,V)$, defined in parametric
form by the equations
\begin{align*}
k  &  =k(p),\\
V  &  =\frac{1}{\delta}[(f(k)-\varphi(p,k))p+u(\varphi(p,k),k)].
\end{align*}

Then $V$ is also $C^{2}$ with respect to $p$ near $p=p_{0}$. Moreover, we
have
\begin{align*}
\frac{dV}{dp}  &  =\frac{1}{\delta}[(f(k)-\varphi(p,k)-\varphi_{1}^{\prime
}(p,k)p+u_{1}^{\prime}(\varphi(p,k),k)\varphi_{1}^{\prime}(p,k)]\\
&  +\frac{1}{\delta}[(f^{\prime}(k)-\varphi_{2}^{\prime}(p,k))p+u_{1}^{\prime
}(\varphi(p,k),k)\varphi_{2}^{\prime}(p,k)+u_{2}^{\prime}(\varphi
(p,k),k)]\frac{dk}{dp}\\
&  =\frac{f(k)-\varphi(p,k)}{\delta-f^{\prime}(k)-\frac{u_{2}^{\prime}%
(\varphi(p,k),k)}{p}}%
\end{align*}

Since $f(k_{\infty})=\varphi(p_{0},k_{\infty})$, we obtain
\begin{align*}
\frac{dk}{dp}(p_{0})  &  =0,\\
\frac{dV}{dp}(p_{0})  &  =0.
\end{align*}

This shows that the parametric curve $p\rightarrow\left(  k\left(  p\right)
,V\left(  p\right)  \right)  $ in the $\left(  k,V\right)  $-plane has a cusp
at $\left(  k_{\infty},V_{\infty}\right)  $. To find the type of the cusp, we
compute the second order derivatives with respect to $p$. We find:
\begin{align}
\frac{d^{2}k}{dp^{2}}(p_{0})  &  =\frac{-\varphi_{1}^{\prime}(p_{0},k_{\infty
})}{p_{0}[\delta-f^{\prime}(k_{\infty})-\frac{u_{2}^{\prime}(f(k_{\infty
}),k_{\infty})}{u_{1}^{\prime}(f(k_{\infty}),k_{\infty})}]}=\frac
{-u_{11}^{\prime\prime}(f(k_{\infty}),k_{\infty})}{u_{1}^{\prime}(f(k_{\infty
}),k_{\infty})[\delta-f^{\prime}(k_{\infty})-\frac{u_{2}^{\prime}(f(k_{\infty
}),k_{\infty})}{u_{1}^{\prime}(f(k_{\infty}),k_{\infty})}]}\neq0,\label{e3}\\
\frac{d^{2}V}{dp^{2}}(p_{0})  &  =\frac{-\varphi_{1}^{\prime}(p_{0},k_{\infty
})}{\delta-f^{\prime}(k_{\infty})-\frac{u_{2}^{\prime}(f(k_{\infty}%
),k_{\infty})}{u_{1}^{\prime}(f(k_{\infty}),k_{\infty})}}=\frac{-u_{11}%
^{\prime\prime}(f(k_{\infty}),k_{\infty})}{\delta-f^{\prime}(k_{\infty}%
)-\frac{u_{2}^{\prime}(f(k_{\infty}),k_{\infty})}{u_{1}^{\prime}(f(k_{\infty
}),k_{\infty})}}\neq0, \label{e4}%
\end{align}

This shows that the cusp has two branches, with a common tangent between them.
The branches extend on the left of $k_{\infty}$ if the left-hand side of
(\ref{e3}) is negative, that is, if:%
\begin{equation}
\delta-f^{\prime}(k_{\infty})-\frac{u_{2}^{\prime}(f(k_{\infty}),k_{\infty}%
)}{u_{1}^{\prime}(f(k_{\infty}),k_{\infty})}<0 \label{g}%
\end{equation}
and on the right if it is positive.The slope $m$ of the common tangent is
given by
\[
m=\frac{d^{2}V}{dp^{2}}(k_{0})/\frac{d^{2}k}{dp^{2}}(p_{0})=p_{0},
\]
and it is also the one-sided derivative of $V$ at $k_{\infty}$. Note for
future use that because $p_{0}=u_{1}^{\prime}(f(k_{\infty}),k_{\infty})$ we
have:
\[
\frac{d}{dk}\frac{1}{\delta}u(f(k),k)|_{k=k_{\infty}}=\frac{1}{\delta}\left(
u_{1}^{\prime}(f(k_{\infty}),k_{\infty})f^{\prime}(k_{\infty})+u_{2}^{\prime
}(f(k_{\infty}),k_{\infty}\right)  ,
\]
so that $m=p_{0}$ does not coincide with the tangent of the curve $\delta
V=u\left(  f\left(  k\right)  \right)  $ at $k=k_{\infty}$.

\textbf{Case 3}:\ $\delta V_{\infty}=u(f(k_{\infty}),k_{\infty})$ and
$f^{\prime}(k_{\infty})=\delta-u_{2}^{\prime}/u_{1}^{\prime}$

This is the case which was investigated in Theorem \ref{thm.1}. We have shown
that there exists a $C^{2}$ solution $V\left(  k\right)  $ on a neighbourhood
of $k_{\infty}\,.$

This concludes the proof of Proposition \ref{Pr1} (in fact, we only need Case 2).
\end{proof}

Let us summarize these results. The curve $\Gamma=\left\{  \left(  k,V\right)
\ |\ \delta V=u\left(  f\left(  k\right)  ,k\right)  \right\}  \,$\ separates
the plane in two regions.

\begin{itemize}
\item The region below the curve corresponds to Case 0:\ there are no solution there.

\item The region above the curve corresponds to Case 1: through each point
$\left(  k_{\infty},V_{\infty}\right)  $ there are two smooth solutions
intersecting transversally.

\item If $\left(  k_{\infty},V_{\infty}\right)  $ is on the curve, but
$f^{\prime}(k_{\infty})+\frac{u_{2}^{\prime}(f(k_{\infty}),k_{\infty})}%
{u_{1}^{\prime}(f(k_{\infty}),k_{\infty})}\neq\delta$, we are in Case 2. There
are two $C^{1}$ solutions defined only on one side of $k_{\infty}$. They are
tangent at $\left(  k_{\infty},V_{\infty}\right)  $, and transversal to
$\Gamma$.

\item If $\left(  k_{\infty},V_{\infty}\right)  $ is on the curve, and
$f^{\prime}(k_{\infty})+\frac{u_{2}^{\prime}(f(k_{\infty}),k_{\infty})}%
{u_{1}^{\prime}(f(k_{\infty}),k_{\infty})}=\delta$, we are in Case 3: there is
a $C^{2}$ solution defined on a neighbourhood of $k_{\infty}$.the
\end{itemize}

Figure 1 gives the phase diagram in the $\left(  k,V\right)  $ plane:

\begin{figure}[th]
\centering
\includegraphics[height=7.5cm]{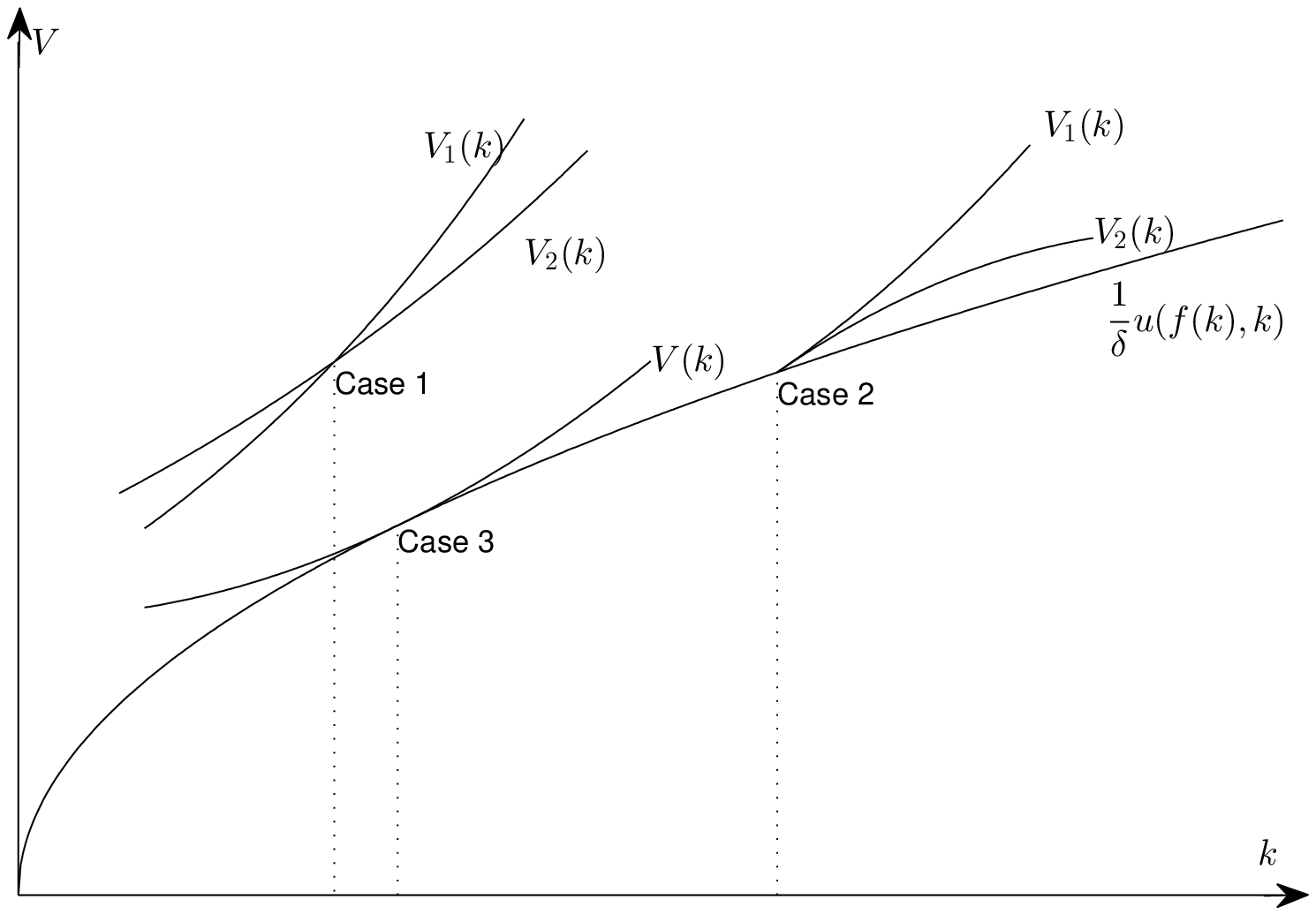} \center{Figure 1. The illustration of the solutions.}\end{figure}

Proposition \ref{Pr1} gives us two solutions, $V_{i},i=1,2$. Each of them
gives rise to an strategy $\sigma_{i}$ through the formula $\sigma_{i}\left(
k\right)  =\varphi_{0}\left(  V_{i}^{\prime}\left(  k\right)  ,k\right)  $
with $\varphi_{0}$ defined by (\ref{74}). The strategy $\sigma_{i}$ is $C^{0}$
on the half-interval, $C^{1}$ on its interior, with $\sigma_{i}\left(
k_{\infty}\right)  =f\left(  k_{\infty}\right)  $.

\begin{proposition}
One, and only one, of the strategies $\sigma_{1}$ and $\sigma_{2}$, converges
to $k_{\infty}$.
\end{proposition}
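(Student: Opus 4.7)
The plan is to analyze the one-dimensional dynamics $dk/dt = f(k) - \sigma_i(k)$ for $i = 1,2$ near the fixed point $k_\infty$ using the parametric (cusp) description of the solutions $V_1, V_2$ established in Case 2 of the proof of Proposition \ref{Pr1}. Since both strategies satisfy $\sigma_i(k_\infty) = f(k_\infty)$ but fail to be differentiable at $k_\infty$ (the derivatives $V_i''$ blow up at the cusp), the usual spectral linearization is unavailable, and I will instead compute directly the leading-order asymptotics of $f - \sigma_i$.

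First I translate the parametric data $\frac{dk}{dp}(p_0) = 0$ and $\frac{d^2k}{dp^2}(p_0) = K \neq 0$ from (\ref{e3}) into a one-sided scalar expansion. Locally, $k - k_\infty = \tfrac{1}{2}K(p - p_0)^2 + O(|p - p_0|^3)$, which inverts to
\[
V_i'(k) - p_0 \;=\; \epsilon_i\sqrt{2(k - k_\infty)/K} + O(|k - k_\infty|), \qquad \epsilon_1 = +1,\ \epsilon_2 = -1,
\]
valid only on the half-interval where $(k - k_\infty)/K \geq 0$. The sign of $K$ coincides with that of $\overline{g}_0(k_\infty) - f'(k_\infty)$ by (\ref{e3}) and (\ref{55}), and therefore selects on which side of $k_\infty$ the two solutions live, recovering the one-sided support phenomenon announced in the introduction.

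Next, using $\sigma_i(k) = \varphi_0(V_i'(k), k)$, $\varphi_0(p_0, k_\infty) = f(k_\infty)$, and $\varphi_{0,1}'(p_0, k_\infty) = 1/u_{11}''(f(k_\infty), k_\infty) < 0$, a first-order Taylor expansion yields
\[
f(k) - \sigma_i(k) \;=\; -\epsilon_i\,\varphi_{0,1}'(p_0, k_\infty)\sqrt{2(k - k_\infty)/K} + O(|k - k_\infty|).
\]
The square-root term dominates the $O(|k - k_\infty|)$ remainder, and because $-\varphi_{0,1}' > 0$ its coefficient has the sign of $\epsilon_i$. Hence $f - \sigma_1$ and $f - \sigma_2$ carry opposite signs throughout the half-interval near $k_\infty$. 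If $K > 0$ (the half-interval lies to the right of $k_\infty$), the branch $\epsilon_i = -1$ produces $dk/dt < 0$, driving $k(t)$ monotonically back to $k_\infty$, while the branch $\epsilon_i = +1$ produces $dk/dt > 0$, pushing $k(t)$ away; the case $K < 0$ is symmetric with the roles of the branches interchanged.

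The delicate point is that the square-root singularity of $\sigma_i$ at $k_\infty$ makes the vector field non-Lipschitz there, so I cannot invoke a standard hyperbolic stability theorem. Instead I argue directly from the expansion: on the convergent branch $dk/dt = -c\sqrt{|k - k_\infty|}\,(1 + o(1))$ with $c > 0$, so separation of variables shows that $|k(t) - k_\infty|$ decreases monotonically and $k(t) \to k_\infty$; on the other branch the same scaling shows $k(t)$ leaves every neighbourhood of $k_\infty$. Therefore exactly one of $\sigma_1$ and $\sigma_2$ is convergent, which is the desired conclusion.
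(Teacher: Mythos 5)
Your argument is correct, and it takes a genuinely different route from the paper's. The paper proves this proposition qualitatively: it passes to the phase portrait of the Euler--Lagrange system (\ref{21}) for the Ramsey problem in the $(k,c)$-plane (Figure 2), observes that when $k_{\infty}\neq\underline{k}$ there is a single trajectory through $\left(k_{\infty},f\left(k_{\infty}\right)\right)$ which that point splits into an incoming and an outgoing branch, and identifies these two branches with the trajectories generated by $\sigma_{1}$ and $\sigma_{2}$. You instead work directly from the cusp data in Case 2 of the proof of Proposition \ref{Pr1}: inverting $k-k_{\infty}=\tfrac{1}{2}K(p-p_{0})^{2}+O(|p-p_{0}|^{3})$ gives the square-root expansion of $V_{i}^{\prime}$, and the chain of signs $\varphi_{0,1}^{\prime}=1/u_{11}^{\prime\prime}<0$ (from (\ref{74})) together with the sign of $K$ read off from (\ref{e3}) and (\ref{55}) shows that $f-\sigma_{1}$ and $f-\sigma_{2}$ have opposite signs on the common half-interval, so exactly one branch points toward $k_{\infty}$. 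Your computation checks out, including the identification of the side of $k_{\infty}$ on which the branches live with the sign of $\overline{g}_{0}(k_{\infty})-f^{\prime}(k_{\infty})$. What your approach buys is (i) self-containedness --- it uses only the quantities already computed in Proposition \ref{Pr1} and no appeal to the Ramsey phase diagram; (ii) an explicit treatment of the non-Lipschitz point, which the paper glosses over: since $\sigma_{i}$ is only $C^{0}$ at $k_{\infty}$, hyperbolic linearization is indeed unavailable, and your separation-of-variables argument on $\dot{k}=-c\sqrt{|k-k_{\infty}|}\,(1+o(1))$ is the right substitute; and (iii) as a by-product, the finite-time convergence to $k_{\infty}$, which the paper asserts only later and without proof. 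The paper's argument is shorter and places the two solutions within the classical Ramsey phase portrait, but yours is the more rigorous local verification.
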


\begin{proof}
Consider the Euler-Lagrange equation (\ref{21}) for the Ramsey problem.\ The
phase diagram is given in Figure 2, where $\underline{k}$ is defined by
$f^{\prime}(\underline{k})=\delta-u_{2}^{\prime}/u_{1}^{\prime}$.
\begin{figure}[th]
\centering
\includegraphics[height=7.5cm]{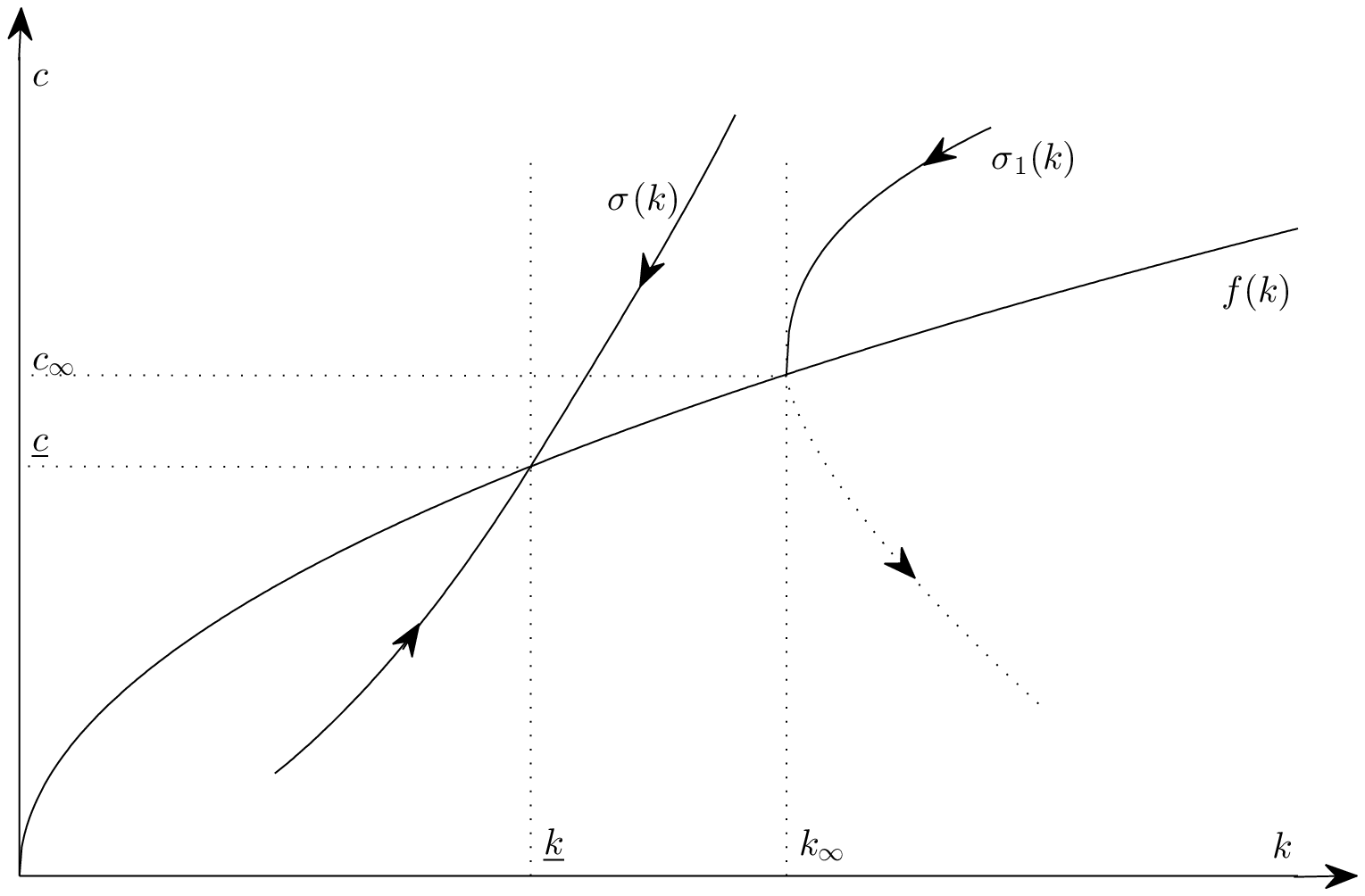} \center{Figure 2. The phase diagram for the Euler equation.}\end{figure}

If $k_{\infty}\neq\underline{k}$, there is one trajectory $\mathcal{T}$ going
through $\left(  k_{\infty},u\left(  f\left(  k_{\infty}\right)  ,k_{\infty
}\right)  \right)  $. The point $\left(  k_{\infty},u\left(  f\left(
k_{\infty}\right)  ,k_{\infty}\right)  \right)  $ separates it into two
branches, the upper one and the lower one.\ One of them goes to $\left(
k_{\infty},u\left(  f\left(  k_{\infty}\right)  ,k_{\infty}\right)  \right)
$, and the other one leaves it. These two branches are also the trajectories
associated with the two strategies $\sigma_{1}$ and $\sigma_{2}$, so one of
them converges and the other diverges.
\end{proof}

Since we are looking for a strategy which converges to $k_{\infty}$, we pick
the strategy $\sigma_{i}$, and the solution $V_{i}$, associated with the
branch oriented towards $k_{\infty}$. We will denote them by $\sigma$ and $V$.
We have proved the following:

\begin{theorem}\label{thm2}
Suppose $f^{\prime}\left(  k_{\infty}\right)  $ lies between $\overline{g}%
_{0}(k_{\infty})$ and $\underline{g}_{0}(k_{\infty})$. If $\underline{g}%
_{0}(k_{\infty})>\overline{g}_{0}(k_{\infty})$, then there exists an
equilibrium strategy $\sigma\left(  k\right)  $, defined on some interval
$]k_{\infty}-\kappa,\ k_{\infty}]\,$, which converges to $k_{\infty}$. It is
continuous on the interval, $C^{1}$ on its interior, with $\sigma\left(
k_{\infty}\right)  =f\left(  k_{\infty}\right)  $. If If $\underline{g}%
_{0}(k_{\infty})<\overline{g}_{0}(k_{\infty})$, there exists an equilibrium
strategy with the same properties, defined on some interval $[k_{\infty
},\ k_{\infty}+\kappa\lbrack$.
\end{theorem}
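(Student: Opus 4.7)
The plan is to assemble results already in hand: Proposition \ref{Pr1} supplies the candidate solutions of the reduced HJB-type equation, and the proposition immediately preceding Theorem \ref{thm2} selects the one producing a convergent trajectory. First, since our hypothesis places $f'(k_\infty)$ strictly between $\bar{g}_0(k_\infty)$ and $\underline{g}_0(k_\infty)$, it is in particular distinct from $\bar{g}_0(k_\infty) = \delta - u_2'/u_1'$ (formula (\ref{55})), which is exactly the non-degeneracy condition (\ref{inequality.1}). Proposition \ref{Pr1} therefore yields two $C^1$ solutions $V_1,V_2$ of the boundary-value problem (\ref{81})-(\ref{82}) on a one-sided neighbourhood of $k_\infty$, each $C^2$ on the interior, with $V_i'(k_\infty) = u_1'(f(k_\infty),k_\infty)$.

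To identify the side, I would read off equations (\ref{e3})-(\ref{e4}) in the proof of Proposition \ref{Pr1}: the cusp extends to the left of $k_\infty$ precisely when $\delta - f'(k_\infty) - u_2'/u_1' < 0$, i.e.\ when $\bar{g}_0(k_\infty) < f'(k_\infty)$, and to the right in the opposite case. When $\underline{g}_0(k_\infty) > \bar{g}_0(k_\infty)$, the interposition forces $\bar{g}_0(k_\infty) < f'(k_\infty)$, placing the half-interval on $]k_\infty - \kappa,\ k_\infty]$; when $\underline{g}_0(k_\infty) < \bar{g}_0(k_\infty)$ one lands on $[k_\infty,\ k_\infty + \kappa[$, matching the two alternatives of the statement. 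Setting $\sigma_i(k) := \varphi_0(V_i'(k),k)$ for $i=1,2$, formula (\ref{74}) together with $V_i'(k_\infty) = u_1'(f(k_\infty),k_\infty)$ gives $\sigma_i(k_\infty) = f(k_\infty)$, and each $\sigma_i$ is continuous on the closed half-interval and $C^1$ on its interior.

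Next I would invoke the proposition that directly precedes Theorem \ref{thm2}: the two trajectories of $\dot k = f(k) - \sigma_i(k)$ correspond to the two branches of the Euler phase portrait meeting at the stationary point $(c_\infty,k_\infty)$, and exactly one of them is incoming. I retain the convergent strategy, call the resulting pair $(V,\sigma)$, and reconstruct the full pair $(V_0,W_0)$ by setting $V_0 := V + \alpha[(1-\alpha)\delta]^{-1} U(f(k_\infty),k_\infty)$ and $W_0 := V_0 - 2\alpha[(1-\alpha)\delta]^{-1} U(f(k_\infty),k_\infty)$. A direct substitution shows that $(V_0,W_0)$ solves the boundary-value problem (\ref{70})-(\ref{73}), which is the $r = 0$ limit of the biexponential HJB system (\ref{f1})-(\ref{f4}), while $\sigma = \varphi_0(V_0',\cdot)$ drives $k(t)$ to $k_\infty$ on the half-interval in question.

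The main obstacle is the last conceptual step: upgrading ``solves (\ref{70})-(\ref{73}) and is convergent'' to ``is an equilibrium strategy for the Chichilnisky criterion (\ref{C-criterion})''. The point is that the ``discount function'' $H \equiv 1$ attached to the long-run term $\alpha\lim_{t\to\infty} U(c,k)$ does not satisfy the exponential-decay hypothesis of Section \ref{time-inconsistency}, so the verification Theorems \ref{theorem.ie}-\ref{theorem.de} cannot be quoted directly. I would resolve this in one of two ways: either redo the perturbation computation of Section 3.1 for the functional (\ref{C-criterion}), noting that the $\varepsilon$-perturbation of $c$ affects the limit term $\alpha\lim_{t\to\infty}U(c(t),k(t)) = \alpha U(f(k_\infty),k_\infty)$ only through the asymptotic value of $k$ — which, for a convergent strategy, is unchanged — so the first-order optimality condition collapses to (\ref{70})-(\ref{73}); or pass to the limit $r \to 0$ in the family of biexponential equilibria produced by Theorem \ref{Th1}, using continuity of the centre-manifold construction of Theorem \ref{find.the.strategies} in the parameter $r$ to transfer the equilibrium property through the limit.
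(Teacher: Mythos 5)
Your proposal is correct and follows the same route as the paper: invoke Proposition \ref{Pr1} (whose hypothesis (\ref{inequality.1}) is indeed implied by the strict interposition of $f'(k_\infty)$, since $\overline{g}_0(k_\infty)=\delta-u_2'/u_1'$), read the side of the half-interval off the sign of $\delta-f'(k_\infty)-u_2'/u_1'=\overline{g}_0(k_\infty)-f'(k_\infty)$ in (\ref{e3})--(\ref{e4}), and then use the unnamed proposition preceding the theorem to keep the branch of the cusp whose trajectory is incoming; your sign bookkeeping matches the two alternatives of the statement, and the reconstruction of $(V_0,W_0)$ from $V$ is exactly the paper's change of unknown run backwards. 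The one place where you genuinely add to (rather than reproduce) the paper is the last step: the paper simply writes ``we have proved the following'' after the two propositions, tacitly treating ``solves the $r=0$ system (\ref{70})--(\ref{73}) and converges'' as the definition of equilibrium for (\ref{C-criterion}), whereas you correctly observe that $H\equiv 1$ violates the decay hypothesis under which Theorems \ref{theorem.ie} and \ref{theorem.de} were proved, so the verification step needs an independent argument. Of your two fixes, the first is the right one and is essentially complete: for a convergent strategy the $\varepsilon$-perturbation leaves $\lim_{t\to\infty}U(c(t),k(t))=U(f(k_\infty),k_\infty)$ unchanged, the limit term drops out of $P(k_0,\sigma,c)$, and the remaining first-order condition is precisely $u_1'(\sigma(k),k)=V'(k)$ with $V$ the value of the exponential part along $\sigma$, which combined with convergence is equivalent to (\ref{81})--(\ref{82}). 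Two caveats you should make explicit if you write this up: at the endpoint $k_0=k_\infty$ of the one-sided domain, only perturbations $c$ keeping $k(\varepsilon)$ inside the half-interval are admissible, so the equilibrium condition there must be understood with that restriction; and your second fix (passing to the limit $r\to 0$ in Theorem \ref{Th1}) is weaker than it looks, since the centre-manifold neighbourhood and the basin of convergence are not obviously uniform in $r$, and the limit of equilibria need not a priori be an equilibrium of the limit criterion --- so the direct perturbation argument should be the one you rely on.
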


Note that one of the boundary values for $f^{\prime}\left(  k_{\infty}\right)
$, namely $\overline{g}_{0}(k_{\infty})$, corresponds to the solution of the
Ramsey problem ($\alpha=0$). Indeed, the equation $f^{\prime}\left(
k_{\infty}\right)  =\overline{g}_{0}(k_{\infty})$ coincides with equation
(\ref{51}).

We have thus identified a class of equilibrium strategies for the Chichilnisky
problem. They are one-sided, except when $f^{\prime}\left(  k_{\infty}\right)
=\overline{g}_{0}(k_{\infty})$, where we can apply Theorem \ref{thm2} to get a
strategy defined on a neighbourhood of $k_{\infty}$. For every other value of
$k_{\infty}$ satisfying (\ref{55}) and (\ref{55a}), the function $V(k)$ and
the strategy $\sigma\left(  k\right)  $ defined by $u_{1}^{\prime}\left(
\sigma\left(  k\right)  ,k\right)  =V^{\prime}\left(  k\right)  $ are defined
only on one of the two half-intervals limited by $k_{\infty}$. Suppose for
instance it is the right one, $[k_{\infty},\ k_{\infty}+\kappa\lbrack$. Then,
if $k_{\infty}\leq k_{0}<k_{\infty}+\kappa$, the equilibrium strategy will
bring $k_{0}$ to $k_{\infty}$ in finite time and stay there.

To our knowledge, this is the first time equilibrium strategies have been
found for the Chichilinisky criterion. Their economic interpretation, and
their detailed study, will be the subject of forthcoming work.

Let us give some examples.

\textbf{Example 1}: $u\left(  c\right)  =U\left(  c\right)  $

Neither depends on $k$, and we have:%
\begin{align*}
\overline{g}_{0}(k)  &  =\delta,\\
\underline{g}_{0}(k)  &  =\left(  1-\alpha\right)  \delta
\end{align*}

We have $\underline{g}_{0}(k)<\overline{g}_{0}(k)$, so the equilibrium
strategy exists only on the right hand side of $\underline{g}_{0}(k)$. The
existence condition is:%
\[
\left(  1-\alpha\right)  \delta<f^{\prime}(k_{\infty})<\delta
\]
and the equilibrium strategy $\sigma$ is defined on $[k_{\infty}%
,\ \infty\lbrack$. We denote $f^{\prime-1}(\delta)$ and $f^{\prime
-1}((1-\alpha)\delta)$ by $\underline{k}$ and $\overline{k}$ respectively.
There are three cases, depending on the position of the initial point $k_{0}$:

\begin{itemize}
\item If $k_{0}>\overline{k}$, then, for any $k_{\infty}\in]\underline{k}%
,\ \overline{k}[$, there exists an equilibrium strategy starting from $k_{0}$
and converging to $k_{\infty}$.

\item if $\underline{k}<k_{0}<\overline{k}$, then, for any $k_{\infty}%
\in]\underline{k},\ k_{0}[$, there exists an equilibrium strategy starting
from $k_{0}$ and converging to $k_{\infty}$.

\item if $k_{0}<\underline{k}$, the only equilibrium strategy starting from
$k_{0}$ is the optimal strategy for the Ramsey problem (that is, for the case
$\alpha=0$ ) which converges to the level $\underline{k}$ where $f^{\prime
}(\underline{k})=\delta$.
\end{itemize}

\textbf{Example 2}:\ $u\left(  c,k\right)  =U\left(  c,k\right)  $

We find:%
\begin{align*}
\overline{g}_{0}(k)  &  =\delta-\frac{u_{2}^{\prime}(f(k),k)}{u_{1}^{\prime
}\left(  f\left(  k\right)  ,k\right)  }\\
\underline{g}_{0}(k)  &  =\left(  1-\alpha\right)  \delta-\frac{u_{2}^{\prime
}(f(k),k)}{u_{1}^{\prime}\left(  f\left(  k\right)  ,k\right)  }%
\end{align*}

The existence condition is:%
\[
\left(  1-\alpha\right)  \delta-\frac{u_{2}^{\prime}(f(k),k)}{u_{1}^{\prime
}\left(  f\left(  k\right)  ,k\right)  }<f^{\prime}(k_{\infty})<\delta
-\frac{u_{2}^{\prime}(f(k),k)}{u_{1}^{\prime}\left(  f\left(  k\right)
,k\right)  }%
\]
and the equilibrium strategy $\sigma$ is defined on $[k_{\infty},\ k_{\infty
}+\kappa\lbrack$ for some $\kappa>0$. The situation is similar to the
preceding one, bearing in mind that now the strategy $\sigma$ may be defined
locally only

\textbf{Example 3:\ }$u=u\left(  c\right)  \,$\ and $U=U\left(  k\right)  $

In that case, we find:%

\begin{align*}
\overline{g}_{0}(k)  &  =\delta,\\
\underline{g}_{0}(k)  &  =\delta-\frac{\alpha}{\left(  1-\alpha\right)  }%
\frac{U_{2}^{\prime}\left(  f\left(  k\right)  ,k\right)  }{u_{1}^{\prime
}\left(  f\left(  k\right)  ,k\right)  }%
\end{align*}

There are two subcases:

\begin{itemize}
\item if $\frac{U_{2}^{\prime}\left(  f\left(  k\right)  ,k\right)  }
{u_{1}^{\prime}\left(  f\left(  k\right)  ,k\right)  }>0 $, then
$\underline{g}_{0}(k)<\overline{g}_{0}(k)$. The equilibrium strategy then is
defined on the right hand side of $k_{\infty}$, as in the preceding cases.

\item if $\frac{U_{2}^{\prime}\left(  f\left(  k\right)  ,k\right)  }
{u_{1}^{\prime}\left(  f\left(  k\right)  ,k\right)  }<0 $. The equilibrium
strategy then is defined on the left hand side of $k_{\infty}$.
\end{itemize}

Remember that $u_{1}^{\prime}>0$. If $U_{2}^{\prime}>0$, that is, if capital
is beneficial, we are in the first case. The second case corresponds to the
case when capital is detrimental, for instance if accumulating capital
accumulates pollution. In that case, the situation is inverted with respect to
the preceding examples:\ if the initial stock of capital is small enough, one
may stop capital accumulation at a level $k_{\infty}$ smaller than the level
$\underline{k}\ $where $f^{\prime}(\underline{k})=
\delta-\frac{\alpha}{\left(  1-\alpha\right)  }\frac{U_{2}^{\prime}\left(
f\left(  k\right)  ,k\right)  }{u_{1}^{\prime}\left(  f\left(  k\right)
,k\right)  }$. But there is no turning back: one can never achieve a
stationary level $k_{\infty}<k_{0}$. The situation is described in Figure 3

\begin{figure}[th]
\centering
\includegraphics[height=7.5cm]{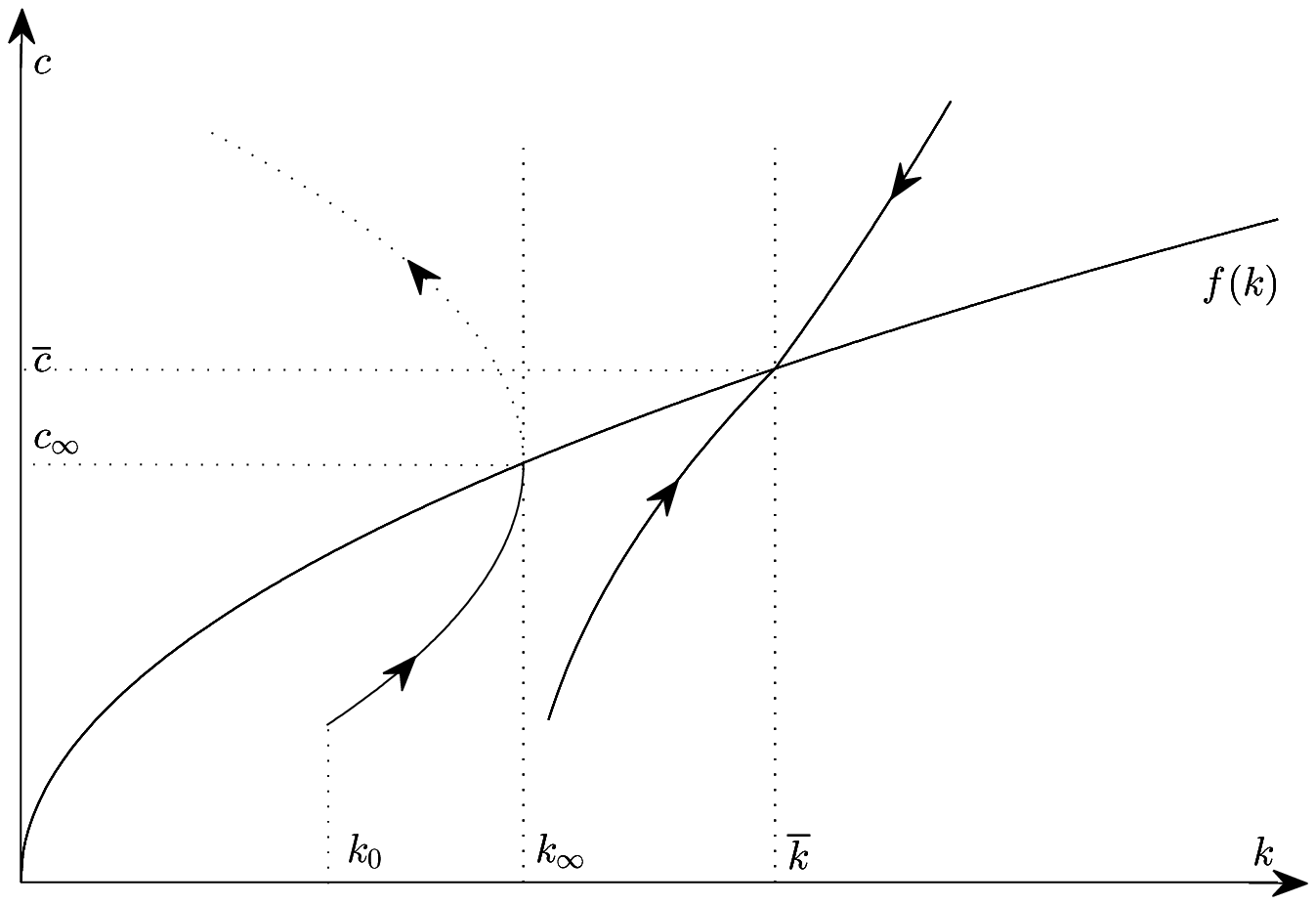} \center{Figure 3. The case when capital pollutes.}\end{figure}

\bigskip

\textbf{Acknowledgements.} The authors thank sincerely Professors Larry Karp
and Gerd Asheim for generously sharing their knowledge and ideas with them,
and pointing out several useful references. Qinglong Zhou and Yiming Long
would like to express their thanks to Professors Jean-Pierre Bourguignon, Jean
Dolbeault, Olivier Glass and Ivar Ekeland as well as the IHES and the CEREMADE
(Universit\'{e} Paris Dauphine) for supporting Qinglong's visits to IHES and
CEREMADE from February to April of 2012, and their warm hospitalities during
this time.

\bigskip

\end{document}